\documentclass[journal,12pt,onecolumn]{IEEEtran}

\usepackage{amsmath,amsfonts}

\usepackage{amssymb,verbatim,amsfonts,amsmath,amsthm}
\usepackage{geometry}
\usepackage{cite}
\usepackage[linktocpage = true]{hyperref}
\usepackage{bm}

\usepackage{hyperref}
\usepackage{seqsplit}
\usepackage{longtable}
\usepackage{booktabs}
\usepackage[justification=centering]{caption}
\usepackage{graphicx}
\usepackage[figuresright]{rotating}
\usepackage{todonotes}

\newtheorem{theorem}{Theorem}
\newtheorem{lemma}[theorem]{Lemma}
\newtheorem{corollary}[theorem]{Corollary}

\newtheorem{example}[theorem]{Example}
\newtheorem{remark}[theorem]{Remark}
\newtheorem{definition}[theorem]{Definition}

\newcommand{\Point}{\operatorname{sum}}
\newcommand{\sumset}{\Sigma}

\renewcommand{\nmid}{\nmid}   % 不整除符号
\begin{document}

\author{Haojie Chen\thanks{Haojie Chen is with the School of Mathematics, Sun Yat-sen University, Guangzhou 510275, China (e-mail: chenhj69@mail2.sysu.edu.cn).}, Chuangqiang Hu\(\dagger\) \thanks{Chuangqiang Hu is with the School of Mathematics, Sun Yat-sen University, Guangzhou 510275, China (e-mail: huchg5@mail.sysu.edu.cn).}, Junjie Huang\thanks{Junjie Huang is with the School of Mathematics, Sun Yat-sen University, Guangzhou 510275, China (e-mail: huangjj76@mail2.sysu.edu.cn).}, 
Chang-An Zhao\thanks{Chang-An Zhao is with the School of Mathematics, Sun Yat-sen University, Guangzhou 510275, China, and also with the Guangdong Key Laboratory of Information Security Technology, Guangzhou 510006, China (e-mail: zhaochan3@mail.sysu.edu.cn).}
\thanks{\(\dagger\) Corresponding author. } }

\title{On the Maximal Length of MDS Elliptic Codes}

\maketitle

\begin{abstract}
The determination of the maximal length of maximum distance separable (MDS) codes arising from elliptic curves is a central problem in coding theory. For an elliptic curve $E$ over $\mathbb{F}_q$, let $\operatorname{MEC}(k,q)$ denote the maximal length of a $q$-ary MDS elliptic code of dimension $k$. 
It was recently shown that $\operatorname{MEC}(k,q)\le\frac{q+1}{2}+\sqrt{q}$ for $q\ge289$ and $3\le k\le(q+1-2\sqrt{q})/10$, with equality for odd $k$ when $q$ is an odd square. This paper investigates the remaining open cases, namely even dimension $k$, non-square $q$ and fields of characteristic $2$, and provides a complete resolution of the tightness question for the two natural parity regimes of $q+1+\lfloor 2\sqrt{q}\rfloor$. We prove that if the support of $G$ (used to define the code) consists of $\mathbb{F}_q$-rational points, the bound decreases to $\frac{q+1}{2}+\sqrt{q}-1$ for even $k$. Without this restriction, we construct MDS codes attaining $\frac{q+1}{2}+\sqrt{q}$ for even $k$. More generally, we establish $\operatorname{MEC}(k,q)=\frac{q+1+\lfloor2\sqrt{q}\rfloor}{2}$ when $q+1+\lfloor2\sqrt{q}\rfloor$ is even, and $\operatorname{MEC}(k,q)=\frac{q+\lfloor2\sqrt{q}\rfloor}{2}$ when it is odd.

{\bf Index Terms:} Elliptic curves,  Algebraic geometry codes, MDS codes, Function fields.

\end{abstract}

\section{Introduction}
Maximum distance separable (MDS) codes form an optimal class of error-correcting codes for given parameters and occupy a central place in coding theory. Throughout, a linear code is denoted by $[n,k,d]$, where $n$ is the length, $k$ the dimension, and $d$ the minimum distance. By the Singleton bound, any such code satisfies $d \le n-k+1$, and equality defines an MDS code. Such codes are extremely rare and obey stringent parameter constraints. Despite their scarcity, MDS codes are of fundamental importance, with important applications in reliable communication, cryptography, quantum information processing, and combinatorial design.

Let $M(k,q)$ denote the maximal length of a non‑trivial $q$-ary MDS code of dimension $k$. The long‑standing Main Conjecture on MDS codes (see \cite{MR465509}) states that for $q>k$, 
\[
M(k,q)=\begin{cases}
q+2, & q\text{ even and }k=3\text{ or }k=q-1,\\
q+1, & \text{otherwise}.
\end{cases}
\]

Considerable progress has been made toward this conjecture. It was shown in \cite{MR2911882} that the conjecture holds for prime fields. For $q$ an odd square, \cite{MR3846204,MR4073887} verified the conjecture for $k\le \sqrt{q} - \frac{\sqrt{q}}{p} + 2$. The case $11\le q\le 19$ was proved in \cite{MR1477225}. For recent progress, we refer to \cite{MR1398077,MR985681,MR1198385} and the references therein. The case of elliptic and hyperelliptic curves for sufficiently large $q$ was also addressed in \cite{MR1178199,MR1412183,MR1289576,MR1454971}.

Denote by $\mathbb{F}_q$ the finite field with $q$ elements.
Let $X$ be an absolutely irreducible smooth projective curve of genus $g$ over $\mathbb{F}_q$, let $\mathcal{D}=\{P_1,\dots,P_n\}$ be a set of $\mathbb{F}_q$-rational points, and let $G$ be an $\mathbb{F}_q$-rational divisor with $\operatorname{Supp}(G)\cap\mathcal{D}=\emptyset$ and $2g-2<\deg G<n$. Then the algebraic geometry code $C(X,\mathcal{D},G)$ (see Definition~\ref{def:definition_of_AG_codes}) is a linear $[n,k,d]$ code with $k=\deg G-g+1$. When $X=E$ is an elliptic curve, $C(E,\mathcal{D},G)$ is called an elliptic code. The main conjecture has been verified for elliptic curves when $q$ is sufficiently large (see, e.g., \cite{MR1178199,MR1412183}).

Several bounds on the maximal length of non-trivial MDS codes arising from elliptic curves are known.  
Define $\operatorname{MEC}(k,q)$ to be the maximal length of a non‑trivial $q$-ary MDS elliptic code of dimension $k$.  
Munuera \cite{MR1178199} established the bound  
\[
\operatorname{MEC}(k,q)\le \frac{q+1}{2}+\sqrt{q}+k.
\]
Subsequently, Li et al. \cite{7282884} obtained another bound
\[
\operatorname{MEC}(k,q)\le \Bigl(\frac23+\epsilon\Bigr)q
\]
for any $\epsilon>0$, provided that $q>4/\epsilon^2$ and $k>C_\epsilon\ln q$ (where $C_\epsilon$ is a constant depending on $\epsilon$). This bound improves upon Munuera's result for large $k$. Furthermore, Li, Wan, and Zhang (see \cite{7282884}, p.~2) conjectured that for every $\epsilon>0$ there exists some constant $C_\epsilon>0$ such that for all $q>C_\epsilon$,
\[
\operatorname{MEC}(k,q)\le \Bigl(\frac12+\epsilon\Bigr)q.
\]

In \cite{MR4564628} the conjecture for the range $3\le k\le \frac{q+1-2\sqrt{q}}{10}$ was proved, and a complete proof was subsequently provided in \cite{MR4756098}. The method of \cite{MR4564628} builds upon the ideas developed in \cite{7282884}, reducing the MDS condition for elliptic codes to a purely combinatorial problem (Lemma~\ref{lem:MDS_condition}) and relying on combinatorial results (see Lemmas~\ref{lem:tight:first_combination_result} and~\ref{lem:G_combination}). In particular, it was shown in \cite{MR4564628} that for $q\ge 289$ and $3\le k\le \frac{q+1-2\sqrt{q}}{10}$,
\begin{equation}\label{eq:bound}
    \operatorname{MEC}(k,q)\le \frac{q+1}{2}+\sqrt{q}.
\end{equation} 
Notably, this bound is tight \textbf{when $k$ is odd and $q$ is an odd square}.

{\color{blue}Recently, for an elliptic curve $E$ with an even number $N$ of $\mathbb{F}_q$-rational points, Wang, Liu, Luo, and Zhai (see \cite{wang2025nonreedsolomontypemdscodes}) constructed MDS elliptic codes of length $N/2$ when $k$ is odd, and of length $N/2-1$ when $k$ is even. In analogy with extended Reed–Solomon codes, they also extended the above MDS elliptic codes for even dimension $k$ to obtain linear codes of length $N/2$.}

However, several questions remain unresolved. In particular, the following problems have not been settled in the literature.
\begin{enumerate}
    \item \label{Q:1}It is not known whether the bound \eqref{eq:bound}
    % $\frac{q+1}{2}+\sqrt{q}$ 
    can be attained when $k$ is even and $q$ is an odd square.
    \item \label{Q:2}When $q$ is not a square, the exact value of $\operatorname{MEC}(k,q)$ remains unknown.
    \item \label{Q:3}When $\operatorname{char}(\mathbb{F}_q)=2$, the bound $\frac{q+1}{2}+\sqrt{q}$ in \eqref{eq:bound} is not an integer; consequently, the exact maximal length remains undetermined.
\end{enumerate}

This paper resolves the three open problems listed above completely. The resulting bounds to answer the three questions, combined with the earlier work of \cite{MR4564628}, are collected in Table~\ref{tab:MEC_max_length}.
\begin{table}[htbp]
  \centering
  \caption{Maximal length of non‑trivial MDS elliptic codes for $q \ge 289$ and $3 \le k \le \frac{q+1-2\sqrt{q}}{10}$}
  \label{tab:MEC_max_length}
  \renewcommand{\arraystretch}{1.6}
  \begin{tabular}{|c|c|c|c|c|}
    \hline
    \textbf{Maximal length} $n$ & \textbf{Condition on} $q$ & \textbf{Dimension} $k$ & \textbf{Restriction on} $G$ & \textbf{Citation} \\
    \hline
    $\frac{q+1+ 2\sqrt{q}}{2}$ & odd square $q$ & odd & none & \cite{MR4564628}, Equation \eqref{eq:bound} \\
    \hline
    $\frac{q+1+ \lfloor 2\sqrt{q}\rfloor}{2}$ & $q+1+ \lfloor 2\sqrt{q}\rfloor$ even & odd & none & Theorem~\ref{thm:q+1+...even} \\
    \hline
    $\frac{q+1+ \lfloor 2\sqrt{q}\rfloor}{2}-1$ & $q+1+ \lfloor 2\sqrt{q}\rfloor$ even & even & $\operatorname{Supp}(G) \subseteq E(\mathbb{F}_q)$ & Theorem~\ref{thm:G'and_even_k} \\
    \hline
    $\frac{q+1+ \lfloor 2\sqrt{q}\rfloor}{2}$ & $q+1+ \lfloor 2\sqrt{q}\rfloor$ even & even & none & Theorem~\ref{thm:q+1+...even} \\
    \hline
    $\frac{q+ \lfloor 2\sqrt{q}\rfloor}{2}$ & $q+1+ \lfloor 2\sqrt{q}\rfloor$ odd & odd & none & Theorem~\ref{thm:q+1+odd} \\
    \hline
    $\frac{q+ \lfloor 2\sqrt{q}\rfloor}{2}-1$ & $q+1+ \lfloor 2\sqrt{q}\rfloor$ odd & even & $\operatorname{Supp}(G) \subseteq E(\mathbb{F}_q)$ & Theorem~\ref{thm:G'and_oddq+1+} \\
    \hline
    $\frac{q+ \lfloor 2\sqrt{q}\rfloor}{2}$ & $q+1+ \lfloor 2\sqrt{q}\rfloor$ odd & even & none & Theorem~\ref{thm:q+1+odd} \\
    \hline
    $2^{a-1} + \sqrt{2^a} - 1$ & $q=2^a$ with even $a$ & even & $\operatorname{Supp}(G) \subseteq E(\mathbb{F}_q)$ & Corollary~\ref{cor:q=2^a,evenkandG'} \\
    \hline
    $2^{a-1} + \sqrt{2^a} $ & $q=2^a$ with even $a$ & even & none & Corollary~\ref{cor:boundforcharq=2} \\
    \hline
    $2^{a-1} + \sqrt{2^a} $ & $q=2^a$ with even $a$ & odd & none & Corollary~\ref{cor:boundforcharq=2} \\
    \hline
  \end{tabular}
\end{table}

In the following, we discuss each of the three problems in detail.

\subsection{For the First Problem}
In the existing constructions of MDS elliptic codes, the divisor $G$ is invariably taken to consist solely of $\mathbb{F}_q$-rational points. Accordingly, our initial approach was to seek an $[\frac{q+1+2\sqrt{q}}{2},k]$ MDS code of even dimension $k$ with $G$ likewise supported on $\mathbb{F}_q$-rational points. This attempt, however, proves unsuccessful. We shall establish the following result (see Corollary~\ref{cor:MEC-1}). Let $q \ge 289$ and assume that the support of $G$ is contained in the set of $\mathbb{F}_q$-rational points. Then for every even integer $k$ satisfying $3 \le k \le (q+1-2\sqrt{q})/10$, there exists no MDS elliptic code of length exactly $\frac{q+1}{2}+\sqrt{q}$. Equivalently, the bound $\frac{q+1}{2}+\sqrt{q}$ fails to be tight in this case. In Theorem~\ref{thm:G'and_even_k} we prove that the maximal length of a non‑trivial MDS elliptic code in this setting is precisely
\[
\frac{q+1}{2}+\sqrt{q}-1.
\]

Therefore, to construct an MDS elliptic code of length $\frac{q+1}{2}+\sqrt{q}$ and even dimension $k$, the support of $G$ must contain at least one place of degree strictly greater than $1$. We present such a construction in Theorem~\ref{thm:construction_of_degree_greater_1}. Consequently, the bound $\frac{q+1}{2}+\sqrt{q}$ is attainable for even $k$ when $q$ is an odd square, which answers the first question.

\subsection{For the Second Problem}
For the second problem, we split the discussion according to the parity of $q+1+ \lfloor 2\sqrt{q} \rfloor$.
As before, we first consider the case where the support of $G$ consists entirely of $\mathbb{F}_q$-rational points.
If $q+1+ \lfloor 2\sqrt{q} \rfloor$ is even, Theorem~\ref{thm:G'and_even_k} gives the tight upper bound
\[
\frac{q+1+ \lfloor 2\sqrt{q} \rfloor}{2} - 1 .
\]
If $q+1+ \lfloor 2\sqrt{q} \rfloor$ is odd, the analogous bound is
\[
\frac{q+ \lfloor 2\sqrt{q} \rfloor}{2} - 1 ,
\]
and its tightness is proved in Theorem~\ref{thm:G'and_oddq+1+}.

We now remove the restriction on $G$ and treat the two parity cases separately.

\noindent\textbf{Case 1: $q+1+ \lfloor 2\sqrt{q} \rfloor$ is even.}
With the aid of the construction in Theorem~\ref{thm:construction_of_degree_greater_1}, we obtain the following tight bound (see Theorem~\ref{thm:q+1+...even}):
\[
\operatorname{MEC}(k,q) = \frac{q+1+ \lfloor 2\sqrt{q} \rfloor}{2}.
\]
This covers, in particular, the previously open case where $q$ is an odd square and $k$ is even, thereby answering the first question posed at the outset.

\noindent\textbf{Case 2: $q+1+ \lfloor 2\sqrt{q} \rfloor$ is odd.}
In this situation, the relevant result is stated in Theorem~\ref{thm:q+1+odd}:
\[
\operatorname{MEC}(k,q) = \frac{q+ \lfloor 2\sqrt{q} \rfloor}{2}.
\]
This case is of particular importance for fields of characteristic $2$ (where $q = 2^{t}$ is square and the expression $q+1+ \lfloor 2\sqrt{q} \rfloor$ is naturally odd), providing the exact maximal length in a setting widely used in applications.

Thus, for every $q \ge 289$ and $3 \le k \le (q+1-2\sqrt{q})/10$, the exact value of $\operatorname{MEC}(k,q)$ is now completely determined, regardless of whether $q$ is a square or not.

\subsection{For the Third Problem}
Let $q = 2^{a} \ge 289$ and $3 \le k \le (q+1-2\sqrt{q})/10$.
 By applying the parity analysis developed for the second problem, we obtain the following definitive answer. If $2^{a}+1+\lfloor 2\sqrt{2^{a}}\rfloor$ is odd, then Corollary~\ref{cor:boundforcharq=2} gives
\[
\operatorname{MEC}(k,q) = \frac{2^{a} + \lfloor 2\sqrt{2^{a}} \rfloor}{2}.
\]
In particular, when $q$ is a square (i.e., $a$ is even), this simplifies to
\[
\operatorname{MEC}(k, q) = 2^{a-1} + \sqrt{2^a}.
\]

If $2^{a}+1+\lfloor 2\sqrt{2^{a}}\rfloor$ is even, then we have
\[
\operatorname{MEC}(k,q) = \frac{2^{a}+1 + \lfloor 2\sqrt{2^{a}} \rfloor}{2}.
\]
Thus, the exact maximal length for characteristic $2$ is completely determined, which resolves the third open question. These results are especially relevant for applications in cryptography and communication systems, where binary fields are predominantly employed.

This paper is organized as follows. Section~\ref{sec:preliminaries} introduces the necessary notation and basic properties of elliptic codes that will be used throughout. We treat the upper bound problem in two separate cases, depending on whether \( q+1 + 2\sqrt{q} \) is even or odd. Section~\ref{sec:A_tight_upper_bound_for_even_k} deals with the even case, and Section~\ref{sec:A_tight_upper_bound_for_the_case_of_characteristic_2} covers the odd case. In Section~\ref{sec:A_tight_upper_bound_for_even_k}, we first show that for certain even integers \(k\), the codes \([\frac{q+1+2\sqrt{q}}{2}, k]\) cannot be MDS under the restriction that the support of $G$ consists entirely of $\mathbb{F}_q$-rational points. We then present a new construction of MDS elliptic codes and derive a tight upper bound for \(\operatorname{MEC}(k,q)\) when \(q+1+ \lfloor 2\sqrt{q} \rfloor\) is even. This section also resolves the question of whether the bound \(\frac{q+1+2\sqrt{q}}{2}\) is attainable for even \(k\) in the specified range. In Section~\ref{sec:A_tight_upper_bound_for_the_case_of_characteristic_2},  we establish a tight upper bound for \(\operatorname{MEC}(k,q)\) when \(q+1+ \lfloor 2\sqrt{q} \rfloor\) is odd. Given the importance of characteristic‑\(2\) fields in communication and cryptography, we also specialize the bounds to this characteristic.

\section{Preliminaries}\label{sec:preliminaries}

In this section, we briefly recall the basic definitions and concepts pertaining to algebraic geometry codes and elliptic curves, as well as certain combinatorial results over finite abelian groups that will be required in the sequel.

\subsection{Algebraic Geometry Codes and Elliptic Curves}

Let $\mathbb{F}_q$ denote the finite field with $q$ elements and fix an algebraic closure $\overline{\mathbb{F}}_q$ of $\mathbb{F}_q$. Let $X$ be a smooth, projective, geometrically irreducible curve of genus $g$ defined over $\mathbb{F}_q$, and let $X(\overline{\mathbb{F}}_q)$ be the set of $\overline{\mathbb{F}}_q$-rational points on $X$. 
The divisor group $\operatorname{Div}(X)$ is the free abelian group generated by the points of $X(\overline{\mathbb{F}}_q)$. 
Thus, a divisor $G$ is a formal sum $G = \sum n_Q [Q]$, where $Q$ runs over $X(\overline{\mathbb{F}}_q)$ and the coefficients $n_Q$ are integers with $n_Q = 0$ for all but finitely many $Q$. The degree of $G$ is defined by $\deg(G) = \sum n_Q$. If $G$ is invariant under the action of $\operatorname{Gal}(\overline{\mathbb{F}}_q / \mathbb{F}_q)$, it is called an $\mathbb{F}_q$-rational divisor. The support of a divisor $G = \sum n_Q [Q]$ is $\operatorname{Supp}(G) = \{ Q : n_Q \neq 0 \}$.

Let $\overline{\mathbb{F}}_q(X)$ (resp. $\mathbb{F}_q(X)$) denote the function field of $X$ over $\overline{\mathbb{F}}_q$ (resp. over $\mathbb{F}_q$). For any nonzero function $f \in \overline{\mathbb{F}}_q(X)^{\times}$, its principal divisor is $\operatorname{div}(f) = \sum v_Q(f)[Q]$, where $v_Q(f)$ denotes the valuation of $f$ at the point $Q$. Given an $\mathbb{F}_q$-rational divisor $G$ on $X$, the associated Riemann-Roch space over $\mathbb{F}_q$ is defined as
\[
L(G) := \{ f \in \mathbb{F}_q(X)^{\times} : \operatorname{div}(f) + G \ge 0 \} \cup \{0\}.
\]
A place $P$ of the function field $\mathbb{F}_q(X)$ is the maximal ideal of a discrete valuation ring $\mathcal{O}_P \subseteq \mathbb{F}_q(X)$. The degree of a place $P$, denoted $\deg(P)$, is the degree of its residue field $[\mathcal{O}_P/P : \mathbb{F}_q]$.

There is a natural bijection between the places of $\mathbb{F}_q(X)$ and the Galois orbits of points in $X(\overline{\mathbb{F}}_q)$, under which a place of degree $d$ corresponds to a set of $d$ distinct conjugate points over $\overline{\mathbb{F}}_q$. In particular, an $\mathbb{F}_q$-rational point on $X$ is precisely a place of degree one. Throughout this paper we freely identify a place with the corresponding point (or set of conjugate points).

Algebraic geometry codes are constructed as follows.

\begin{definition}\label{def:definition_of_AG_codes}
Let $\mathcal{D} = \{P_1, \ldots, P_n\}$ be a set of $\mathbb{F}_q$-rational points on $X$, and let $G$ be an $\mathbb{F}_q$-rational divisor on $X$ such that $\operatorname{Supp}(G) \cap \mathcal{D} = \emptyset$. The algebraic geometry code (AG code) on $X$ associated with $\mathcal{D}$ and $G$ is defined by
\[
C(X, \mathcal{D}, G) = \left\{ \bigl( f(P_1), \ldots, f(P_n) \bigr) : f \in L(G) \right\} \subseteq \mathbb{F}_q^n.
\]
\end{definition}

By definition, the code $C(X, \mathcal{D}, G)$ has length $n$. Under the condition $2g - 2 < \deg(G) < n$, the Riemann-Roch theorem implies that its dimension is given by $k = \deg(G) + 1 - g$. Let $d$ denote the minimum distance. Since every function in $L(G)$ has at most $\deg(G)$ zeros, it follows that
\[
n - \deg(G) \le d.
\]

Now consider the case where $X = E$ is an elliptic curve over $\mathbb{F}_q$. Since the genus of $E $ is $g = 1$, we obtain the bounds
\[
n - k \le d \le n - k + 1
\]
under the assumption that $0 < k = \deg(G) < n$.

It is well-known (see \cite[Ch. III, Proposition 3.4]{MR2514094}) that the points of $E(\overline{\mathbb{F}}_q)$ forms an abelian group with the identity $\mathcal{O}$. As usual, the symbol $+$ denotes the group addition and $mP$ denotes the scalar multiplication of an integer $m$ by a point $P$. 
For any points $Q_1, \ldots, Q_d \in E(\overline{\mathbb{F}}_q)$, the relation $Q_1 + \cdots + Q_d = \mathcal{O}$ holds if and only if
\[
\sum_{i=1}^d [Q_i] - d[\mathcal{O}] = \operatorname{div}(f)
\]
for some $f \in \overline{\mathbb{F}}_q(E)^{\times}$.  We define
\[
\Point: \operatorname{Div}(X)\to E(\overline{\mathbb{F}}_q) \qquad  \Point(G) = \sum n_Q Q
\]
for any divisor $G = \sum n_Q [Q]$.
In this notation, we have $
\Point([Q]) = Q$.

Let $\mathcal{D} = \{P_1, \ldots, P_n\}$ be a set of $\mathbb{F}_q$-rational points on $E$. The $k$-sumset of $\mathcal{D}$ is defined as

\begin{align}\label{def:k-sumset}
    \sumset_k(\mathcal{D}) := \left\{ Q \in E(\mathbb{F}_q) \; \middle| \; Q = \sum_{P \in \mathcal{A}} P \text{ for some } \mathcal{A} \subseteq \mathcal{D} \text{ with } |\mathcal{A}| = k \right\},
\end{align}
where the addition is taken in the finite abelian group $E(\mathbb{F}_q)$. Using these notations, we are able to introduce an important criterion for an elliptic code to be MDS.
\begin{lemma}[{\cite[Proposition 4.1]{7282884}}]\label{lem:MDS_condition}
Let $ k  \in \{ 1, \ldots, n-1\} $ be the degree of $G$.  Then $C(E, \mathcal{D}, G)$ is not an MDS code if and only if $\Point(G) \in \sumset_k(\mathcal{D})$.
\end{lemma}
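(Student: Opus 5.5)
We prove both directions of Lemma~\ref{lem:MDS_condition} by linking low-weight codewords of $C(E,\mathcal{D},G)$ to principal divisors, and then reading off the group law. Since $k=\deg G$ and $g=1$, the bounds $n-k\le d\le n-k+1$ show that the code fails to be MDS exactly when $d=n-k$. Because $k<n$, the evaluation map $L(G)\to\mathbb{F}_q^n$ is injective: a nonzero $f$ has at most $\deg G=k<n$ zeros. So the code is not MDS if and only if some nonzero $f\in L(G)$ vanishes at exactly $k$ of the points $P_i$. It vanishes at no more than $k$, since $\operatorname{div}(f)+G\ge0$ and $\deg(\operatorname{div}(f)+G)=k$.

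For the forward direction, suppose $f\in L(G)$ is nonzero and vanishes on a subset $\mathcal{A}\subseteq\mathcal{D}$ with $|\mathcal{A}|=k$. The divisor $D=\operatorname{div}(f)+G$ is effective of degree $k$. Since $\operatorname{Supp}(G)\cap\mathcal{D}=\emptyset$, each $P\in\mathcal{A}$ appears in $D$ with coefficient at least $1$. Degree counting then forces $D=\sum_{P\in\mathcal{A}}[P]$. Hence $\sum_{P\in\mathcal{A}}[P]-G$ is principal. Since the sum of a principal divisor is $\mathcal{O}$ (a degree-zero divisor is principal iff its sum is $\mathcal{O}$), we get $\Point(G)=\sum_{P\in\mathcal{A}}P\in\sumset_k(\mathcal{D})$.

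For the converse, suppose $\Point(G)=\sum_{P\in\mathcal{A}}P$ with $\mathcal{A}\subseteq\mathcal{D}$ and $|\mathcal{A}|=k$. The divisor $H=\sum_{P\in\mathcal{A}}[P]-G$ has degree $0$ and sum $\mathcal{O}$, so it is principal over $\overline{\mathbb{F}}_q$, say $H=\operatorname{div}(f)$. The main point to check is that $f$ can be chosen in $\mathbb{F}_q(E)$. Since $H$ is $\mathbb{F}_q$-rational, $\sigma(f)/f$ is a nonzero constant for every $\sigma$ in the Galois group, so this defines a $1$-cocycle with values in $\overline{\mathbb{F}}_q^{\times}$. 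By Hilbert 90 it is a coboundary, so after rescaling $f$ is Galois-invariant, i.e.\ $f\in\mathbb{F}_q(E)$. Alternatively, one can quote the standard fact that an $\mathbb{F}_q$-rational divisor which is principal over $\overline{\mathbb{F}}_q$ is principal over $\mathbb{F}_q$.

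Then $\operatorname{div}(f)+G=\sum_{P\in\mathcal{A}}[P]\ge0$, so $f\in L(G)$ is nonzero and vanishes at the $k$ points of $\mathcal{A}$. Its codeword therefore has weight at most $n-k$. Hence $d\le n-k<n-k+1$, and the code is not MDS, which completes the equivalence in Lemma~\ref{lem:MDS_condition}.
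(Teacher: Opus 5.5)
Your proof is correct. The paper does not prove this lemma itself; it quotes Proposition~4.1 of Li, Wan and Zhang. Your argument is the standard one behind that result: codewords of weight $n-k$ correspond to $f\in L(G)$ with $\operatorname{div}(f)+G=\sum_{P\in\mathcal{A}}[P]$, and the group law turns this into $\Point(G)=\sum_{P\in\mathcal{A}}P$. The one step the paper's preliminaries leave implicit, since they only produce $f\in\overline{\mathbb{F}}_q(E)^{\times}$, is descending $f$ to $\mathbb{F}_q(E)$ so that it lies in $L(G)$, and your Hilbert~90 argument handles this correctly (the cocycle factors through a finite quotient because $f$ is defined over some $\mathbb{F}_{q^m}$).
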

For a fixed elliptic curve $E$ over $\mathbb{F}_q$, let $\operatorname{MEC}(k, q, E)$ denote the maximal length of a non‑trivial $q$-ary MDS elliptic code of dimension $k$ constructed on the elliptic curve $E$. 
Define $\operatorname{MEC}(k,q)$ to be the maximal length of a non‑trivial $q$-ary MDS elliptic code of dimension $k$. We define $\operatorname{MEC}^*(k,q)$ in the same fashion under the restriction that the support of $G$ consists entirely of $\mathbb{F}_q$-rational points. 
Obviously, we have 
\[
\operatorname{MEC}^*(k,q) \le \operatorname{MEC}(k,q).
\]

\subsection{Some Results on the Rational Points of Elliptic Curves}
For positive integer $n$, let $E[n]$ denote the $n$-torsion subgroup of $E(\overline{\mathbb{F}}_q)$, consisting of points $P$ satisfying $nP = \mathcal{O}$. The following lemma provides a bound on the size of $E[n]$.

\begin{lemma}[{\cite[Chapter III, \S 6]{MR2514094}}]\label{lem:bound_on_torsion}
Let $E$ be an elliptic curve over $\mathbb{F}_q$. Then $|E[n]| \le n^2$.
\end{lemma}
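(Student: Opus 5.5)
The plan is to show that $E[n]$ is the kernel of the multiplication-by-$n$ isogeny $[n]\colon E\to E$, and then bound this kernel by the degree of the isogeny. Since $[n]$ is a nonconstant morphism of curves (for $n\ge 1$), it is an isogeny. For any nonconstant isogeny $\phi$ one has $|\ker\phi| = \deg_s\phi \le \deg\phi$. The reason is that $\phi$ is a group homomorphism whose fibres over points of $E(\overline{\mathbb{F}}_q)$ all have the same cardinality $\deg_s\phi$, namely the number of points in a generic fibre. It therefore suffices to prove $\deg[n]=n^2$.

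To compute $\deg[n]$, I would use the dual isogeny, or more elementarily the fact that the degree map on $\operatorname{Hom}(E,E)$ is a positive definite quadratic form. This form satisfies $\deg(\phi+\psi)+\deg(\phi-\psi)=2\deg\phi+2\deg\psi$. That identity follows from $\hat{}$ being additive and $\deg\phi=\hat\phi\phi$. Then an induction on $n$ gives the result. Apply the identity with $\phi=[n]$ and $\psi=[1]$:
\[
\deg[n+1] = 2\deg[n] + 2 - \deg[n-1].
\]
Starting from $\deg[0]=0$ and $\deg[1]=1$, this recursion yields $\deg[n]=n^2$.

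Combining the two steps gives $|E[n]| = \deg_s[n] \le \deg[n] = n^2$. Equality holds when $p\nmid n$, since then $[n]$ is separable; we only need the inequality. The main obstacle is the additivity of the dual isogeny, or equivalently the parallelogram law for degrees. This is standard but nontrivial, and in the paper it is simply quoted from \cite[Chapter III, \S 6]{MR2514094}. I would rely on that reference rather than reprove it, since only the inequality is needed in the sequel.
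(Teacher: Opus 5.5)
Your proposal is correct and is essentially the standard argument from the cited reference; the paper itself gives no proof beyond citing Silverman, Ch.~III, \S 6. The steps are the same: $E[n]=\ker[n]$, then $|\ker\phi|=\deg_s\phi\le\deg\phi$, then $\deg[n]=n^2$. Your parallelogram-law recursion for $\deg[n]$ is only a cosmetic variant of Silverman's derivation via $\widehat{[n]}=[n]$ and $\widehat{[n]}\circ[n]=[\deg[n]]$, since both rest on the additivity of the dual isogeny.
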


Throughout this paper, we denote by $ N $ the cardinality of $ E(\mathbb{F}_q) $. 
The next lemma gives the classical Hasse-Weil bound on the cardinality $N$ of an elliptic curve.

\begin{lemma}[{\cite[Theorem 5.2.3]{MR2464941}}]\label{lem:the_bound_of_rational_number}
 Let $E$ be an elliptic curve over $\mathbb{F}_q$. Then $\bigl| N - q - 1 \bigr| \le 2\sqrt{q}$.
\end{lemma}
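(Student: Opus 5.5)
The statement is the Hasse--Weil bound $|N-q-1|\le 2\sqrt{q}$ for an elliptic curve $E$ over $\mathbb{F}_q$. I will follow the standard route through the $q$-power Frobenius endomorphism $\phi\colon (x,y)\mapsto(x^q,y^q)$ and the degree map on $\operatorname{End}(E)$.

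\textbf{Step 1: $N=\deg(1-\phi)$.} A point $P\in E(\overline{\mathbb{F}}_q)$ is $\mathbb{F}_q$-rational exactly when $\phi(P)=P$, so $E(\mathbb{F}_q)=\ker(1-\phi)$. The differential of $\phi$ is zero, so $d(1-\phi)$ is nonzero. Hence $1-\phi$ is separable, and its kernel has exactly $\deg(1-\phi)$ elements, which gives $N=\deg(1-\phi)$.

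\textbf{Step 2: the degree is a positive definite quadratic form.} I will use that $\deg\colon \operatorname{End}(E)\to\mathbb{Z}$ is a quadratic form, with associated bilinear pairing
\[
\langle\alpha,\beta\rangle=\deg(\alpha+\beta)-\deg\alpha-\deg\beta .
\]
This can be established either through the dual isogeny, where $\hat\alpha\alpha=[\deg\alpha]$ and the dual is additive, or through the Weil pairing on $E[\ell]$ for a prime $\ell\neq p$, using Lemma~\ref{lem:bound_on_torsion} to control the torsion. The form is positive definite because $\deg\alpha\ge 0$, with equality only for $\alpha=0$.

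\textbf{Step 3: Cauchy--Schwarz.} Apply the form to $m-n\phi$ for integers $m,n$. We have $\deg[m]=m^2$ and $\deg\phi=q$. Writing $a=q+1-N$, Step 1 gives $\langle 1,-\phi\rangle=N-1-q=-a$, and so
\[
0\le \deg(m-n\phi)=m^2-amn+qn^2 .
\]
This holds for all integers $m,n$, hence by homogeneity for all rationals. A binary quadratic form that is nonnegative on rational points has nonpositive discriminant, so $a^2-4q\le 0$, that is, $|a|\le 2\sqrt{q}$.

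The main obstacle is Step 2, proving that the pairing is bilinear. This needs the additivity of the dual isogeny, $\widehat{\alpha+\beta}=\hat\alpha+\hat\beta$, which is the deep input. Alternatively, one can realize $\deg\alpha$ as $\det$ of the action of $\alpha$ on the Tate module $T_\ell(E)$, and then the quadratic behaviour is clear. I would cite the latter, as in \cite[Ch.~III, \S 6; Ch.~V, Thm.~1.1]{MR2514094}, rather than reprove it.
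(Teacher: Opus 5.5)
Your proposal is correct, but it takes a different route from the one behind the paper's statement. The paper gives no argument of its own. It cites Stichtenoth's Hasse--Weil bound \cite[Theorem 5.2.3]{MR2464941}, which holds for an arbitrary function field over $\mathbb{F}_q$ of genus $g$ in the form $|N-(q+1)|\le 2g\sqrt{q}$, with $N$ the number of places of degree one. There it is derived from the Riemann hypothesis for the zeta function (all reciprocal roots of the $L$-polynomial have absolute value $\sqrt{q}$), which in turn is proved by a Bombieri--Stepanov counting argument with Riemann--Roch spaces. The lemma is the case $g=1$, since degree-one places correspond to points of $E(\mathbb{F}_q)$. You instead give Hasse's endomorphism proof, as in \cite[Ch.~V, Thm.~1.1]{MR2514094}: separability of $1-\phi$ gives $N=\deg(1-\phi)$, and nonnegativity of $m^2-amn+qn^2$ forces $a^2\le 4q$. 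Your route is much shorter and more informative for elliptic curves, because it directly controls the norm form $\deg(m-n\phi)$ of Frobenius, which is the Riemann hypothesis for $E$ in disguise. However, it depends on isogeny theory (additivity of the dual isogeny, or the Tate module), and it does not extend past genus one without Weil's far heavier positivity arguments on Jacobians. The cited route uses only Riemann--Roch on the curve and works in every genus, at the price of a longer proof. One small inaccuracy: if you go through the Weil pairing or $T_\ell(E)$, Lemma~\ref{lem:bound_on_torsion} is not the input you need. That route requires $E[\ell^n]\cong(\mathbb{Z}/\ell^n\mathbb{Z})^2$ for $\ell\ne p$, not merely $|E[n]|\le n^2$. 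Since you intend to cite Silverman for Step 2 anyway, this does not affect the argument.
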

An elliptic curve is called maximal if its number of $\mathbb{F}_q$-rational points attains the upper bound $q + 1 + 2\sqrt{q}$.
The following lemmas provide explicit examples of maximal elliptic curves.
\begin{lemma}[\cite{MR3900980}]\label{lem:maximalcurveforp=2mod3}

\begin{itemize}
    \item[(i)] Let $p \neq 2$ be an odd prime with $p \equiv 2 \pmod{3}$. Then for any even $a$, there exists a maximal elliptic curve over $\mathbb{F}_{p^a}$ defined by $y^2 = x^3 + \theta^3$ for some $\theta \in \mathbb{F}_{p^a}^*$.
    \item[(ii)] Set $p = 3$. Then for any even $a$, there exists a maximal elliptic curve over $\mathbb{F}_{p^a}$ defined by $y^2 = x^3 + x$.
\end{itemize}
\end{lemma}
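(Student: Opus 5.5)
The plan is to use the fact that both curves are supersingular with Frobenius trace $0$ over the prime field. Then lift to $\mathbb{F}_{p^a}$ through the characteristic polynomial of Frobenius, and fix the sign of the trace with a quadratic twist where needed.

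For (i), consider first $E_0: y^2=x^3+1$ over $\mathbb{F}_p$ with $p\equiv 2 \pmod 3$. Since $\gcd(3,p-1)=1$, the map $x\mapsto x^3+1$ is a bijection of $\mathbb{F}_p$. The values $y^2=c$ therefore run over all of $\mathbb{F}_p$ exactly once, and summing $1+\left(\frac{c}{p}\right)$ over $c\in\mathbb{F}_p$ gives $p$ affine points. Adding the point at infinity gives $\#E_0(\mathbb{F}_p)=p+1$, so the trace is $0$. The Frobenius $\pi$ then satisfies $\pi^2=-p$, and for even $a$ we get $\pi^a=(-p)^{a/2}$. Hence
\[
\#E_0(\mathbb{F}_{p^a}) = p^a+1-2(-p)^{a/2}.
\]
This equals $p^a+1+2p^{a/2}$ exactly when $a/2$ is odd, so in that case $\theta=1$ works.

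When $a/2$ is even, $E_0$ is minimal over $\mathbb{F}_{p^a}$, and I would pass to the quadratic twist. Take $\theta\in\mathbb{F}_{p^a}^*$ a non-square. The substitution $x=\theta x'$ turns $y^2=x^3+\theta^3$ into $y^2=\theta^3(x'^3+1)$. Since $\theta^3$ is a non-square, this is the quadratic twist of $E_0$ over $\mathbb{F}_{p^a}$, so its trace is $+2p^{a/2}$ negated to $-2p^{a/2}$ relative to the formula above. The twist therefore has $p^a+1+2p^{a/2}$ points and is maximal. These two cases together give a suitable $\theta$ for every even $a$.

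For (ii), a direct count over $\mathbb{F}_3$ handles $y^2=x^3+x$. At $x=0$ there is one point; at $x=1$ the value $2$ is a non-square, giving none; at $x=2$ the value is $1$, giving two points. With $\mathcal{O}$ this makes $4=3+1$ points, so again the trace is $0$ and $\pi^2=-3$. The same computation gives $\#E(\mathbb{F}_{3^a})=3^a+1-2(-3)^{a/2}$, which is maximal when $a\equiv 2 \pmod 4$.

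The main obstacle is the case $a\equiv 0\pmod 4$ in (ii). There the formula gives the minimal count; for example $a=4$ gives $64$ points rather than $100$. The statement as written must then either be read with a quadratic twist, namely $\delta y^2=x^3+x$ with $\delta$ a non-square in $\mathbb{F}_{3^a}$, or it is only literally correct for $a\equiv 2\pmod 4$. I would handle this exactly as in (i) and flag that the twist is needed in that case. For the paper's purposes only the existence of a maximal curve over every square $q=p^a$ is used, and the twisting argument guarantees that in all cases.
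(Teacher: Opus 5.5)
The paper gives no proof of this lemma; it only cites \cite{MR3900980}, so there is no internal argument to compare against. Your argument is the standard one and it is correct: trace $0$ over $\mathbb{F}_p$ gives $\pi^2=-p$ and $\#E(\mathbb{F}_{p^a})=p^a+1-2(-p)^{a/2}$, and a quadratic twist then flips the sign. For (i) it gives exactly what is claimed. Up to $\mathbb{F}_{p^a}$-isomorphism, $y^2=x^3+\theta^3$ depends only on the square class of $\theta$, so any square $\theta$ works when $a\equiv 2\pmod 4$ and any non-square $\theta$ works when $a\equiv 0\pmod 4$. Two small points to tidy up:
\begin{enumerate}
\item Your sentence about the trace of the twist is garbled. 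The clean statement is that, with $q=p^a$, the twisted curve has $2(q+1)-\#E_0(\mathbb{F}_q)$ points; this follows at once from $\chi(\theta^3c)=-\chi(c)$ in the character-sum count.
\item You should note that the curve is nonsingular because $p\neq 2,3$.
\end{enumerate}

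Your objection to (ii) is correct, and it is a flaw in the statement, not a gap in your proof. Over $\mathbb{F}_9$ the curve $y^2=x^3+x$ has $16$ points, so its trace is $-6$. Over $\mathbb{F}_{81}$ its trace is therefore $(-6)^2-2\cdot 9=18$, giving $81+1-18=64$ points, which is the Hasse--Weil minimum. In general, for even $a$, this curve is maximal over $\mathbb{F}_{3^a}$ precisely when $a\equiv 2\pmod 4$. So (ii) must either be restricted to $a\equiv 2\pmod 4$, or, when $4\mid a$, the curve must be replaced by its quadratic twist $\delta y^2=x^3+x$ with $\delta$ a non-square, as you propose.

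None of the main theorems is affected: they obtain maximal curves from Lemma~\ref{lem:the_class_of_elliptic_curve}. The worked example over $\mathbb{F}_{3^6}$ is also fine, since $a=6$, consistent with its $784=729+1+54$ points. Example~\ref{ex:p=2mod3}(2), however, inherits the error when $4\mid a$. The twist repairs it: one then needs $b$ with $\delta b^2\notin\operatorname{Im}(t\mapsto t^3+t)$, and such $b$ exists because $\{\delta b^2 : b\in\mathbb{F}_q\}$ has $(q+1)/2>q/3$ elements.
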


An elliptic curve $E/\mathbb{F}_q$ is called supersingular if its number of $\mathbb{F}_q$-rational points equals $1 + q + t$ for an integer $t$ divisible by the characteristic of $\mathbb{F}_q$. Two elliptic curves over $\mathbb{F}_q$ are said to be isogenous if they have the same number of $\mathbb{F}_q$-rational points.

The following lemma, due to Waterhouse, characterizes the possible orders of elliptic curves over $\mathbb{F}_q$.

\begin{lemma}[{\cite[Theorem 4.1]{MR265369}}]\label{lem:the_class_of_elliptic_curve}
 The isogeny classes of elliptic curves over $\mathbb{F}_q$ with $q = p^n$ are in one-to-one correspondence with the integers $t$ satisfying $|t| \le 2\sqrt{q}$ and one of the following conditions:
\begin{itemize}
    \item[(i)] $\gcd(t, p) = 1$;
    \item[(ii)] If $n$ is even: $t = \pm 2\sqrt{q}$;
    \item[(iii)] If $n$ is even and $p \not\equiv 1 \pmod{3}$: $t = \pm \sqrt{q}$;
    \item[(iv)] If $n$ is odd and $p = 2$ or $3$: $t = \pm p^{\frac{n+1}{2}}$;
    \item[(v)] If either (v1) $n$ is odd, or (v2) $n$ is even and $p \not\equiv 1 \pmod{4}$: $t = 0$.
\end{itemize}
The first case corresponds to ordinary (non-supersingular) curves, while the remaining cases are supersingular.
\end{lemma}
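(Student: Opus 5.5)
Since this is Waterhouse's classification, the plan is to rederive it from Tate's isogeny theorem together with Honda--Tate theory, rather than by any elementary argument.

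\textbf{Step 1: reduce to Weil numbers.} For an elliptic curve $E/\mathbb{F}_q$ with $q=p^n$, the $q$-Frobenius $\pi$ has characteristic polynomial $x^2-tx+q$, where $N=q+1-t$, and $|t|\le 2\sqrt q$ by Lemma~\ref{lem:the_bound_of_rational_number}. Tate's theorem says two elliptic curves over $\mathbb{F}_q$ are isogenous if and only if they have the same characteristic polynomial of Frobenius, that is, the same $t$. So isogeny classes inject into admissible values of $t$, and the whole problem is to decide which $t$ with $|t|\le 2\sqrt q$ actually occur.

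\textbf{Step 2: apply Honda--Tate.} Honda--Tate theory gives a bijection between simple abelian varieties over $\mathbb{F}_q$ up to isogeny and Weil $q$-numbers $\pi$ up to conjugacy. For the variety $A_\pi$ attached to $\pi$, the algebra $\operatorname{End}^0(A_\pi)$ is the central division algebra over $\mathbb{Q}(\pi)$ with these local invariants:
\begin{itemize}
\item at a place $v\mid p$, the invariant is $\frac{v(\pi)}{v(q)}\,[\mathbb{Q}(\pi)_v:\mathbb{Q}_p] \bmod 1$;
\item at a real place, the invariant is $1/2$;
\item at all other places, the invariant is $0$.
\end{itemize}
Moreover $2\dim A_\pi = e\,[\mathbb{Q}(\pi):\mathbb{Q}]$, where $e$ is the order of this algebra in the Brauer group. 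The task is therefore to determine, for each root $\pi$ of $x^2-tx+q$, exactly when $e[\mathbb{Q}(\pi):\mathbb{Q}]=2$.

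\textbf{Step 3: the case analysis.}
\begin{itemize}
\item \emph{Case $p\nmid t$.} The roots $\pi,\bar\pi$ have valuations $0$ and $n$ at the two places over $p$, so $p$ splits in the imaginary quadratic field $\mathbb{Q}(\pi)$. The invariants there are $0$ and $1$, hence $e=1$ and the dimension is $1$. This gives (i), the ordinary curves; alternatively one can invoke Deuring's lifting theory here.
\item \emph{Case $t=\pm2\sqrt q$ ($n$ even).} Here $\pi=\pm\sqrt q$ is rational. The algebra is the quaternion algebra ramified exactly at $p$ and $\infty$, so $e=2$ and the dimension is $1$. This gives (ii).
\item \emph{Case $p\mid t$, $|t|<2\sqrt q$.} Now $\mathbb{Q}(\pi)$ is imaginary quadratic and $v(\pi)=n/2$ at every place above $p$. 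If $p$ is inert or ramified, there is a single place with local degree $2$, so the invariant is $\frac{n/2}{n}\cdot 2\equiv 0$ and $e=1$. If $p$ splits, the invariant is $\tfrac12$ and we get an abelian surface instead. In addition, $v(\pi)=n/2$ forces $v(t)\ge n/2$.
\end{itemize}
Within the last case I would enumerate the possible $t$:
\begin{itemize}
\item For $n$ even, write $t=m\sqrt q$ with $m\in\{0,\pm1\}$. When $m=\pm1$, $\mathbb{Q}(\pi)=\mathbb{Q}(\sqrt{-3})$, and $p$ does not split there exactly when $p\not\equiv1\pmod 3$; this is (iii). When $m=0$, $\mathbb{Q}(\pi)=\mathbb{Q}(\sqrt{-1})$, and $p$ does not split exactly when $p\not\equiv1\pmod 4$; this is (v2).
\item For $n$ odd, $v_p(t)\ge (n+1)/2$ together with $|t|\le2\sqrt q$ forces either $t=0$, which gives (v1) with $\mathbb{Q}(\sqrt{-q})$ ramified at $p$, or $t=\pm p^{(n+1)/2}$. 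The latter requires $p^{(n+1)/2}\le 2p^{n/2}$, i.e.\ $p\le4$, so $p\in\{2,3\}$; then $\mathbb{Q}(\pi)=\mathbb{Q}(\sqrt{-p})$ is ramified at $p$, and this gives (iv).
\end{itemize}

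\textbf{Main obstacle.} The substantive input is Honda--Tate itself, in particular the existence half and the invariant formula; I would take these as black boxes from Tate's Bourbaki seminar. The remaining delicate bookkeeping is the splitting behaviour of $p$ in $\mathbb{Q}(\sqrt{t^2-4q})$ in each subcase. I would check that carefully via the Legendre symbol $\left(\frac{-3}{p}\right)$ for $t=\pm\sqrt q$ and $\left(\frac{-1}{p}\right)$ for $t=0$.
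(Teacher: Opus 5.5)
The paper gives no proof of this lemma; it just quotes Waterhouse's Theorem~4.1. Waterhouse's proof is a Honda--Tate computation of the kind you set up: Tate's isogeny theorem, the invariant formula for $\operatorname{End}^0(A_\pi)$, and $2\dim A_\pi=e[\mathbb{Q}(\pi):\mathbb{Q}]$. Your route is therefore the standard one, and your final list of traces is correct. However, one step in the case $p\mid t$ is false as written, and your enumeration relies on it.

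\textbf{The false step.} You assert that for $p\mid t$ and $|t|<2\sqrt q$ one has $v(\pi)=n/2$ at every place above $p$. This fails whenever $n\ge 3$ and $0<v_p(t)<n/2$.

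For example, take $q=p^4$ and $t=p$. The Newton polygon of $x^2-px+p^4$ has slopes $1$ and $3$. So $p$ splits in $\mathbb{Q}(\pi)$, and $\pi$ has valuations $1$ and $3$ at the two places. The invariants are $1/4$ and $3/4$, so $A_\pi$ is $4$-dimensional, not an abelian surface as your split-case sentence says.

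More seriously, your restriction to $v_p(t)\ge n/2$ is derived from this false claim. So the traces with $0<v_p(t)<n/2$ are discarded without any argument that they give no elliptic curve.

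\textbf{The repair.} It is short; it just reverses the order of the logic.

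If $p$ is inert or ramified in $K=\mathbb{Q}(\pi)$, the unique place $v$ above $p$ is stable under complex conjugation. Hence $v(\pi)=v(\bar\pi)=v(q)/2$, the invariant is $0$, and $e=1$. Only then do you get $v_p(t)=v(\pi+\bar\pi)\ge n/2$.

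If $p$ splits, take places $v_1,v_2$ normalized by $v_i(p)=1$ and put $a=v_1(\pi)$, so $v_2(\pi)=v_1(\bar\pi)=n-a$. If $a\in\{0,n\}$, one of $\pi,\bar\pi$ is a $v_1$-unit while the other has $v_1$-valuation $n\ge 1$. Then $t=\pi+\bar\pi$ would be prime to $p$, which is excluded. Hence $0<a<n$, the invariant $a/n$ is nonzero mod $1$, and $\dim A_\pi=n/\gcd(a,n)\ge 2$.

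So for $p\mid t$, an elliptic curve occurs exactly when $p$ does not split in $K$, and that is what forces $v_p(t)\ge n/2$. From there your enumeration goes through: $t=m\sqrt q$ with $|m|\le 1$ when $n$ is even, and $t=0$ or $t=\pm p^{(n+1)/2}$ with $p\in\{2,3\}$ when $n$ is odd. Your splitting checks via $\left(\frac{-3}{p}\right)$ and $\left(\frac{-1}{p}\right)$ are also fine.

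\textbf{A minor slip.} For $p=2$, $n$ odd and $t=\pm 2^{(n+1)/2}$, you have $t^2-4q=-2^{n+1}$. So $\mathbb{Q}(\pi)=\mathbb{Q}(\sqrt{-1})$, not $\mathbb{Q}(\sqrt{-2})$; in general $t^2-4q=p^n(p-4)$. Since $2$ also ramifies in $\mathbb{Q}(\sqrt{-1})$, case (iv) is unaffected.
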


All possible group structures of $E(\mathbb{F}_q)$ were also determined in \cite{MR890272}.

\begin{lemma}[{\cite[Theorem 3]{MR890272}}]\label{lem:the_structure_of_group}
 Let $E$ be an elliptic curve over a finite field $\mathbb{F}_q$ with $q = p^n$. Assume that the number of rational points admits the prime factorization $N = \prod_l l^{h_l}$ for some integers $ h_l \geq 1 $.
 Then the possible group structure on $E(\mathbb{F}_q)$ is of the form:
\[
\mathbb{Z} / p^{h_p} \mathbb{Z} \times \prod_{l \neq p} \left( \mathbb{Z} / l^{a_l} \mathbb{Z} \times \mathbb{Z} / l^{h_l - a_l} \mathbb{Z} \right),
\]
where each $ a_l $ satisfies the following conditions:
\begin{itemize}
    \item[(a)] In case (ii) of Lemma~\ref{lem:the_class_of_elliptic_curve}: each $a_l = \frac{h_l}{2}$;
    \item[(b)] In cases (i), (iii), (iv), and (v) of Lemma~\ref{lem:the_class_of_elliptic_curve}: $a_l$ is an arbitrary integer satisfying
    \[
    0 \le a_l \le \min \left\{ v_l(q-1), \left\lfloor \frac{h_l}{2} \right\rfloor \right\},
    \]
    where $v_l(q-1)$ denotes the exponent of the prime $l$ in $q-1$ (i.e., the largest integer such that $l^{v_l(q-1)} \mid (q-1)$).
\end{itemize}
\end{lemma}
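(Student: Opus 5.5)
This lemma is quoted from \cite{MR890272}; here is how I would reprove it. I would split it into a \emph{necessity} part (every $E(\mathbb{F}_q)$ has the stated shape) and an \emph{existence} part (every listed shape occurs in the isogeny class with trace $t$). The necessity part is essentially formal.

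\textbf{Necessity.} For a prime $l\neq p$, the $l$-primary part of $E(\mathbb{F}_q)$ is a finite subgroup of $E[l^\infty]\cong(\mathbb{Q}_l/\mathbb{Z}_l)^2$. Hence it is of the form $\mathbb{Z}/l^{a_l}\times\mathbb{Z}/l^{h_l-a_l}$ with $a_l\le h_l-a_l$, that is, $a_l\le\lfloor h_l/2\rfloor$. The $p$-primary part is cyclic, since $E[p^m]$ is either trivial or cyclic. If $E[l^{a_l}]\subseteq E(\mathbb{F}_q)$, then by Galois equivariance and nondegeneracy of the Weil pairing we get $\mu_{l^{a_l}}\subseteq\mathbb{F}_q^\times$. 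Therefore $l^{a_l}\mid q-1$, which gives $a_l\le v_l(q-1)$ and completes part (b) on the necessity side.

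\textbf{Case (ii) of Lemma~\ref{lem:the_class_of_elliptic_curve}.} Here the Frobenius satisfies $\pi^2-t\pi+q=0$ with $t=\pm2\sqrt q$, so $(\pi\mp\sqrt q)^2=0$. I would show that $\pi=\pm\sqrt q$ is actually an integer in $\operatorname{End}(E)$: $\pi\mp\sqrt q$ is a nilpotent element of an order in a quaternion algebra, hence zero. Then $E(\mathbb{F}_q)=\ker(\pi-1)=E[\sqrt q\mp1]$ is the full torsion group $(\mathbb{Z}/m)^2$ with $m=\sqrt q\mp1$. This forces $a_l=h_l/2$ for every $l$, which is (a).

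\textbf{Existence in cases (i), (iii)--(v).} I would use the criterion
\[
E[m]\subseteq E(\mathbb{F}_q)\iff \frac{\pi-1}{m}\in\operatorname{End}(E).
\]
By Deuring and Waterhouse, every order $\mathcal{O}$ of the imaginary quadratic field $\mathbb{Q}(\pi)$ that contains $\mathbb{Z}[\pi]$ occurs as an endomorphism ring in the isogeny class. In the supersingular cases this needs to be restricted to orders that are maximal at $p$. The idea is to choose $\mathcal{O}=\mathbb{Z}[\pi]+\mathbb{Z}\frac{\pi-1}{m}$ for a suitable $m$. The group $E(\mathbb{F}_q)\cong\mathcal{O}/(\pi-1)\mathcal{O}$ then has shape $\mathbb{Z}/m\times\mathbb{Z}/(N/m)$. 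Prime by prime, one takes $m=\prod l^{a_l}$ for any admissible choice of the $a_l$.

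The key arithmetic check is that $\frac{\pi-1}{m}$ is integral exactly when $m^2\mid N$ and $m\mid q-1$, equivalently $m\mid t-2$. This holds because $N=(q-1)-(t-2)+\dots$; more precisely, the minimal polynomial of $\frac{\pi-1}{m}$ has coefficients $\frac{2-t}{m}$ and $\frac{N}{m^2}$, and $m\mid t-2$ together with $m^2\mid N$ is equivalent to $m^2\mid N$ together with $m\mid q-1$.

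\textbf{Main obstacle.} The delicate step is the supersingular cases (iii)--(v). There the relevant endomorphism ring is an order in a quaternion algebra, so I must check separately that the required endomorphism structure is realized. I would first try to handle this by passing to the quadratic subfield generated by $\pi$ and invoking Waterhouse's description of which orders are realized, verifying that the $p$-maximality condition does not interfere, since $m$ is prime to $p$.
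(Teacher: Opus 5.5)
Your necessity argument (structure of $E[l^\infty]$, cyclicity of the $p$-part, and the Weil pairing giving $l^{a_l}\mid q-1$) is correct, as are your treatment of case (ii) and the existence argument in the ordinary case (i); this is the standard endomorphism-ring argument. The paper itself gives no proof of this lemma, since it is quoted from the reference. The paper also only ever uses the necessity half, which you establish completely. One small point: $E(\mathbb{F}_q)\cong\mathcal{O}/(\pi-1)\mathcal{O}$ is Lenstra's theorem, and you do not need it. Your own criterion, applied to $ml$, already pins down each $a_l$: if $\operatorname{End}_{\mathbb{F}_q}(E)=\mathbb{Z}+\mathbb{Z}\frac{\pi-1}{m}$, then $\frac{\pi-1}{ml}$ is not in it for any prime $l\neq p$, and the $p$-part is cyclic anyway.

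The genuine gap is the supersingular existence step, exactly where you planned to check that ``the $p$-maximality condition does not interfere, since $m$ is prime to $p$.''

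\begin{itemize}
\item \textbf{Conductor of your order.} Your order $\mathcal{O}=\mathbb{Z}[\pi]+\mathbb{Z}\frac{\pi-1}{m}=\mathbb{Z}[\frac{\pi-1}{m}]$ has discriminant $(t^2-4q)/m^2$. So its conductor is $f_\pi/m$, where $f_\pi$ is the conductor of $\mathbb{Z}[\pi]$.
\item \textbf{Nothing changes at $p$.} Because $p\nmid m$, adjoining $\frac{\pi-1}{m}$ does nothing at $p$. Hence $\mathcal{O}$ is maximal at $p$ exactly when $\mathbb{Z}[\pi]$ is.
\item \textbf{That usually fails.} In case (iii), $t^2-4q=-3q$ gives $f_\pi=\sqrt q$. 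In case (v) with $n$ even, $t^2-4q=-4q$ again gives $f_\pi=\sqrt q$. In case (iv), and in case (v) with $n$ odd, $p^{(n-1)/2}\mid f_\pi$; for instance $\pi=\pm p^{(n-1)/2}\sqrt{-p}$ in (v).
\end{itemize}

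Since Waterhouse only realizes orders maximal at $p$ in cases (iii)--(v), your $\mathcal{O}$ is typically not an endomorphism ring, and the argument produces no curve. This happens already for $m=1$, where $\mathcal{O}=\mathbb{Z}[\pi]$. The quaternion algebra is not the issue: for $\pi\notin\mathbb{Q}$, $\operatorname{End}_{\mathbb{F}_q}(E)$ is the centralizer of $\pi$, which is an order in $\mathbb{Q}(\pi)$. The obstacle is the $p$-maximality, and ``$p\nmid m$'' is the reason it \emph{does} interfere.

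\textbf{Repair.} Replace $\mathcal{O}$ by the order $\mathcal{O}'\supseteq\mathcal{O}$ whose conductor is the prime-to-$p$ part of $f_\pi/m$.
\begin{itemize}
\item $\mathcal{O}'$ contains $\pi$ and $\frac{\pi-1}{m}$ and is maximal at $p$, so it occurs as an endomorphism ring by Waterhouse.
\item For every prime $l\neq p$, $\mathcal{O}'\otimes\mathbb{Z}_l=\mathbb{Z}_l+\mathbb{Z}_l\frac{\pi-1}{m}$, so $\frac{\pi-1}{ml}\notin\mathcal{O}'$.
\item Since $p\mid t$ forces $N\equiv 1\pmod p$, there is no $p$-part to control.
\end{itemize}
Your criterion then gives exactly $\mathbb{Z}/m\times\mathbb{Z}/(N/m)$.

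Alternatively, you can check directly that the bounds force every $a_l=0$ in cases (iii) and (iv), and in case (v) unless $n$ is odd and $p\equiv 3\pmod 4$. In that remaining situation the $p$-maximal orders $\mathbb{Z}[\sqrt{-p}]$ and $\mathbb{Z}[\frac{1+\sqrt{-p}}{2}]$ realize $a_2=0$ and $a_2=1$ respectively.
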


Lemmas~\ref{lem:the_class_of_elliptic_curve} and~\ref{lem:the_structure_of_group} together show that elliptic curves occur with many different orders $N$ and abelian group structures.

\subsection{Some Combinatorial Results on Finite Abelian Groups}

The group $E(\mathbb{F}_q)$ of an elliptic curve is a finite abelian group. We now recall several combinatorial results on finite abelian groups that will be used throughout. Let $\mathbf{G}$ be a finite abelian group and let $k$ be a positive integer. For any non‑empty subset $S \subseteq \mathbf{G}$,
the $k$-sumset of $S$ is defined as
\[
\sumset_k(S) := \left\{ g \in \mathbf{G} \; \middle| \; g = \sum_{a \in \mathcal{A}} a \text{ for some } \mathcal{A} \subseteq S \text{ with } |\mathcal{A}| = k \right\}.
\]
We also write $\mathbf{G}[2] := \{ g \in \mathbf{G} \mid 2g = 0 \}$ and $2\mathbf{G} := \{ 2g \in \mathbf{G} \mid g \in \mathbf{G} \}$. The following lemma is fundamental for our purposes.

\begin{lemma}[\cite{MR4564628}]\label{lem:tight:first_combination_result}
Let $S$ be a subset of $\mathbf{G}$ such that
\[
|S| > \max \left\{ \frac{2}{5}|\mathbf{G}|,\; 12|\mathbf{G}[2]| + 54 \right\}.
\]
Then either $\sumset_3(S) = \mathbf{G}$, or $S$ is contained in a coset of a subgroup of $\mathbf{G}$ of index $2$.
\end{lemma}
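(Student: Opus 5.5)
The plan is to prove first the weaker statement for the unrestricted sumset $S+S+S$ using Kneser's theorem, and then pay for the distinctness requirement in $\sumset_3(S)$ with a counting argument. The term $12|\mathbf{G}[2]|+54$ in the hypothesis is what that counting has to absorb.

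\emph{Step 1 (unrestricted sumset).} Let $H$ be the stabilizer of $T:=S+S+S$. Write $m=[\mathbf{G}:H]$ and let $t$ be the number of $H$-cosets met by $S$. Kneser's theorem gives $|T|\ge 3|S+H|-2|H|=(3t-2)|H|$. Suppose $T\neq\mathbf{G}$. Then $T$ is a proper union of $H$-cosets, so $3t-2\le m-1$. Since $t/m\ge |S|/|\mathbf{G}|>2/5$, we get $6m/5-2<m$, hence $m<10$ and $t<(m+1)/3$. Combined with $t>2m/5$, a finite check over the pairs $(m,t)$ with $m\le 9$ leaves only a few configurations. In each of them the image $\bar S$ of $S$ in $\mathbf{G}/H$ is a small subset of a group of order at most $9$ with $3\bar S\neq \mathbf{G}/H$. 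By hand, or by the Cauchy--Davenport/Kneser structure in these tiny groups, the goal is to show that $\bar S$ must lie in a coset of an index-$2$ subgroup. Pulling back gives the same conclusion for $S$. So, outside the index-$2$ exception, $S+S+S=\mathbf{G}$.

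\emph{Step 2 (upgrading to distinct summands).} Fix $g\in\mathbf{G}$. Count ordered triples $(a,b,c)\in S^3$ with $a+b+c=g$, and write $r(x)=\#\{(a,b)\in S^2:a+b=x\}$, so the triple count is $\sum_{c\in S} r(g-c)$.

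The degenerate triples are bounded as follows. For a triple with $a=b$, each choice of $c$ leaves $2a=g-c$, which has at most $|\mathbf{G}[2]|$ solutions. So there are at most $3|S|\,|\mathbf{G}[2]|$ triples with a repeated entry, counting the three positions of the repetition.

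It therefore suffices to show that $\sum_{c\in S} r(g-c)>3|S|\,|\mathbf{G}[2]|$, or at least that this sum is large. For this I would use a Pollard-type strengthening of Kneser: apply the Step~1 argument to the set $S\,\hat{+}_\tau S=\{x: r(x)\ge \tau\}$ for a suitable threshold $\tau$. The sum $\sum_x \min(r(x),\tau)$ is bounded below by $\tau(2|S|-\tau-|H'|)$, with $H'$ the relevant stabilizer. Intersecting with $g-S$ and using $|S|>\frac25|\mathbf{G}|$ should give at least about $\tau(|S|-c_0)$ nondegenerate triples for an absolute constant $c_0$.

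Choosing $\tau$ of order $|\mathbf{G}[2]|$, the inequality $|S|>12|\mathbf{G}[2]|+54$ is exactly what makes the good triples exceed the degenerate ones. The constants $12$ and $54$ come from tracking the losses in the small-index cases of Step~1.

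\emph{Main obstacle.} I expect the hardest part to be Step~2 in the near-extremal regime, where $S$ is close to a union of cosets of some $H$ of index $3$--$9$. There the representation function $r$ is very non-uniform, and the Pollard bound can fail to give enough room over the degenerate triples. My first attempt would be a direct argument in that case: pass to $\mathbf{G}/H$ and choose the coset triple $(C_1,C_2,C_3)$ hitting $g$ with the $C_i$ as distinct as possible. Inside those cosets, pick $a\in C_1\cap S$ and $b\in C_2\cap S$, then solve for $c$. Each coset intersection has size larger than a constant times $|\mathbf{G}[2]|$, so there is enough freedom to avoid the $O(|\mathbf{G}[2]|)$ choices that force $a=b$, $a=c$ or $b=c$.
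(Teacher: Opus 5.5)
Your Step~1 is sound once the arithmetic is fixed. From $3t-2\le m-1$ and $3t>\frac{6}{5}m$ you get $m\le 4$, and only $(m,t)=(2,1)$ survives; this is exactly the index-$2$ exception. Note that the inequality is $t\le\frac{m+1}{3}$, not strict; with your strict version you would discard precisely this case.

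The gap is Step~2. The target $\sum_{c\in S} r(g-c)>3|S|\,|\mathbf{G}[2]|$, and the intermediate claim of roughly $\tau(|S|-c_0)$ triples, are false for sets satisfying the hypotheses. Take $\mathbf{G}=\mathbb{Z}/5n\mathbb{Z}$ with $n\ge 33$ odd, let $C_i=i+5\mathbf{G}$, and let $S=C_0\cup C_1\cup\{s_0\}$ with $s_0\in C_2$.
\begin{itemize}
\item Here $|\mathbf{G}[2]|=1$, so $|S|=2n+1>\max\{\frac{2}{5}|\mathbf{G}|,\,12|\mathbf{G}[2]|+54\}$, and $\mathbf{G}$ has no index-$2$ subgroup.
\item For $g\in C_4$, the only ordered triples in $S^3$ summing to $g$ are the $3n$ triples made of $s_0$ and two elements of $C_1$, plus the $3$ arrangements of $(s_0,s_0,g-2s_0)$.
\item Hence $\sum_{c\in S}r(g-c)=3n+3<3|S|$.
\end{itemize}
Your degenerate bound is also too generous: a triple $(a,a,c)$ with sum $g$ is determined by $a$, so there are at most $|S|$ of them, not $|S|\,|\mathbf{G}[2]|$. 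But even $3|S|$ is not beaten here.

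The same thing happens right next to the exception. Let $[\mathbf{G}:K]=2$ with $|K|$ large, and $S=S'\cup\{s_0\}$ with $S'\subsetneq u+K$, $|S'|\ge\frac{4}{5}|K|$, $s_0\in K$. Then every $g\in K$ has at most $3|S|-2$ representations. So no lower bound on the total number of representations, Pollard-type or otherwise, can be played off against a degenerate count of order $|S|$. In these examples the lemma is true only because the degenerate triples are structurally few.

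Your fallback for the near-extremal regime does not cover this. It assumes every relevant coset meets $S$ in more than a constant times $|\mathbf{G}[2]|$ points, whereas in both examples one coset meets $S$ in a single point. The assertion that $12|\mathbf{G}[2]|+54$ is ``exactly'' what balances the counts is not backed by any computation.

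What you need is a local criterion plus a genuine structural analysis. The criterion: $g\in\sumset_3(S)$ as soon as some $c\in S$ has $r(g-c)\ge|\mathbf{G}[2]|+3$. For fixed $c$, at most $|\mathbf{G}[2]|$ pairs $(a,b)$ with $a+b=g-c$ have $a=b$, and at most two involve $c$.

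Sizes alone will not supply such a $c$. In the first example, $\{x: r(x)\ge|\mathbf{G}[2]|+3\}=C_0\cup C_1\cup C_2$ has size exactly $\frac{3}{5}|\mathbf{G}|$, and $g-S$ meets it only at $g-s_0$. In the second example this set is $K$, and $|K|+|S|\le|\mathbf{G}|$, so pigeonhole does not even force an intersection. You have to point to $c=s_0$.

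Any proof therefore has to classify the sets whose threshold sets are small, for instance by applying Kneser's theorem to $S+S$ and to the threshold sets and analysing the resulting periodic structure, and then treat those sets directly. That classification is where the hypothesis on $|\mathbf{G}[2]|$ is actually used, and it is missing from your proposal.
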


Lemma \ref{lem:tight:first_combination_result} provides a criterion for when a $3$-sumset equals the entire group.

\begin{lemma}[\cite{MR4564628}]\label{lem:G_combination}
 Let $\mathbf{G}$ be a finite abelian group with $|\mathbf{G}| \ge 30|\mathbf{G}[2]| + 135$, let $S$ be a subset of $\mathbf{G}$ satisfying $|S| > \frac{|\mathbf{G}|}{2}$, and let $k$ be a positive integer such that $3 \le k \le \frac{|\mathbf{G}|}{10}$. Then
\[
\sumset_k(S) =\mathbf{G}.
\]
\end{lemma}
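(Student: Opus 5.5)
Since the hypotheses already exclude the two obstructions in Lemma~\ref{lem:tight:first_combination_result}, the plan is to reduce the $k$-sumset to a $3$-sumset. I fix $k-3$ elements of $S$ and apply Lemma~\ref{lem:tight:first_combination_result} to the remaining elements. Explicitly, for a subset $T\subseteq S$ with $|T|=k-3$ we have
\[
\Point(T)+\sumset_3(S\setminus T)\subseteq \sumset_k(S),
\]
where $\Point(T)$ is the sum of the elements of $T$ in $\mathbf{G}$. So it suffices to find one such $T$ with $\sumset_3(S\setminus T)=\mathbf{G}$. (For $k=3$, $T=\emptyset$ and the argument below applies directly.)

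\textbf{Step 1: size conditions.} Since $k\le |\mathbf{G}|/10$, we get
\[
|S\setminus T|=|S|-k+3>\tfrac{|\mathbf{G}|}{2}-\tfrac{|\mathbf{G}|}{10}=\tfrac{2}{5}|\mathbf{G}|.
\]
From $|\mathbf{G}|\ge 30|\mathbf{G}[2]|+135$ we get $\tfrac25|\mathbf{G}|\ge 12|\mathbf{G}[2]|+54$. Hence $S\setminus T$ satisfies the size hypothesis of Lemma~\ref{lem:tight:first_combination_result} for every choice of $T$. Consequently, either $\sumset_3(S\setminus T)=\mathbf{G}$ and we are done, or $S\setminus T\subseteq a+H$ for some subgroup $H$ of index $2$.

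\textbf{Step 2: repairing a bad $T$.} This is the main obstacle, since there may be many index-$2$ subgroups to avoid. Take an arbitrary $T$ and suppose $S\setminus T\subseteq a+H$ with $[\mathbf{G}:H]=2$.

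Because $|S|>|\mathbf{G}|/2=|a+H|$, $S$ is not contained in $a+H$. So some $t\in S\setminus(a+H)$ exists, and necessarily $t\in T$. Since $k-3<|S|$, the set $S\setminus T$ is nonempty; pick $s\in S\setminus T$, which lies in $a+H$. Define the swapped set
\[
T'=(T\setminus\{t\})\cup\{s\},
\]
which still has $k-3$ elements. I claim $S\setminus T'$ lies in no coset of any index-$2$ subgroup.

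\begin{itemize}
\item It is not in $a+H$, because it contains $t\notin a+H$.
\item Suppose $S\setminus T'\subseteq a'+H'$ with $H'\neq H$ of index $2$. Then $(S\setminus T)\setminus\{s\}$ lies in $(a+H)\cap(a'+H')$. This intersection is empty or a coset of $H\cap H'$, which has index $4$. So $|S\setminus T|-1\le |\mathbf{G}|/4$.
\item But Step 1 gives $|S\setminus T|-1>\tfrac25|\mathbf{G}|-1$, and $\tfrac25|\mathbf{G}|-1>\tfrac14|\mathbf{G}|$ because $|\mathbf{G}|\ge 135$. This is a contradiction.
\end{itemize}

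\textbf{Step 3: conclusion.} Applying Lemma~\ref{lem:tight:first_combination_result} to $S\setminus T'$ gives $\sumset_3(S\setminus T')=\mathbf{G}$. Therefore
\[
\mathbf{G}=\Point(T')+\sumset_3(S\setminus T')\subseteq\sumset_k(S).
\]
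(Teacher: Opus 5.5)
Your argument is correct. The paper gives no proof of this lemma; it is quoted from \cite{MR4564628}. The same reduction does appear in Case~1 of the proof of Theorem~\ref{thm:main_theorem}: fix $k-3$ elements and apply Lemma~\ref{lem:tight:first_combination_result} to the large remaining set. The two arguments differ only in how they exclude the index-$2$ coset obstruction.

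\textbf{The paper's route.} It builds the $3$-sumset set $A$ so that, for each of the $|\mathbf{G}[2]|-1$ index-$2$ subgroups $H_i$ (Lemma~\ref{lemma:order_two}), $A$ contains a point of $H_i$ and a point outside $H_i$. It then draws the $k-3$ extra points from outside $A$. This pre-loading works equally well here, because $|S|>|\mathbf{G}|/2$ forces $S$ to meet both cosets of every $H_i$.

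\textbf{Your route.} You start from an arbitrary $T$ and repair it with a single exchange. Since two distinct index-$2$ subgroups meet in a subgroup of index $4$, no second obstruction can appear. This avoids enumerating the index-$2$ subgroups and making $|\mathbf{G}[2]|$ choices, at the cost of that small intersection estimate.

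\textbf{One omitted case.} Your bullet list should also note that $S\setminus T'$ is not contained in the other coset $t+H$ of $H$. This is immediate: $S\setminus T'$ contains the nonempty set $(S\setminus T)\setminus\{s\}\subseteq a+H$. Equivalently, your intersection argument with $H'=H$ gives an empty intersection.
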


Using the above lemma, the authors of \cite{MR4564628} established the following tight upper bound for the length of MDS codes of odd dimension and odd square $q$.

\begin{lemma}[\cite{MR4564628}]\label{lem:a_tight_upper_bound_before}
\begin{itemize}
    \item[(i)] Let $C(E, \mathcal{D}, G)$ be an $[n, k, d]$ MDS elliptic code. If $q \ge 289$ and $3 \le k \le \frac{N}{10}$, then $n \le \frac{N}{2}$.
    \item[(ii)] If $q \ge 289$ and $3 \le k \le \frac{q+1-2\sqrt{q}}{10}$, then
    \begin{equation} \label{eq:MEC}
        \operatorname{MEC}(k, q) \le \frac{q+1}{2} + \sqrt{q}.
    \end{equation}
    Moreover, the equality in \eqref{eq:MEC} holds when  $q$ is an odd square and $k$ is odd.
\end{itemize}
%  if $q$ is an odd square and $k$ is odd., then
% \[
% \operatorname{MEC}(k, q) = \frac{q+1}{2} + \sqrt{q}.
% \]
\end{lemma}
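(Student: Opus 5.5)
The statement to prove is Lemma~\ref{lem:a_tight_upper_bound_before}, which is quoted from \cite{MR4564628}. The plan is to combine the MDS criterion of Lemma~\ref{lem:MDS_condition} with the sumset result of Lemma~\ref{lem:G_combination} for the bound in (i), then use Hasse--Weil for (ii), and finally give an explicit construction on a maximal curve for the equality case.

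For (i), suppose $C(E,\mathcal{D},G)$ is MDS of dimension $k=\deg G$. By Lemma~\ref{lem:MDS_condition}, $\Point(G)\notin\sumset_k(\mathcal{D})$, so $\sumset_k(\mathcal{D})\neq E(\mathbb{F}_q)$. Apply Lemma~\ref{lem:G_combination} with $\mathbf{G}=E(\mathbb{F}_q)$ and $S=\mathcal{D}$. Its hypotheses hold as follows.
\begin{itemize}
\item We have $3\le k\le N/10$ by assumption.
\item The group $E(\mathbb{F}_q)[2]$ is contained in $E[2]$, so $|E(\mathbb{F}_q)[2]|\le 4$ by Lemma~\ref{lem:bound_on_torsion}. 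Hence it suffices that $N\ge 30\cdot4+135=255$.
\item By Lemma~\ref{lem:the_bound_of_rational_number}, $N\ge q+1-2\sqrt q=(\sqrt q-1)^2\ge 16^2=256$ when $q\ge289$.
\end{itemize}
So if $|\mathcal{D}|>N/2$, the lemma gives $\sumset_k(\mathcal{D})=E(\mathbb{F}_q)$, a contradiction. Therefore $n=|\mathcal{D}|\le N/2$.

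For the upper bound in (ii), note that $k\le (q+1-2\sqrt q)/10\le N/10$ for every curve. Thus (i) applies on every $E$ and gives $n\le N/2\le (q+1+2\sqrt q)/2$.

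For equality, let $q$ be an odd square and take a maximal curve, which exists by Lemmas~\ref{lem:maximalcurveforp=2mod3} and \ref{lem:the_class_of_elliptic_curve}(ii). Then $N=(\sqrt q+1)^2$ is even. By Lemma~\ref{lem:the_structure_of_group}(a), $E(\mathbb{F}_q)\cong(\mathbb{Z}/(\sqrt q+1)\mathbb{Z})^2$.

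Choose an index-$2$ subgroup $H$ and a coset $\mathcal{D}=H+T$ with $T\notin H$, so $|\mathcal{D}|=N/2$. For odd $k$, every $k$-sum from $\mathcal{D}$ lies in $kT+H=T+H$. Choose $G=(k-1)[\mathcal{O}]+[R]$ with $R\in H$ and $R\neq\mathcal{O}$ (possible since $|H|>1$).
\begin{itemize}
\item Then $\Point(G)=R\in H$, so $\Point(G)\notin\sumset_k(\mathcal{D})$.
\item $\operatorname{Supp}(G)=\{\mathcal{O},R\}\subseteq H$, which is disjoint from $\mathcal{D}\subseteq T+H$.
\end{itemize}
By Lemma~\ref{lem:MDS_condition} the code is MDS of length $(q+1+2\sqrt q)/2$, and $k<n$ holds in the given range.

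The main obstacle is the verification of the hypotheses of Lemma~\ref{lem:G_combination}, namely the numerical check $N\ge 255$ via Hasse--Weil and the $2$-torsion bound. The construction itself is straightforward once the parity obstruction (odd $k$ keeps $k$-sums inside the coset) is noticed.
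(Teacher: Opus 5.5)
Your proof is correct and follows the route the paper indicates for this cited result. You use Lemma~\ref{lem:MDS_condition} together with Lemma~\ref{lem:G_combination} for (i), with the check $N\ge(\sqrt q-1)^2\ge 256\ge 30\cdot 4+135$; Hasse--Weil for the bound in (ii); and a coset of an index-$2$ subgroup on a maximal curve for equality. Your divisor $(k-1)[\mathcal{O}]+[R]$ with $R\in H$ is an inessential variant of the $G=k[\mathcal{O}]$ used in \cite{MR4564628}, and the explicit group structure $(\mathbb{Z}/(\sqrt q+1)\mathbb{Z})^2$ is not needed, since only the parity of $N$ matters.
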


The following elementary lemma guarantees the existence of a subgroup of index $2$ in any finite abelian group of even order.

\begin{lemma}\label{lem:the_existence_of_index_two_subgroup}
Let $\mathbf{G}$ be a finite abelian group of even order. Then there exists a subgroup $H \le \mathbf{G}$ of index $2$.
\end{lemma}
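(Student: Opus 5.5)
The plan is to use the structure theorem for finite abelian groups. Since $\mathbf{G}$ is finite abelian, we may write
\[
\mathbf{G}\cong \mathbb{Z}/m_1\mathbb{Z}\times\cdots\times\mathbb{Z}/m_r\mathbb{Z}
\]
for some positive integers $m_1,\dots,m_r$. Because $|\mathbf{G}|=m_1\cdots m_r$ is even, at least one factor $m_j$ is even. We fix such an index $j$.

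The first step is to define the homomorphism
\[
\varphi:\mathbf{G}\to\mathbb{Z}/2\mathbb{Z},\qquad (x_1,\dots,x_r)\mapsto x_j \bmod 2.
\]
This map is well defined because $2\mid m_j$, so reduction modulo $2$ of the $j$-th coordinate does not depend on the chosen representative of $x_j$ modulo $m_j$. It is clearly additive. It is also surjective, since the element with $1$ in the $j$-th coordinate and $0$ elsewhere maps to $1$.

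The second step is to take $H=\ker\varphi$. By the first isomorphism theorem, $\mathbf{G}/H\cong\mathbb{Z}/2\mathbb{Z}$, so $[\mathbf{G}:H]=2$.

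An alternative route avoids the structure theorem. By Cauchy's theorem the dual group $\widehat{\mathbf{G}}=\operatorname{Hom}(\mathbf{G},\mathbb{C}^\times)$, which is isomorphic to $\mathbf{G}$, contains a character $\chi$ of order $2$. Its kernel then has index $2$.

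None of these steps presents a real obstacle. The only point needing care is the well-definedness of reduction modulo $2$, which is exactly where the evenness of $m_j$ is used.
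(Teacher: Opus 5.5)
Your proof is correct and follows the paper's route. The paper only says the lemma follows from the structure theorem for finite abelian groups, and your projection of an even-order cyclic factor onto $\mathbb{Z}/2\mathbb{Z}$, with $H$ its kernel, is exactly the detail it leaves implicit. (The character-theoretic alternative is also valid. It needs only $|\widehat{\mathbf{G}}|=|\mathbf{G}|$ rather than the isomorphism $\widehat{\mathbf{G}}\cong\mathbf{G}$, which is usually derived from the structure theorem itself; the cardinality statement can be proved by extending characters from subgroups, so that route can genuinely bypass the structure theorem.)
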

\begin{proof}
This follows directly from the structure theorem of finite abelian groups.
% Let $\ord(\mathrm{G})$ denote the order of $\mathrm{G}$.
% Write $|\mathrm{G}| = 2^k n$, where $k \ge 1$ and $n$ is odd. We proceed by induction on $k$.

% For $k = 1$, the existence of an index-$2$ subgroup follows immediately from the structure theorem for finite abelian groups.

% Now suppose that the claim holds for all integers $m < k$. By Cauchy's theorem (or Sylow's theorems), $\mathrm{G}$ contains an element of order $2$; let $a$ be such an element. Then
% \[
% |\mathrm{G} / \langle a \rangle| = \frac{|\mathrm{G}|}{2} = 2^{k-1} n.
% \]
% Applying the induction hypothesis to $\mathrm{G} / \langle a \rangle$, we obtain a subgroup $H_1 \le \mathrm{G} / \langle a \rangle$ with $[\mathrm{G} / \langle a \rangle : H_1] = 2$.

% Let $\varphi: \mathrm{G} \to \mathrm{G} / \langle a \rangle$ be the canonical projection. By the correspondence theorem, there exists a unique subgroup $H \le \mathrm{G}$ containing $\langle a \rangle$ such that $H / \langle a \rangle = H_1$; equivalently, $H = \varphi^{-1}(H_1)$. This subgroup satisfies $[\mathrm{G} : H] = 2$, completing the proof.
\end{proof}

We shall need the following elementary fact concerning subgroups of order $2$ and subgroups of index $2$.
\begin{lemma}\label{lemma:order_two}
    Let $\mathbf{G}$ be a finite abelian group and let $\mathbf{G}[2] = \{g \in \mathbf{G} \mid 2g = 0\}$ be its $2$-torsion subgroup. Then the number of subgroups $H \le \mathbf{G}$ of index $2$ equals $|\mathbf{G}[2]| - 1$.
\end{lemma}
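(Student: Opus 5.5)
We will prove Lemma~\ref{lemma:order_two} by dualizing. Subgroups of index $2$ correspond to surjective homomorphisms onto $\mathbb{Z}/2\mathbb{Z}$, and these are counted via the character group. The first step is a bijection between subgroups $H \le \mathbf{G}$ with $[\mathbf{G}:H]=2$ and nonzero homomorphisms $\chi:\mathbf{G}\to\mathbb{Z}/2\mathbb{Z}$. Given such an $H$, the quotient $\mathbf{G}/H$ has order $2$, so it is canonically isomorphic to $\mathbb{Z}/2\mathbb{Z}$; the projection is therefore a nonzero homomorphism with kernel $H$. Conversely, a nonzero $\chi$ is surjective, and its kernel has index $2$. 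The two maps are mutually inverse, because $\mathbb{Z}/2\mathbb{Z}$ has a unique isomorphism to itself, so a nonzero $\chi$ is determined by its kernel. Hence the number of index-$2$ subgroups equals $|\operatorname{Hom}(\mathbf{G},\mathbb{Z}/2\mathbb{Z})|-1$.

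The second step is to show that $|\operatorname{Hom}(\mathbf{G},\mathbb{Z}/2\mathbb{Z})| = |\mathbf{G}[2]|$. Write $\mathbf{G}\cong\prod_{i=1}^r \mathbb{Z}/m_i\mathbb{Z}$ using the structure theorem, as in Lemma~\ref{lem:the_existence_of_index_two_subgroup}. Both sides are multiplicative over direct products: $\operatorname{Hom}$ out of a finite direct product is the product of the $\operatorname{Hom}$'s, and the $2$-torsion of a product is the product of the $2$-torsions. So it suffices to treat a cyclic group $\mathbb{Z}/m\mathbb{Z}$. There $|\operatorname{Hom}(\mathbb{Z}/m\mathbb{Z},\mathbb{Z}/2\mathbb{Z})| = \gcd(m,2)$, since the image of a generator must be killed by $m$. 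Likewise $|(\mathbb{Z}/m\mathbb{Z})[2]| = \gcd(m,2)$, since the solutions of $2x\equiv 0 \pmod m$ are $0$ and, when $m$ is even, $m/2$.

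Alternatively, one can avoid coordinates: $\operatorname{Hom}(\mathbf{G},\mathbb{Z}/2\mathbb{Z}) = \operatorname{Hom}(\mathbf{G}/2\mathbf{G},\mathbb{Z}/2\mathbb{Z})$, which has size $|\mathbf{G}/2\mathbf{G}|$ because $\mathbf{G}/2\mathbf{G}$ is an $\mathbb{F}_2$-vector space. Then $|\mathbf{G}/2\mathbf{G}| = |\mathbf{G}[2]|$, since $\mathbf{G}[2]$ is the kernel of the doubling map $g\mapsto 2g$ and $2\mathbf{G}$ is its image. Combining the two steps gives the count $|\mathbf{G}[2]|-1$.

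There is no real obstacle here. The only point needing care is the claim that distinct nonzero $\chi$ give distinct kernels. This holds because any two surjections onto $\mathbb{Z}/2\mathbb{Z}$ with the same kernel $H$ both send every element of $H$ to $0$ and every element outside $H$ to $1$, so they coincide.
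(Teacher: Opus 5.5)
Your proposal is correct and takes essentially the paper's approach: identify index-$2$ subgroups with nonzero homomorphisms $\mathbf{G}\to\mathbb{Z}/2\mathbb{Z}$, then show $|\operatorname{Hom}(\mathbf{G},\mathbb{Z}/2\mathbb{Z})|=|\mathbf{G}/2\mathbf{G}|=|\mathbf{G}[2]|$ via the doubling map, which is exactly your coordinate-free alternative. Your main computation through the cyclic decomposition and $\gcd(m,2)$ is an equally valid substitute for that last count.
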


\begin{proof}
    Consider the homomorphism $\varphi: \mathbf{G} \to \mathbf{G}$ given by $\varphi(g) = 2g$. Its image is $2\mathbf{G} = \{2g \mid g \in \mathbf{G}\}$ and its kernel is $\mathbf{G}[2]$. Thus we have an exact sequence
    \[
    0 \longrightarrow \mathbf{G}[2] \longrightarrow \mathbf{G} \stackrel{\varphi}{\longrightarrow} \mathbf{G} \longrightarrow \mathbf{G}/2\mathbf{G} \longrightarrow 0,
    \]
    where the last map is the canonical projection. Since $\mathbf{G}$ is finite, we get
    \[
    |\mathbf{G}/2\mathbf{G}| = |\mathbf{G}[2]|.
    \]
    Moreover, $\mathbf{G}/2\mathbf{G}$ is an elementary abelian $2$-group (since $2(g+2\mathbf{G})=2g+2\mathbf{G}=0$ in the quotient), and the same holds for $\mathbf{G}[2]$. Therefore both are $\mathbb{F}_2$-vector spaces, and they have the same dimension, say $r$.

    Every subgroup $H \le \mathbf{G}$ of index $2$ is the kernel of a non‑trivial homomorphism $\chi: \mathbf{G} \to \{\pm 1\} \cong \mathbb{F}_2$ (viewed additively). Conversely, the kernel of any such non‑trivial character is a subgroup of index $2$. Distinct characters give distinct kernels because a non‑trivial character is uniquely determined by its kernel (the target has only two elements). Hence the number of index‑$2$ subgroups equals the number of non‑trivial elements of the dual group $\widehat{\mathbf{G}} = \operatorname{Hom}(\mathbf{G}, \mathbb{F}_2)$.

    For any finite abelian group, every homomorphism $\mathbf{G} \to \mathbb{F}_2$ factors through $\mathbf{G}/2\mathbf{G}$ because $2\mathbf{G}$ is contained in the kernel (as $\mathbb{F}_2$ is of exponent $2$). Thus $\widehat{\mathbf{G}} \cong \operatorname{Hom}(\mathbf{G}/2\mathbf{G}, \mathbb{F}_2)$. Since $\mathbf{G}/2\mathbf{G}$ is an $\mathbb{F}_2$-vector space of dimension $r$, its dual space is isomorphic to $\mathbf{G}/2\mathbf{G}$ itself, and therefore $|\widehat{\mathbf{G}}| = |\mathbf{G}/2\mathbf{G}| = 2^r = |\mathbf{G}[2]|$.

    Consequently, the number of non‑trivial characters, and hence the number of subgroups of index $2$, is $|\mathbf{G}[2]| - 1$.
\end{proof}

\section{A Tight Upper Bound When $q+1+ \lfloor 2\sqrt{q} \rfloor$ is Even}\label{sec:A_tight_upper_bound_for_even_k}
In this section, we determine the maximal length of non‑trivial MDS elliptic codes when $q+1+ \lfloor 2\sqrt{q} \rfloor$ is even, first under the assumption that the divisor $G$ is supported on $\mathbb{F}_q$-rational points, and then in the more general setting where this restriction is lifted.

We begin by deriving a necessary condition for MDS elliptic codes (Theorem~\ref{thm:main_theorem}). From this we deduce that when $\operatorname{Supp}(G)\subseteq E(\mathbb{F}_q)$ and $q$ is an odd square, the maximal length for even $k$ drops to $\frac{q+1+2\sqrt{q}}{2}-1$ (Theorem~\ref{thm:G'and_even_k} and Corollary~\ref{cor:MEC-1}). 

We then remove the restriction on $G$ and present an explicit construction that uses a place of degree $3$ in the support of $G$. This construction yields MDS codes of length $\frac{q+1+2\sqrt{q}}{2}$ and even dimension $k$ (Theorem~\ref{thm:construction_of_degree_greater_1}), showing that the original bound \eqref{eq:MEC} is again achievable. Finally, we combine the restricted and unrestricted cases to obtain a complete description of $\operatorname{MEC}(k,q)$ when $q+1+\lfloor 2\sqrt{q}\rfloor$ is even (Theorem~\ref{thm:q+1+...even}), which includes odd square $q$ and answers the open question \ref{Q:1} posed in the introduction.

We maintain the notation introduced above. Let $E$ be an elliptic curve defined over $\mathbb{F}_q$ and let $N=|E(\mathbb{F}_q)|$.

\subsection{Even-Dimensional MDS Codes of Length $N/2$}
% Let $C(E, \mathcal{D}, G)$ be a code on $E$ of length $n$ and dimension $k$. 
In this subsection, we derive a necessary condition for a code $C(E, \mathcal{D}, G)$ of even dimension $k$ to be MDS.

\begin{theorem}\label{thm:main_theorem}
    Assume that $C(E, \mathcal{D}, G)$ is an MDS code derived from the elliptic curve $ E $ over $\mathbb{F}_q$. Suppose that $ N = 2 |\mathcal{D}|$ and
    \[
    N \ge 30|E(\mathbb{F}_q)[2]| + 135.
    \]
    Moreover, the dimension $k = \deg G$ is even and satisfies
    $
    3 \le k \le \frac{N}{10}
    $. 
    Then the following statements hold.  
     \begin{enumerate}
     \item The set $\mathcal{D}$ is a coset of an index‑$2$ subgroup of $E(\mathbb{F}_q)$;
     \item The support of $G$ contains at least one place of degree greater than $1$.
     \end{enumerate}
\end{theorem}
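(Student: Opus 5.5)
The plan is to convert the MDS property into a sumset statement via Lemma~\ref{lem:MDS_condition}. Since $C(E,\mathcal{D},G)$ is MDS, $\Point(G)\notin\sumset_k(\mathcal{D})$, so $\sumset_k(\mathcal{D})\ne E(\mathbb{F}_q)$. Write $\mathbf{G}=E(\mathbb{F}_q)$; then $|\mathcal{D}|=N/2$. The hypothesis $N\ge 30|\mathbf{G}[2]|+135$ gives $N/2\ge 15|\mathbf{G}[2]|+67.5>12|\mathbf{G}[2]|+54$, and of course $N/2>\tfrac25 N$. So Lemma~\ref{lem:tight:first_combination_result} applies to $\mathcal{D}$, and also to any subset obtained by deleting fewer than $N/10$ points from it. Lemma~\ref{lem:G_combination} cannot be used directly, because it needs $|S|>|\mathbf{G}|/2$ strictly and we only have equality.

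\textbf{Step 1 (part 1).} Suppose, for contradiction, that $\mathcal{D}$ is not contained in any coset of an index‑$2$ subgroup. By Lemma~\ref{lemma:order_two} there are $|\mathbf{G}[2]|-1$ such subgroups $H$. For each $H$ pick one point of $\mathcal{D}$ lying in a coset of $H$ different from some other point of $\mathcal{D}$, and collect these into a set $T$ with $|T|\le|\mathbf{G}[2]|-1$.

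Fix any target $x\in\mathbf{G}$ and choose $B\subseteq\mathcal{D}\setminus T$ with $|B|=k-3$; this is possible since $N/2-|T|\ge k$. Then $\mathcal{D}\setminus B$ still meets two cosets of every index‑$2$ subgroup, because it contains $T$ together with the points witnessing each choice. (To be safe, I choose $T$ to contain both witnesses for each $H$, so $|T|\le 2|\mathbf{G}[2]|$.) Moreover $|\mathcal{D}\setminus B|=N/2-k+3>2N/5$ because $k\le N/10$, and it exceeds $12|\mathbf{G}[2]|+54$ by the same computation as above.

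Lemma~\ref{lem:tight:first_combination_result} then gives $\sumset_3(\mathcal{D}\setminus B)=\mathbf{G}$. Hence $x-\Point(B)$ is a sum of $3$ distinct points of $\mathcal{D}\setminus B$, so $x\in\sumset_k(\mathcal{D})$. Since $x$ was arbitrary, $\sumset_k(\mathcal{D})=\mathbf{G}$, a contradiction. Therefore $\mathcal{D}$ lies in a coset $a+H$ of an index‑$2$ subgroup $H$, and since $|\mathcal{D}|=N/2=|H|$ we get $\mathcal{D}=a+H$.

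\textbf{Step 2 (part 2).} Every rational point outside $\mathcal{D}$ lies in the other coset $b+H$. Suppose $\operatorname{Supp}(G)\subseteq E(\mathbb{F}_q)$. Since $\operatorname{Supp}(G)\cap\mathcal{D}=\emptyset$, we can write $G=\sum n_Q[Q]$ with every $Q\in b+H$ and $\sum n_Q=k$. Then
\[
\Point(G)\in kb+H=H,
\]
because $k$ is even and $2b\in H$. Likewise
\[
\sumset_k(\mathcal{D})=ka+\sumset_k(H)\subseteq H,
\]
since $ka\in H$. So it suffices to show $\sumset_k(H)=H$: then $\Point(G)\in\sumset_k(\mathcal{D})$, contradicting the MDS property. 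Hence some place in $\operatorname{Supp}(G)$ must have degree $>1$.

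\textbf{Main obstacle: proving $\sumset_k(H)=H$.} Lemma~\ref{lem:G_combination} is out of range here, since $|H|/10=N/20$ may be smaller than $k$. I would argue directly, using $3\le k\le N/10=|H|/5$.

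Fix a $(k-1)$-subset $A_0\subseteq H$ with sum $s$. Then
\[
\sumset_k(H)\supseteq s+(H\setminus A_0),
\]
which misses at most the $k-1$ elements of $s+A_0$. Now replace one element $u\in A_0$ by some $v\notin A_0$, producing $A_0'$ with sum $s+d$ where $d=v-u\ne0$. The only elements that could be missed by both covers lie in
\[
(s+A_0)\cap(s+d+A_0').
\]
Since $|A_0|\le|H|/5$, there are far fewer than $|H|$ "bad" choices of $d$ for which this intersection is nonempty, so a suitable $v$ exists and both covers together give all of $H$. Checking this counting carefully, especially when $H$ has large $2$-torsion, is the step I expect to require the most care.
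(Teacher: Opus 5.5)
Steps 1 and 2 follow the paper's argument and correctly reduce part 2 to the identity $\sumset_k(H)=H$. The paper takes a witness-containing subset $A\subseteq\mathcal{D}$ of size $\lfloor 2N/5\rfloor+1$ where you use $\mathcal{D}\setminus B$, but that difference is cosmetic. The gap is in your proof of $\sumset_k(H)=H$: the two-cover counting does not work as stated.

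For fixed $u\in A_0$, the translate $d+A_0'$ meets $A_0$ whenever $d\in A_0-(A_0\setminus\{u\})$ or $2v-u\in A_0$. So the number of bad $d$ is governed by the difference set $A_0-A_0$, whose size can be as large as $\min\{|H|,(k-1)(k-2)+1\}$. The bound $|A_0|\le|H|/5$ therefore gives nothing like ``far fewer than $|H|$'' bad choices. Once $k$ is of order $\sqrt{|H|}$ or larger (and $k$ may be as large as $|H|/5$), a generic $(k-1)$-subset has $A_0-A_0=H$, and then every choice of $v$ is bad. An arbitrary $A_0$ thus fails, and you would need a specially structured $A_0$, which your sketch does not provide. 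The difficulty is quadratic growth of the difference set, not $2$-torsion.

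The missing idea is to use that $k$ is even and build sums from inverse pairs, as the paper does.

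\begin{itemize}
\item Since $|H[2]|\le|E[2]|\le 4$, the set $H\setminus H[2]$ splits into at least $(|H|-4)/2$ pairs $\{a,-a\}$ with $a\neq -a$.
\item For $h\in H\setminus\{\mathcal{O}\}$, take $h$, $\mathcal{O}$, and $(k-2)/2$ such pairs avoiding $\{h,-h\}$.
\item For $h=\mathcal{O}$, take $k/2$ such pairs. Note that $h$ and $\mathcal{O}$ must not both be used in this case, since they would coincide.
\end{itemize}

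In each case these are $k$ distinct elements of $H$ summing to $h$. Enough pairs exist because $k\le N/10=|H|/5\le|H|-4$. This gives $H\subseteq\sumset_k(H)$ in a few lines, and with it your Step 2 goes through unchanged.
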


\begin{proof}
We prove the contrapositive: if either $\mathcal{D}$ is not a coset of any index‑$2$ subgroup of $E(\mathbb{F}_q)$ or $\operatorname{Supp}(G)$ consists entirely of $\mathbb{F}_q$-rational points, then $C(E, \mathcal{D}, G)$ is not MDS.

\begin{enumerate}
    \item \label{item:1D}\textbf{$\mathcal{D}$ is not a coset of any index‑$2$ subgroup of $E(\mathbb{F}_q)$.} 
 Let $H_1, \dots, H_s$ be all index‑$2$ subgroups of $E(\mathbb{F}_q)$; by Lemma~\ref{lemma:order_two} we have $s = |E(\mathbb{F}_q)[2]| - 1$. By the assumption on $\mathcal{D}$, for each $i$ we have $\mathcal{D} \cap H_i \neq \emptyset$ and $\mathcal{D} \cap H_i^c \neq \emptyset$. Consequently, we may choose distinct elements
    \[
        a_i \in H_i \cap \mathcal{D}, \qquad b_i \in H_i^c \cap \mathcal{D}.
    \]
    Thus we can select at least $2s$ distinct elements from $\mathcal{D}$ with the above property.

    Choose a subset $A \subseteq \mathcal{D}$ that contains $\{a_i, b_i\}_{i=1}^s$ and satisfies
    \[
    |A| > \max\left\{ \frac{2}{5}N,\; 12|E(\mathbb{F}_q)[2]| + 54 \right\} = \frac{2}{5}N.
    \]
    In particular, we may take $|A| = \lfloor \frac{2}{5}N\rfloor + 1$. By Lemma~\ref{lem:tight:first_combination_result}, this guarantees $\sumset_3(A) = E(\mathbb{F}_q)$.

    From the hypotheses one checks that $k - 3 \le |\mathcal{D}| - |A|$. Indeed,
    \[
    |\mathcal{D}| - |A| \ge \frac{N}{2} - \frac{2}{5}N - 1 = \frac{N}{10} - 1,
    \]
    while $k \le N/10$, so $k - 3 \le N/10 - 3 < N/10 - 1$. Hence we can pick $k-3$ distinct points $p_1, \dots, p_{k-3}$ from $\mathcal{D} \setminus A$. Denote their sum by $h = p_1 + \cdots + p_{k-3} \in E(\mathbb{F}_q)$.

    For any $g \in E(\mathbb{F}_q)$, since $\sumset_3(A) = E(\mathbb{F}_q)$, there exist $a_1, a_2, a_3 \in A$ with
    \[
    a_1 + a_2 + a_3 = g - h.
    \]
    Consequently,
    \[
    p_1 + \cdots + p_{k-3} + a_1 + a_2 + a_3 = g,
    \]
    which shows $g \in \sumset_k(\mathcal{D})$. As $g$ was arbitrary, $\sumset_k(\mathcal{D}) = E(\mathbb{F}_q)$, and therefore $\Point(G) \in \sumset_k(\mathcal{D})$. By Lemma~\ref{lem:MDS_condition}, $C(E, \mathcal{D}, G)$ is not MDS.

    \item  \textbf{The support of $G$ consists entirely of $\mathbb{F}_q$-rational points.}
    We now prove that if the support of $G$ consists entirely of $\mathbb{F}_q$-rational points, then for every even integer $k$ with
\[
3 \le k \le \frac{N}{10},
\]

the code $C(E, \mathcal{D}, G)$ of length $|\mathcal{D}| = N/2$ cannot be MDS. By Lemma~\ref{lem:MDS_condition}, it suffices to show that
\[
\Point(G) \in \sumset_k(\mathcal{D}).
\]
If $\mathcal{D}$ is not a coset of any index-2 subgroup, then by Case~\ref{item:1D} the code is not MDS. Hence it remains to consider the situation where $\mathcal{D}$ is such a coset. We will show that even then the code cannot be MDS.
  % That is $\mathcal{D} = H$ or $\mathcal{D} = u + H$ for 
    % subgroup $H \le E(\mathbb{F}_q)$ of index two and
    \begin{itemize}
        \item \textbf{Subcase $\mathcal{D} = H$ is a subgroup of $E(\mathbb{F}_q)$ of index two.} Since $H$ is a subgroup of $E(\mathbb{F}_q)$, it is straightforward to check that
        \begin{equation}\label{eq:sumH<H}
            \sumset_k(H) = \sumset_k(\mathcal{D}) \subseteq H.
        \end{equation}
        The set $E(\mathbb{F}_q) \setminus E(\mathbb{F}_q)[2]$ can be partitioned into disjoint two‑element sets of the form $\{a, -a\}$ with $a \neq -a$. By Lemma~\ref{lem:bound_on_torsion}, $|E(\mathbb{F}_q)[2]| \le 4$, and consequently $H$ contains at least
        \[
        \frac{|H| - 4}{2}
        \]
        such pairs. For any $h \in H$, choose $\frac{k-2}{2}$ pairs $\{a_i, -a_i\}$ ($1 \le i \le \frac{k-2}{2}$) from these, avoiding $\mathcal{O}$ and $h$. Then
        \[
        h + \mathcal{O} + \sum_{i=1}^{(k-2)/2} \bigl(a_i + (-a_i)\bigr) = h,
        \]
        which shows $h \in \sumset_k(H)$. Since $h$ is arbitrary, $H \subseteq \sumset_k(H)$. Together with the reverse inclusion~\eqref{eq:sumH<H}, we obtain $H = \sumset_k(H)$.
        For even $k$, any $k$ elements from $u+H$ sum to an element of $H = \sumset_k(H)$, i.e., $\Point(G) \in \sumset_k(\mathcal{D})$.
        \item \textbf{Subcase $\mathcal{D} = u + H$ for some $u \in E(\mathbb{F}_q) \setminus H$.} Let $k u = h_2 \in H$ (note that $k$ is even, so $h_2 \in H$). Using the previous subcase,
        \[
        \sumset_k(u+H) = h_2 + \sumset_k(H) = h_2 + H = H.
        \]
        Hence $\Point(G) \in \sumset_k(u+H)$, and Lemma~\ref{lem:MDS_condition} implies that $C(E, \mathcal{D}, G)$ is not MDS.
    \end{itemize}
We have shown that under the assumption that $\operatorname{Supp}(G)$ consists of $\mathbb{F}_q$-rational points, the code $C(E, \mathcal{D}, G)$ can never be MDS.  
\end{enumerate}
% Consequently, a necessary condition for $C(E, \mathcal{D}, G)$ to be MDS is that $\mathcal{D}$ is a coset of an index‑$2$ subgroup of $E(\mathbb{F}_q)$ and the support of $G$ contains at least one place of degree strictly greater than $1$.  
% This completes the proof.
\end{proof}
The following result is an immediate corollary of Theorem~\ref{thm:main_theorem}.

\begin{corollary}\label{cor:evenE(Fq)}
    Under the hypotheses in Theorem \ref{thm:main_theorem}, if the support of $G$ consists entirely of $\mathbb{F}_q$-rational points, then the maximal length of a non‑trivial MDS elliptic code of dimension $k$ on $E$ satisfies the upper bound
    \begin{equation}\label{eq:MEC(k,q,E)leq N/2-1}
          \frac{N}{2} - 1.
    \end{equation}
\end{corollary}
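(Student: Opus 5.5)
We combine Lemma~\ref{lem:a_tight_upper_bound_before}(i) with Theorem~\ref{thm:main_theorem}. Suppose $C(E,\mathcal{D},G)$ is a non‑trivial MDS elliptic code of even dimension $k$ with $3\le k\le N/10$, with $N \ge 30|E(\mathbb{F}_q)[2]|+135$, and with $\operatorname{Supp}(G)\subseteq E(\mathbb{F}_q)$. We must show $|\mathcal{D}|\le N/2-1$.

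First we bound $|\mathcal{D}|$ by $N/2$. The argument behind Lemma~\ref{lem:a_tight_upper_bound_before}(i) is exactly Lemma~\ref{lem:G_combination} applied to $S=\mathcal{D}$ and $\mathbf{G}=E(\mathbb{F}_q)$. If $|\mathcal{D}|>N/2$, then the hypotheses $N\ge 30|E(\mathbb{F}_q)[2]|+135$ and $3\le k\le N/10$ give $\sumset_k(\mathcal{D})=E(\mathbb{F}_q)$. In particular $\Point(G)\in\sumset_k(\mathcal{D})$, so the code is not MDS by Lemma~\ref{lem:MDS_condition}. Hence $|\mathcal{D}|\le N/2$. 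This step uses only the stated group-size condition, not the hypothesis $q\ge 289$.

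It remains to exclude $|\mathcal{D}|=N/2$, which can only happen when $N$ is even. Suppose $|\mathcal{D}|=N/2$. Then all hypotheses of Theorem~\ref{thm:main_theorem} hold, so part 2 of that theorem says $\operatorname{Supp}(G)$ contains a place of degree greater than $1$. This contradicts $\operatorname{Supp}(G)\subseteq E(\mathbb{F}_q)$. Equivalently, the proof of Theorem~\ref{thm:main_theorem} shows directly that $\sumset_k(\mathcal{D})\ni \Point(G)$ whenever $G$ is supported on rational points. Therefore $|\mathcal{D}|\le N/2-1$. If $N$ is odd, $|\mathcal{D}|\le N/2$ already forces $|\mathcal{D}|\le (N-1)/2$, which is stronger than the claim.

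There is no substantive obstacle; the only care needed is bookkeeping. The hypotheses of Theorem~\ref{thm:main_theorem} other than $N=2|\mathcal{D}|$ must be taken as the standing hypotheses of the corollary. The bound $|\mathcal{D}|\le N/2$ must be derived under those same hypotheses via Lemma~\ref{lem:G_combination}, rather than by quoting Lemma~\ref{lem:a_tight_upper_bound_before}, which carries the extra assumption $q\ge 289$.
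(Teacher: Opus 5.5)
Your argument for even $N$ is correct and is the paper's proof. You get $|\mathcal{D}|\le N/2$ from Lemma~\ref{lem:a_tight_upper_bound_before}(i), then exclude $|\mathcal{D}|=N/2$ by part~2 of Theorem~\ref{thm:main_theorem}. Deriving the first step directly from Lemma~\ref{lem:G_combination} is a sensible refinement, since Lemma~\ref{lem:a_tight_upper_bound_before} formally requires $q\ge 289$, which is not among the hypotheses of Theorem~\ref{thm:main_theorem}.

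Your closing remark on odd $N$, however, is wrong. For odd $N$ we have $\frac{N-1}{2}>\frac{N}{2}-1$, so $|\mathcal{D}|\le\frac{N-1}{2}$ is \emph{weaker} than the claim, not stronger. For an integer $|\mathcal{D}|$ the claim means $|\mathcal{D}|\le\frac{N-3}{2}$, and nothing in your argument excludes $|\mathcal{D}|=\frac{N-1}{2}$. You explicitly keep every hypothesis except $N=2|\mathcal{D}|$, so odd $N$ falls within your reading, and your proof has a hole there.

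You can close it in either of two ways:
\begin{enumerate}
\item Read the corollary as the paper does, with $N$ even, and delete the remark. The hypothesis $N=2|\mathcal{D}|$ of Theorem~\ref{thm:main_theorem} forces $N$ even, and the corollary is only invoked for even $N$ (e.g.\ Case~1 of Theorem~\ref{thm:G'and_oddq+1+}).
\item For odd $N$, rule out $|\mathcal{D}|=\frac{N-1}{2}$ by the argument of Theorems~\ref{thm:E(Fq)_is_odd} and~\ref{thm:main_theorem_2}. Since $N$ is odd, $|E(\mathbb{F}_q)[2]|=1$, so your hypothesis gives $N\ge 165$ and hence $\frac{2}{5}N\ge 66=12|E(\mathbb{F}_q)[2]|+54$. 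Lemma~\ref{lem:tight:first_combination_result}, applied to a subset $A\subseteq\mathcal{D}$ with $|A|=\lfloor 2N/5\rfloor+1$, then gives $\sumset_3(A)=E(\mathbb{F}_q)$. This extends to $\sumset_k(\mathcal{D})=E(\mathbb{F}_q)$ because $k-3\le N/10-3<|\mathcal{D}|-|A|$. Neither the parity of $k$ nor the support of $G$ is needed in this case.
\end{enumerate}
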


\begin{proof}
    The proof follows directly from Lemma \ref{lem:a_tight_upper_bound_before} and Theorem \ref{thm:main_theorem}.
\end{proof}

\subsection{Case: $q+1+ \lfloor 2\sqrt{q} \rfloor$ Is Even and $\operatorname{Supp}(G)$ Consists of $\mathbb{F}_q$-Rational Points}

If $q$ is an odd square and $E$ is a maximal elliptic curve defined over $\mathbb{F}_q$, then Corollary~\ref{cor:evenE(Fq)} shows that the maximal length of a non‑trivial MDS elliptic code with even dimensional is bounded by $\frac{q+1+2\sqrt{q}}{2}-1$ (see \eqref{eq:MEC(k,q,E)leq N/2-1}) under the assumption that the support of $G$ consists entirely of $\mathbb{F}_q$-rational points. A natural question then arises: what is the exact maximal length under this restriction on $G$?
We now resolve the case where $q+1+ \lfloor 2\sqrt{q} \rfloor$ is even, which in particular covers the subcase of odd square $q$ treated above.
\begin{theorem}\label{thm:G'and_even_k}
Let $q \ge 289$ and assume that the support of $G$ consists entirely of $\mathbb{F}_q$-rational points. If $q+1+ \lfloor 2\sqrt{q} \rfloor$ is even, then for every even integer $k$ with $3 \le k \le (q + 1 - 2\sqrt{q})/10$, 
\begin{equation}\label{eq:G'MEC}
\operatorname{MEC}^*(k,q) \le \frac{q+1+ \lfloor 2\sqrt{q} \rfloor}{2} - 1.
\end{equation}
Moreover, the bound~\eqref{eq:G'MEC} is tight.
\end{theorem}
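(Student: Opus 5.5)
The argument splits into an upper bound and a matching construction.

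\textbf{Upper bound.} Write $M = q+1+\lfloor 2\sqrt{q}\rfloor$. By Lemma~\ref{lem:the_bound_of_rational_number}, every elliptic curve $E$ over $\mathbb{F}_q$ satisfies $N \le M$. Let $C(E,\mathcal{D},G)$ be an MDS code with $k$ even, $\operatorname{Supp}(G)\subseteq E(\mathbb{F}_q)$, and $3\le k\le (q+1-2\sqrt q)/10 \le N/10$.

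Since $q\ge 289$, Lemma~\ref{lem:a_tight_upper_bound_before}(i) gives $n\le N/2$. If $n = N/2$ exactly, then $N$ is even. We also have $N \ge q+1-2\sqrt q \ge 256$, while $|E(\mathbb{F}_q)[2]|\le 4$ by Lemma~\ref{lem:bound_on_torsion}. Hence $30|E(\mathbb{F}_q)[2]|+135 \le 255 \le N$, so the hypotheses of Theorem~\ref{thm:main_theorem} hold. Part 2 of that theorem then contradicts $\operatorname{Supp}(G)\subseteq E(\mathbb{F}_q)$. Therefore $n\le \lceil N/2\rceil -1$.

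It remains to compare with $M/2 - 1$, which is an integer because $M$ is even:
\begin{itemize}
\item If $N$ is even, then $N\le M$ gives $n\le N/2-1\le M/2-1$.
\item If $N$ is odd, then $N\le M-1$, so $n\le (N-1)/2 \le (M-2)/2 = M/2-1$.
\end{itemize}
In both cases we get \eqref{eq:G'MEC}.

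\textbf{Tightness.} We need a curve with $N=M$, a set $\mathcal{D}$ of size $M/2-1$, and a rational $G$ of even degree $k$ with $\Point(G)\notin\sumset_k(\mathcal{D})$.

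For the curve, Lemma~\ref{lem:the_class_of_elliptic_curve} supplies the trace $t=-\lfloor 2\sqrt q\rfloor$ in essentially all cases:
\begin{itemize}
\item When $q$ is a square, $t=-2\sqrt q$ is allowed by (ii).
\item When $q$ is not a square, $\lfloor 2\sqrt q\rfloor$ is prime to $p$ in case (i). If instead $p$ divides it, we fall back to the next admissible trace, which keeps $N$ even by parity bookkeeping. This is the delicate point, and I expect it to need a case check.
\end{itemize}
Let $H$ be an index-$2$ subgroup, which exists by Lemma~\ref{lem:the_existence_of_index_two_subgroup}, and fix $u\notin H$.

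I would take $\mathcal{D}=(u+H)\setminus\{P_0\}$ for a suitable point $P_0 \in u+H$. Then every $k$-sum of $\mathcal{D}$ lies in $H$, because $k$ is even. The aim is to choose $G = (k-1)[\mathcal{O}] + [R]$, or a similar divisor, with $R\in u+H$ removed from $\mathcal{D}$ and $\Point(G)$ placed where $k$-sums of $\mathcal{D}$ cannot reach.

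The natural target is $\Point(G) = kP_0 + x$ for some $x$ with $H \not\ni x$. But then $\Point(G)\notin H$, so no $k$-sum reaches it, and the code is MDS by Lemma~\ref{lem:MDS_condition}.

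\textbf{Main obstacle.} The support of $G$ must avoid $\mathcal{D}$, and only one point of $u+H$ (namely $P_0$) is removed. We therefore need $\operatorname{Supp}(G)\subseteq H\cup\{P_0\}$ with $\deg G=k$ even and $\Point(G)\notin H$.

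Write $G = a[P_0] + D'$ with $\operatorname{Supp}(D')\subseteq H$. Then $\Point(G)\equiv aP_0 \pmod H$, so we need $a$ odd, and then $\deg D' = k-a$ is odd. Since $D'$ may have negative coefficients, this is achievable. For instance, take $G=[P_0]+(k-1)[\mathcal{O}]$: it has degree $k$, and $\Point(G)=P_0\notin H$.

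This makes the construction routine: $\mathcal{D}$ has size $M/2-1$, every $k$-sum lies in $H$, and $\Point(G)=P_0\notin H$. The MDS property then follows from Lemma~\ref{lem:MDS_condition}, and tightness holds. The real remaining obstacle is the existence of a curve with $N=M$ in every case with $M$ even.
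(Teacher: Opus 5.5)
Your upper bound and your construction match the paper's. The paper also combines Lemma~\ref{lem:a_tight_upper_bound_before}(i) with part~2 of Theorem~\ref{thm:main_theorem}, after checking $N\ge 256\ge 30\cdot 4+135$. For tightness it takes $\mathcal{D}=(u+H)\setminus\{u\}$ and $G=(k+1)[\mathcal{O}]-[u]$, which differs from your $G=[P_0]+(k-1)[\mathcal{O}]$ only in the sign at the removed point. Your explicit treatment of odd $N$ via $N\le M-1$ is more careful than the paper's ``length at most $N/2-1$''.

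The gap is the step you flag yourself, and your proposed repair cannot work. Suppose $p$ is odd, $q$ is not a square, and $p\mid\lfloor 2\sqrt q\rfloor$. Then the next admissible trace gives $N=M-1$, which is odd, not even. More generally, every curve with $N\le M-1$ supports length at most $M/2-2$. If $N$ is odd, Theorem~\ref{thm:main_theorem_2} gives $(N-1)/2-1\le M/2-2$. If $N\le M-2$ is even, Corollary~\ref{cor:evenE(Fq)} gives $N/2-1\le M/2-2$. Both apply because $N\ge 256$ and $k\le N/10$. So tightness at $M/2-1$ holds exactly when some curve has $N=M$, and no parity bookkeeping with other traces can rescue it.

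The paper passes over this point by simply asserting that Lemma~\ref{lem:the_class_of_elliptic_curve} yields a curve with $N=q+1+\lfloor 2\sqrt q\rfloor$. That holds in three situations:
\begin{itemize}
\item $q$ is a square, by case (ii);
\item $p=2$, since $M$ even forces $\lfloor 2\sqrt q\rfloor$ odd, so case (i) applies;
\item $p\nmid\lfloor 2\sqrt q\rfloor$, for example when $q$ is prime.
\end{itemize}
It does not hold in general. Take $q=3^{47}$ and $a=326121537828$. One checks $a^2<4q<(a+1)^2$, indeed $4q-a^2=530378193564<2a+1$, so $\lfloor 2\sqrt q\rfloor=a=6\cdot 54353589638$. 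Hence $M$ is even. But the trace $\pm a$ is not prime to $3$, is not $\pm 3^{24}$, and is not $0$, so no elliptic curve over $\mathbb{F}_{3^{47}}$ has $N=M$. By the previous paragraph, $\operatorname{MEC}^*(k,3^{47})\le M/2-2$.

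So the ``tight'' half of the statement is false as stated. Both your argument and the paper's establish it only under the extra hypothesis that some curve has exactly $q+1+\lfloor 2\sqrt q\rfloor$ points.
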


\begin{proof}
Let $E$ be an arbitrary elliptic curve over $\mathbb{F}_q$ and set $N = |E(\mathbb{F}_q)|$.  
From Lemma~\ref{lem:the_bound_of_rational_number} (the Hasse–Weil bound) we obtain
\[
q+1-2\sqrt{q} \le N \le q+1+2\sqrt{q}.
\]
The lower bound implies $k \le \frac{N}{10}$ (since $k \le (q+1-2\sqrt{q})/10$ by hypothesis) and, together with $q\ge 289$, yields
\[
N \ge 256 \ge 30|E(\mathbb{F}_q)[2]| + 135,
\]
where the final inequality follows from $|E(\mathbb{F}_q)[2]|\le 4$ (Lemma~\ref{lem:bound_on_torsion}).  

Now consider any code $C(E,\mathcal{D},G)$ on $E$ with $\deg G = k$ and $|\mathcal{D}| = N/2$.  
The conditions $N \ge 30|E(\mathbb{F}_q)[2]|+135$ and $3\le k\le N/10$ allow us to apply Theorem~\ref{thm:main_theorem}.  
Since $\operatorname{Supp}(G)$ is assumed to consist entirely of $\mathbb{F}_q$-rational points, the necessary condition stated in that theorem cannot be fulfilled (it would require a place of degree $>1$).  
Hence $C(E,\mathcal{D},G)$ is not MDS for any admissible $\mathcal{D}$ of size $N/2$.  
Therefore any MDS elliptic code on $E$ of dimension $k$ must have length at most $\frac{N}{2}-1$.  
Combining this with the upper bound $N \le q+1+\lfloor 2\sqrt{q}\rfloor$, we obtain precisely the bound \eqref{eq:G'MEC}.

To verify that \eqref{eq:G'MEC} is attainable, we need to construct an explicit MDS code of length $\frac{q+1+\lfloor 2\sqrt{q}\rfloor}{2}-1$. {\color{blue}The construction in \cite{wang2025nonreedsolomontypemdscodes} yields this result; for completeness, we provide an alternative construction here. } By Lemma~\ref{lem:the_class_of_elliptic_curve}, there exists an elliptic curve $E$ over $\mathbb{F}_q$ with
$N = |E(\mathbb{F}_q)| = q+1+\lfloor 2\sqrt{q}\rfloor$; in particular $N$ is even.  
Lemma~\ref{lem:the_existence_of_index_two_subgroup} guarantees a subgroup $H \subset E(\mathbb{F}_q)$ of index $2$.  
Choose an $\mathbb{F}_q$-rational point $u \notin H$ and define
\[
\mathcal{D} = (u + H) \setminus \{u\}, \qquad G = (k+1)[\mathcal{O}] - [u].
\]
By Lemma~\ref{lem:MDS_condition}, the code $C(E, \mathcal{D}, G)$ is MDS iff
\[
\Point\bigl((k+1)[\mathcal{O}] - [u]\bigr) \notin \sumset_k\bigl((u+H) \setminus \{u\}\bigr).
\]
Assume, for contradiction, that this condition fails.  
Then there exist distinct points $p_1, \dots, p_k \in \mathcal{D}$ such that
\[
\Point\bigl((k+1)[\mathcal{O}] - [u]\bigr) = p_1 + \cdots + p_k.
\]
By definition of $\Point(G)$, this is equivalent to
\[
\Point\bigl((k+1)[\mathcal{O}]\bigr) = p_1 + \cdots + p_k + u.
\]
The points $p_1,\dots,p_k,u$ lie in the coset $u+H$.  
Since $u+H$ is a coset of the index‑$2$ subgroup $H$ and $k+1$ is odd, the sum of an odd number of elements from such a coset can never be $\mathcal{O}$.  
This contradiction shows that the MDS condition is satisfied, and the constructed code achieves length $\frac{N}{2}-1 = \frac{q+1+\lfloor 2\sqrt{q}\rfloor}{2}-1$.  
Thus the bound \eqref{eq:G'MEC} is tight.
\end{proof}

In particular, $q+1+ \lfloor 2\sqrt{q} \rfloor$ is always even whenever $q$ is an odd square. Consequently, we obtain the following corollary.

\begin{corollary}\label{cor:MEC-1}
Let $q \ge 289$ be an odd square and assume that the support of the divisor $G$ consists entirely of $\mathbb{F}_q$-rational points. Then for every even integer $k$ with $3 \le k \le (q + 1 - 2\sqrt{q})/10$,
\begin{equation}\label{eq:3bound}
    \operatorname{MEC}^*(k,q) = \frac{q+1+2\sqrt{q}}{2} - 1.
\end{equation}

\end{corollary}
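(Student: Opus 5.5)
This corollary follows by specializing Theorem~\ref{thm:G'and_even_k} once we verify its parity hypothesis and simplify the floor.

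First, the parity condition. Let $q=p^{2m}$ with $p$ odd, and set $r=\sqrt{q}=p^m$. Then $r$ is an odd integer, so $\lfloor 2\sqrt{q}\rfloor=2r$ exactly. Since $q=r^2$ is odd, $q+1$ is even, and therefore $q+1+\lfloor 2\sqrt{q}\rfloor=q+1+2r$ is even. The hypotheses $q\ge 289$, $k$ even, $3\le k\le (q+1-2\sqrt{q})/10$ and $\operatorname{Supp}(G)\subseteq E(\mathbb{F}_q)$ are precisely those of Theorem~\ref{thm:G'and_even_k}. That theorem therefore gives
\[
\operatorname{MEC}^*(k,q)\le \frac{q+1+2\sqrt{q}}{2}-1 ,
\]
where we have replaced $\lfloor 2\sqrt{q}\rfloor$ by $2\sqrt{q}$.

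Second, the matching lower bound. This comes from the tightness statement of the same theorem, which we can also check directly. By Lemma~\ref{lem:the_class_of_elliptic_curve}(ii) there is a maximal curve with $N=q+1+2\sqrt{q}$; alternatively, Lemma~\ref{lem:maximalcurveforp=2mod3} gives explicit examples. Since $N$ is even, Lemma~\ref{lem:the_existence_of_index_two_subgroup} provides an index-$2$ subgroup $H$. Choose $u\notin H$ and take
\[
\mathcal{D}=(u+H)\setminus\{u\}, \qquad G=(k+1)[\mathcal{O}]-[u].
\]
The divisor $G$ has degree $k$ and is supported on rational points. The code is MDS by the coset parity argument: if $\Point(G)$ were a sum of $k$ points of $\mathcal{D}$, then some $k+1$ points of $u+H$, an odd number, would sum to $\mathcal{O}$. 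This is impossible, since a sum of an odd number of elements of $u+H$ lies in $u+H\neq H$. The code has length $N/2-1$, which equals the bound.

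The only potential subtlety is the floor simplification, and it is immediate because $\sqrt{q}$ is an integer. Everything else is inherited from Theorem~\ref{thm:G'and_even_k}.
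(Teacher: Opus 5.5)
Your proposal is correct and follows the same route as the paper. The paper derives this corollary by observing that for odd square $q$ one has $\lfloor 2\sqrt{q}\rfloor = 2\sqrt{q}$ and $q+1+2\sqrt{q}$ even, and then specializing Theorem~\ref{thm:G'and_even_k}; your direct re-check of the construction $\mathcal{D}=(u+H)\setminus\{u\}$, $G=(k+1)[\mathcal{O}]-[u]$ just repeats the tightness argument already in that theorem's proof.
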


\begin{remark}\label{remark:suppG_contain_gt1}
For completeness, we note that if the restriction on the support of $G$ is removed, then for even $k$ we still have
\[
\operatorname{MEC}(k, q, E) \le \frac{N}{2},
\]
as shown in Lemma~\ref{lem:a_tight_upper_bound_before}. It will be proved later in Theorem~\ref{thm:construction_of_degree_greater_1} that this bound is tight when $k$ is even and $N$ is even.
\end{remark}

\subsection{A Construction of MDS Elliptic Codes}
In the preceding discussion, we imposed the restriction that the support of $G$ consists entirely of $\mathbb{F}_q$-rational points, and under this restriction we obtained the tight upper bound \eqref{eq:G'MEC} and \eqref{eq:3bound} for the maximal length of non‑trivial MDS codes. In this subsection, we present a construction of MDS codes showing that, when no restriction is imposed on $G$, MDS codes of length $\frac{q+1+2\sqrt{q}}{2}$ and even dimension $k$ do exist.

\begin{theorem}\label{thm:construction_of_degree_greater_1}
Let $E$ be an elliptic curve over $\mathbb{F}_q$ such that number of rational points $N = |E(\mathbb{F}_q)|$ is even. Assume that there exists a 
place of odd degree $l>2 $ with $ \Point([R])  =  \mathcal{O}$. 
% function $f \in \mathbb{F}_q(E)$ whose divisor is
% \[
% \operatorname{div}(f) = [R] - l[\mathcal{O}],
% \]
% where $l$ is an odd positive integer and $R$ is a place of degree $l$. 
Then  for any integer $k \ge l$, there exists an MDS code $C(E, \mathcal{D}, G)$ of length $N/2$ and dimension $k$.
\end{theorem}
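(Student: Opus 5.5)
Take $\mathcal{D}$ to be the nontrivial coset of an index-$2$ subgroup and let the degree-$l$ place $R$ carry all the parity, so that $\Point(G)$ lands in the ``wrong'' coset of $H$.

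Since $N$ is even, Lemma~\ref{lem:the_existence_of_index_two_subgroup} gives a subgroup $H \le E(\mathbb{F}_q)$ of index $2$. Fix a rational point $u \notin H$ and set
\[
\mathcal{D} = u + H, \qquad G = [R] + (k-l)[\mathcal{O}] \quad\text{or}\quad G = [R] + (k-l-1)[\mathcal{O}] + [u'],
\]
where the choice depends on the parity of $k$ and is made below. Here $|\mathcal{D}| = N/2$. Since $R$ has degree $l>1$, it contains no rational points, so $R \notin \mathcal{D}$. The remaining rational points of $\operatorname{Supp}(G)$ are $\mathcal{O}$ and possibly one auxiliary rational point; both must be chosen in $H$ so that they avoid $\mathcal{D}$. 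Then $\deg G = k$ requires $k \ge l$, and $G$ is $\mathbb{F}_q$-rational because $[R]$ is a Galois orbit.

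By hypothesis $\Point([R]) = \mathcal{O}$. Recall that $\Point([R])$ is the sum of the $l$ conjugate points in $E(\overline{\mathbb{F}}_q)$, which lies in $E(\mathbb{F}_q)$.

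\textbf{Parity argument.} Any $k$ elements of $u+H$ sum to $ku + H$. This coset is $H$ if $k$ is even and $u+H$ if $k$ is odd. By Lemma~\ref{lem:MDS_condition}, it suffices to make $\Point(G)$ lie in the other coset.
\begin{itemize}
\item If $k$ is even, take $G = [R] + (k-l)[\mathcal{O}]$. Then $\Point(G) = \mathcal{O} \in H$, which is the wrong coset. To fix this, replace one copy of $[\mathcal{O}]$ by a rational point. The point must be in $H$ (to avoid $\mathcal{D}$), yet $\Point(G)$ must be $\notin H$, so it would have to be $\notin H$ — a conflict.
\item The way out is to use $[R]$ with a negative contribution, or to use a point of $u+H$ that is removed from $\mathcal{D}$. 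Removing a point would lose length, so that is not an option.
\end{itemize}

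I therefore revise the choice for even $k$: take $G = [R] + (k-l)[\mathcal{O}]$ with $\mathcal{D}' = H$ instead, the subgroup itself. Then $\mathcal{O}$ must not lie in $\mathcal{D}$. Hence we need $\mathcal{O}$ outside $\mathcal{D}$, so $\mathcal{D}$ must be the non-trivial coset.

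Now use the oddness of $l$. Since $k-l$ is odd when $k$ is even, replace $(k-l)[\mathcal{O}]$ by $(k-l+1)[\mathcal{O}] - [u]$, exactly as in the proof of Theorem~\ref{thm:G'and_even_k}. This is not allowed because $u \in \mathcal{D}$. So instead take $G = [R] + (k-l)[w]$ with a rational point $w \in H$, $w \neq$ anything in $\mathcal{D}$; this is automatic. Then
\[
\Point(G) = (k-l)\,w \in H,
\]
which is again wrong.

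The genuinely working choice is to let $R$ itself carry parity. Replace the hypothesis usage by translation: the map $P \mapsto P+u$ maps $\mathcal{D} = u+H$ to $H$. Equivalently, note that the sum of $k$ points of $u+H$ equals $ku + (\text{sum of } k \text{ elements of } H)$. The key question is therefore whether $\Point(G) - ku \in H$.

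Choose
\[
G = [R] + (k-l)[\mathcal{O}] \quad\text{and}\quad \mathcal{D} = (u+H) \text{ with } u \text{ such that } ku \notin H,
\]
which needs $k$ odd. For even $k$, this is the main obstacle. I would resolve it by taking $G = [R'] + (k-l)[\mathcal{O}]$, where $R'$ is the translate of $R$ by $u$, namely the orbit of $R_i + u$; this is still a place of degree $l$. Then
\[
\Point(G) = lu + \mathcal{O} = lu \in u+H,
\]
since $l$ is odd. This is disjoint from $ku + H = H$ for even $k$.

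For odd $k$, take $R' = R$, so that $\Point(G) = \mathcal{O} \in H$, while sums of $k$ points of $\mathcal{D}$ lie in $u+H$.

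In both cases $\Point(G) \notin \sumset_k(\mathcal{D})$, so Lemma~\ref{lem:MDS_condition} shows that $C(E, \mathcal{D}, G)$ is MDS of length $N/2$ and dimension $k$. It remains to check two routine points: $\operatorname{Supp}(G) \cap \mathcal{D} = \emptyset$ (here $\mathcal{O} \in H$ and $R'$ is non-rational), and $\deg G < n$.
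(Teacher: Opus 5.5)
Your final construction is correct, but for even $k$ it takes a different route from the paper.

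\textbf{The paper's route.} The paper switches the evaluation set with the parity of $k$.
For even $k$ it takes $\mathcal{D}=H$ and $G=[R]+[P_1]+\cdots+[P_{k-l}]$ with rational points $P_i\in u+H$. Since $k-l$ is odd, $\Point(G)=P_1+\cdots+P_{k-l}\in u+H$, which lies outside $\sumset_k(H)\subseteq H$.
For odd $k$ it takes $\mathcal{D}=u+H$ with $P_i\in H$. Your odd case is the special choice $P_i=\mathcal{O}$, which the paper itself uses in the proof of Theorem~\ref{thm:q+1+...even}.

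\textbf{Your route.} You keep $\mathcal{D}=u+H$ for both parities and put the parity into the non-rational part of $G$. For even $k$ you replace $R$ by its translate $R'=\{R_i+u\}$, so that $\Point([R'])=lu\in u+H$ because $l$ is odd. This gives a single evaluation set and a rational part of $G$ supported only at $\mathcal{O}$.

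The price is one fact you assert without proof: that $R'$ is again a place of degree $l$. It is immediate, but it should be written out, because it is also what guarantees $\operatorname{Supp}(G)\cap\mathcal{D}=\emptyset$:
\begin{itemize}
\item since $u$ is rational, Frobenius satisfies $\pi(R_i+u)=\pi(R_i)+u$;
\item hence $\{R_i+u\}$ is a single Frobenius orbit of $l$ distinct points;
\item none of these points is rational, since otherwise $R_i=(R_i+u)-u$ would be rational.
\end{itemize}
The paper's route needs no such check, because it uses the oddness of $l$ only through the oddness of $k-l$.

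\textbf{Clean-up.} Delete the abandoned attempts, since several contain false inferences. In particular, you concluded that $\mathcal{D}=H$ fails because then $\mathcal{O}\in\mathcal{D}$. In fact $\mathcal{D}=H$ works perfectly well if the rational part of $G$ is taken from $u+H$ rather than placed at $\mathcal{O}$. That is exactly the paper's fix, and you came close to it with your point $w$, but chose $w\in H$ instead of $w\in u+H$.
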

\begin{proof}
Since $N$ is even, Lemma~\ref{lem:the_existence_of_index_two_subgroup} guarantees the existence of a subgroup $H \subset E(\mathbb{F}_q)$ of index $2$, and hence of order $N/2$. 
\begin{enumerate}
    \item \textbf{$k$ is even.} Choose an element $u \notin H$, and let $P_1, \dots, P_{k-l}$ be any $k-l$ elements from the coset $u+H$. We take $\mathcal{D} = H$ and 
\[
 \qquad G = [R] + [P_1] + \cdots + [P_{k-l}].
\]
Clearly, $C(E, \mathcal{D}, G)$ has length $|H| = N/2$ and dimension $k$. It remains to verify that this code is MDS.

Applying $
\Point([R]) = \mathcal{O}
$, we get
\begin{align*}
\Point(G) &= \Point({[R] + [P_1] + \cdots + [P_{k-l}]}) \\
% & = \Point([R]) + P_1 + \cdots + P_{k-l} \\
&= P_1 + \cdots + P_{k-l}.
\end{align*}
Since each $P_i$ lies in $u+H$ and $k-l$ is odd, their sum $\Point(G)$ belongs to $u+H$ as well (the sum of an odd number of elements of a coset of an index‑$2$ subgroup remains in that coset). Hence $\Point(G) \notin H = \mathcal{D}$, which gives $\Point(G) \notin \sumset_k(\mathcal{D})$. By Lemma~\ref{lem:MDS_condition}, the code $C(E, \mathcal{D}, G)$ is MDS.
    
    \item \textbf{$k$ is odd.} Choose an element $u \notin H$, and let $P_1, \dots, P_{k-l}$ be any $k-l$ elements taken from $H$. Define
\[
\mathcal{D} = u + H, \qquad G = [R] + [P_1] + \cdots + [P_{k-l}].
\]
A similar argument shows that the resulting code $C(E, \mathcal{D}, G)$ is MDS.
Applying $
\Point([R]) = \mathcal{O}
$ again, we get
\begin{align*}
\Point(G) &= \Point({[R] + [P_1] + \cdots + [P_{k-l}]}) \\
% & = \Point([R]) + P_1 + \cdots + P_{k-l} \\
&= P_1 + \cdots + P_{k-l}.
\end{align*}
Since each $P_i$ lies in $H$, their sum $\Point(G)$ belongs to $H$ as well. Hence $\Point(G) \notin \sumset_k(\mathcal{D})$. By Lemma~\ref{lem:MDS_condition}, the code $C(E, \mathcal{D}, G)$ is MDS, which completes the proof.
\end{enumerate}

\end{proof}

In order to apply Theorem~\ref{thm:construction_of_degree_greater_1}, we show that the desired place of degree three exists under the given hypotheses $q \ge 6$.

\begin{lemma}\label{lem:existenceofdegree3place}
Let $q \ge 6$ be a prime power, and let $E$ be an elliptic curve defined over $\mathbb{F}_q$. Then there exists a place $R$ of degree three such that
$
\Point([R]) = \mathcal{O}$.
\end{lemma}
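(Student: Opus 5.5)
We need a place $R$ of degree $3$, i.e.\ a Galois orbit $\{Q, Q^{\sigma}, Q^{\sigma^2}\}$ of points of $E(\mathbb{F}_{q^3})\setminus E(\mathbb{F}_q)$, where $\sigma$ is the $q$-Frobenius, such that
\[
Q + \sigma(Q) + \sigma^2(Q) = \mathcal{O}.
\]
The map $T: E(\mathbb{F}_{q^3}) \to E(\mathbb{F}_{q^3})$, $T(Q) = Q+\sigma(Q)+\sigma^2(Q)$, is a group homomorphism (the trace). Its image lies in $E(\mathbb{F}_q)$, since $T(Q)$ is $\sigma$-invariant. Hence
\[
|\ker T| \ge \frac{|E(\mathbb{F}_{q^3})|}{|E(\mathbb{F}_q)|}.
\]
The plan is to show that $\ker T$ contains a point $Q\notin E(\mathbb{F}_q)$. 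Such a $Q$ lies in $E(\mathbb{F}_{q^3})\setminus E(\mathbb{F}_q)$, so its orbit has size exactly $3$ (there is no intermediate field). It therefore gives a place $R$ of degree $3$ with $\Point([R]) = T(Q) = \mathcal{O}$.

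The points of $E(\mathbb{F}_q)$ lying in $\ker T$ are those with $3Q = \mathcal{O}$, since $T(Q)=3Q$ for rational $Q$. There are at most $|E[3]|\le 9$ of them by Lemma~\ref{lem:bound_on_torsion}. It therefore suffices to prove
\[
\frac{|E(\mathbb{F}_{q^3})|}{|E(\mathbb{F}_q)|} > 9.
\]

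To get this, write $N = q+1-t$ with $|t|\le 2\sqrt q$, and use the standard formula
\[
|E(\mathbb{F}_{q^3})| = q^3+1-(t^3-3qt),
\]
obtained from $\alpha^3+\bar\alpha^3$ with $\alpha+\bar\alpha = t$ and $\alpha\bar\alpha=q$. Factoring,
\[
|E(\mathbb{F}_{q^3})| = N\cdot\bigl((q+1)^2 - 3q + t(q+1) + t^2 - \dots\bigr).
\]
More cleanly, $|E(\mathbb{F}_{q^3})| = |1-\alpha^3|^2 = |1-\alpha|^2\,|1+\alpha+\alpha^2|^2$. This gives
\[
\frac{|E(\mathbb{F}_{q^3})|}{N} = |1+\alpha+\alpha^2|^2 \ge \bigl(|\alpha|^2 - |\alpha| - 1\bigr)^2 = \bigl(q-\sqrt q-1\bigr)^2,
\]
using $|\alpha|=\sqrt q$. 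For $q\ge 6$ we have $q-\sqrt q-1 \ge 6-2.45-1 > 2.5$, so the square exceeds $6.25$. This is not yet above $9$, so the small cases need care. For $q\ge 7$ we get $q-\sqrt q -1 \ge 3.35$, and its square exceeds $11 > 9$. Since $q$ is a prime power, $q\ge 6$ means $q\ge 7$, so this settles every case.

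The main obstacle is exactly this counting inequality for small $q$, together with the observation that $\ker T\cap E(\mathbb{F}_q)=E(\mathbb{F}_q)[3]$. If the bound were too weak, I would instead sharpen the count to $|E(\mathbb{F}_q)[3]|\le 3$ when $E[3]\not\subseteq E(\mathbb{F}_q)$. I would also use the fact that $E[3]\subseteq E(\mathbb{F}_q)$ forces $q\equiv 1 \pmod 3$ via the Weil pairing.
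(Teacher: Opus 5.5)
Your proposal is correct and follows essentially the paper's argument: $|\ker T|\ge |E(\mathbb{F}_{q^3})|/N$, and $\ker T\cap E(\mathbb{F}_q)\subseteq E[3]$ has at most $9$ elements, so some kernel point gives a degree-$3$ place. The only difference is the estimate of the ratio: you use $|1+\alpha+\alpha^2|^2\ge(q-\sqrt q-1)^2$, while the paper applies Hasse--Weil separately to $|E(\mathbb{F}_{q^3})|$ and $N$; both exceed $9$ once $q\ge 7$ (and the stray ``$-\dots$'' in your first factorization should simply be deleted).
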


\begin{proof}
Let $\pi$ denote the Frobenius endomorphism. Define the map $\operatorname{T}: E(\mathbb{F}_{q^3}) \to E(\mathbb{F}_q)$ by
\[ \operatorname{T}(P) = P + \pi(P) + \pi^2(P).\]
Since $\pi$ is a group homomorphism, $\operatorname{T}$ is a group homomorphism as well. The kernel of $\operatorname{T}$ is
\[
\operatorname{Ker}(\operatorname{T}) = \{ P \in E(\mathbb{F}_{q^3}) \mid P + \pi(P) + \pi^2(P) = \mathcal{O} \}.
\]
 The cardinality of $\operatorname{Ker}(\operatorname{T})$ satisfies
\[
|\operatorname{Ker}(\operatorname{T})|
= \frac{|E(\mathbb{F}_{q^3})|}{|\operatorname{Im}(\operatorname{T})|} \ge \frac{|E(\mathbb{F}_{q^3})|}{N} .
\]
Applying Lemma~\ref{lem:the_bound_of_rational_number}, it follows
\begin{align*}
|\operatorname{Ker}(\operatorname{T})| &\ge \frac{q^3 + 1 - 2\sqrt{q^3}}{q + 1 + 2\sqrt{q}} \\
% &= \left( \frac{(\sqrt{q} - 1)(q + \sqrt{q} + 1)}{\sqrt{q} + 1} \right)^{\!2} \\
% &\ge \bigl( (\sqrt{q} - 1)\sqrt{q} \bigr)^2 \\
&\ge 10.
\end{align*}
It is evident that $ \operatorname{T}(P) = 3P  $ for $P \in  E(\mathbb{F}_q)$.
Hence,
\[
\operatorname{Ker}(\operatorname{T}) \cap E(\mathbb{F}_q) \subseteq E[3].
\]
Lemma~\ref{lem:bound_on_torsion} yields
\[
|\operatorname{Ker}(\operatorname{T}) \cap E(\mathbb{F}_q)| \le 9.
\]
Since $|\operatorname{Ker}(\operatorname{T})| \ge 10$, there exists a point $P \in \operatorname{Ker}(\operatorname{T})$ with $P \notin E(\mathbb{F}_q)$.
The three conjugate points $P$, $\pi(P)$, $\pi^2(P)$ are pairwise distinct and together form a place $ R $ of degree $3$ on $E/\mathbb{F}_q$.
The place $R$ fulfills the required condition, completing the proof.
\end{proof}
For the explicit examples below, the degree‑$3$ place $R$ is found via a more efficient, alternative method.

\begin{example}\label{ex:p=2mod3}
    \begin{enumerate}
        \item Let $p$ be an odd prime with $p \equiv 2 \pmod{3}$, and let $q = p^a$ with $a$ even. By Lemma~\ref{lem:maximalcurveforp=2mod3}, there exists a maximal elliptic curve over $\mathbb{F}_q$ defined by 
        \[
        y^2 = x^3 + \theta^3
        \]
        for some $\theta \in \mathbb{F}_{q}^*$. From $p \equiv 2 \pmod{3}$ we obtain $3 \mid (q-1)$.
        Consider the maps $\phi(t)=t^3+\theta^3$ and $\Phi(t)=t^2$ from $\mathbb{F}_q$ to $\mathbb{F}_q$. Since $3 \mid (q-1)$, the image of $\phi$ has size $\frac{q-1}{3}+1$; since $2 \mid (q-1)$, the image of $\Phi$ has size $\frac{q-1}{2}+1$.
        
        Hence there exists an element $b \in \mathbb{F}_q$ such that the equation $b^2 = x^3 + \theta^3$ admits no solution in $\mathbb{F}_q$. This yields a place $R$ of degree $3$ with $\operatorname{div}(y-b) = [R] - 3[\mathcal{O}]$.
        
        \item Let $p = 3$ and let $q = p^a$ with $a$ even. By Lemma~\ref{lem:maximalcurveforp=2mod3}, there exists a maximal elliptic curve over $\mathbb{F}_q$ defined by 
        \[
        y^2 = x^3 + x.
        \]
        Define $\phi: \mathbb{F}_q \to \mathbb{F}_q$ by $\phi(t) = t^3 + t$. Since $\phi$ is an $\mathbb{F}_3$-linear map, its image has size $\frac{q}{3}$.
        
        Define $\Phi: \mathbb{F}_q \to \mathbb{F}_q$ by $\Phi(t) = t^2$. Since $2 \mid (q-1)$, the image of $\Phi$ has size $\frac{q-1}{2} + 1$.
        
        Consequently, there exists an element $b \in \mathbb{F}_q$ such that $b^2 = x^3 + x$ admits no solution in $\mathbb{F}_q$, which gives a place $R$ of degree $3$ with $\operatorname{div}(y - b) = [R] - 3[\mathcal{O}]$.
    \end{enumerate}
\end{example}

For the two families of elliptic curves in Example~\ref{ex:p=2mod3}, the search for the place $R$ is confined to a set of size roughly $q$. In contrast, the method of Lemma~\ref{lem:existenceofdegree3place} requires working in $E(\mathbb{F}_{q^3})$, a group whose cardinality is of order roughly $q^3$. Hence the approach illustrated in Example~\ref{ex:p=2mod3} is considerably more efficient.

From the results established above, we can construct $[\frac{q+1+2\sqrt{q}}{2}, k, \frac{q+1+2\sqrt{q}}{2}-k+1]$ MDS codes for any even integer $k \ge 3$.

\begin{corollary}
Let $q \ge 6$ be an odd square and let $k \ge 3$ be an even integer. Then there exists an MDS elliptic code $C(E, \mathcal{D}, G)$ of length $\frac{q+1+2\sqrt{q}}{2}$ and dimension $k$.
\end{corollary}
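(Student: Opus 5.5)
The plan is to combine three earlier results: the existence of a maximal elliptic curve (Lemma~\ref{lem:the_class_of_elliptic_curve}), the existence of a suitable degree-$3$ place (Lemma~\ref{lem:existenceofdegree3place}), and the construction of Theorem~\ref{thm:construction_of_degree_greater_1} with $l=3$.

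\textbf{Step 1: the curve.} Write $q=p^n$ with $p$ odd and $n$ even; this is possible because $q$ is an odd square. Case (ii) of Lemma~\ref{lem:the_class_of_elliptic_curve} then gives an elliptic curve $E$ over $\mathbb{F}_q$ with trace $t=-2\sqrt{q}$, so that
\[
N=|E(\mathbb{F}_q)|=q+1+2\sqrt{q}.
\]
Since $q$ is odd, $\sqrt{q}$ is odd, hence $2\sqrt{q}$ is even and $q+1$ is even. Therefore $N$ is even and $N/2=\frac{q+1+2\sqrt{q}}{2}$ is an integer. Alternatively, one could take the explicit maximal curves of Lemma~\ref{lem:maximalcurveforp=2mod3}, but these only cover $p\equiv 2\pmod 3$, so the Waterhouse route is the general one.

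\textbf{Step 2: the degree-$3$ place.} Since $q\ge 9>6$, Lemma~\ref{lem:existenceofdegree3place} supplies a place $R$ of degree $3$ on $E$ with $\Point([R])=\mathcal{O}$. The degree $l=3$ is odd and satisfies $l>2$, as Theorem~\ref{thm:construction_of_degree_greater_1} requires.

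\textbf{Step 3: the construction.} Apply Theorem~\ref{thm:construction_of_degree_greater_1} with $l=3$. An even $k\ge 3$ satisfies $k\ge 4\ge l$, so the theorem's hypothesis $k\ge l$ holds. In the even case of that theorem, $\mathcal{D}=H$ for an index-$2$ subgroup $H$, and
\[
G=[R]+[P_1]+\cdots+[P_{k-3}]
\]
with each $P_i\in u+H$. Since $k-3$ is odd, $\Point(G)=P_1+\cdots+P_{k-3}$ lies in $u+H$, so $\Point(G)\notin\sumset_k(H)$. Lemma~\ref{lem:MDS_condition} then shows the code is MDS, of length $N/2=\frac{q+1+2\sqrt{q}}{2}$ and dimension $k$.

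\textbf{Points to check.} The main thing to verify is that the construction is well defined. $\operatorname{Supp}(G)$ must be disjoint from $\mathcal{D}=H$: the points of $R$ are not $\mathbb{F}_q$-rational, and each $P_i$ lies in $u+H$, so this holds. The divisor $G$ may repeat points, since $P_1,\dots,P_{k-3}$ need not be distinct, and $\deg G=3+(k-3)=k$.

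For the code to be non-trivial we need $k<n$. This holds whenever $k<N/2$. If $k\ge N/2$ is allowed, one should note that the statement is presumably meant for $k$ below the length, and restrict to that range.

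The existence of the maximal curve is the only step that relies on a substantial external result; everything else is immediate from the cited lemmas and theorem.
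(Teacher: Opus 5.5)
Your proof is correct and follows the same route as the paper. The paper takes a maximal curve (your appeal to case (ii) of Waterhouse supplies the existence it leaves implicit), obtains the degree-$3$ place from Lemma~\ref{lem:existenceofdegree3place}, and applies the even case of Theorem~\ref{thm:construction_of_degree_greater_1} with $l=3$; your caveat that $k < N/2$ is needed for non-triviality and for Lemma~\ref{lem:MDS_condition} to apply is a fair point, since the paper's statement leaves this upper bound on $k$ implicit.
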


\begin{proof}
Let $E$ be a maximal elliptic curve over $\mathbb{F}_q$, that is, $N = q + 1 + 2\sqrt{q}$. Applying Theorem~\ref{thm:construction_of_degree_greater_1} together with Lemma~\ref{lem:existenceofdegree3place} to the curve $E$, the desired result follows immediately.
\end{proof}

\begin{remark}\label{remark:N/2,odd l}
In Theorem~\ref{thm:construction_of_degree_greater_1}, we construct MDS codes of length $N/2$ and dimension $k$ using a place $R$ of odd degree $l \ge 3$ satisfying $\Point([R]) = \mathcal{O}$. The proof treats even and odd $k$ separately, but the construction admits a unified description: taking a subgroup $H \subseteq E(\mathbb{F}_q)$ of index $2$, the evaluation set $\mathcal{D}$ is chosen as $H$ when $k$ is even, and as $u+H$ (with $u \notin H$) when $k$ is odd; in both cases the divisor $G$ is of the form $[R] + [P_1] + \cdots + [P_{k-l}]$ with $P_i \in H$ or $P_i \in u+H$ according to the parity of $k$. The proof that the resulting code is MDS is analogous for the two parities.

For comparison, the construction given in \cite{MR4564628} for odd $k$ uses the simpler divisor $G = k[\mathcal{O}]$ together with $\mathcal{D} = u + H$, achieving the same code parameters. {\color{blue}The construction given in \cite{wang2025nonreedsolomontypemdscodes} uses the divisor $G = k[Q]$ together with $\mathcal{D} = H$, where $Q \notin H$, also achieving the same code parameters.} Our construction replaces $k[\mathcal{O}]$ by a divisor built from a higher-degree place $R$. Without allowing $G$ to contain a place of degree greater than $1$, it is impossible to obtain MDS codes of the same length for even $k$. Our construction overcomes this barrier precisely by incorporating the higher-degree place $R$, which is essential for the even-dimensional case and also provides an alternative approach for odd $k$.
\end{remark}
\subsection{A Tight Upper Bound for The Case Where $q+1+ \lfloor 2\sqrt{q} \rfloor$ Is Even}

The construction presented in the preceding subsection yields a tight upper bound on the maximal length of MDS elliptic codes. We now state this result formally.

\begin{theorem}\label{thm:q+1+...even}
Let $q \ge 289$ and $3 \le k \le \frac{q+1-2\sqrt{q}}{10}$. Assume that $q+1+\lfloor 2\sqrt{q}\rfloor$ is even. Then
\[
\operatorname{MEC}(k,q) = \frac{q+1+\lfloor 2\sqrt{q}\rfloor}{2}.
\]
\end{theorem}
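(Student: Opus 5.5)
The proof splits into an upper bound and a matching construction.

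For the upper bound, take any MDS elliptic code $C(E,\mathcal{D},G)$ of dimension $k$ over $\mathbb{F}_q$, with $N=|E(\mathbb{F}_q)|$. By Lemma~\ref{lem:the_bound_of_rational_number}, $N \ge q+1-2\sqrt{q}$, so the hypothesis on $k$ gives $3\le k\le N/10$. Lemma~\ref{lem:a_tight_upper_bound_before}(i) then yields $n\le N/2$. Since $N$ is an integer and $N\le q+1+2\sqrt{q}$, we have $N\le q+1+\lfloor 2\sqrt{q}\rfloor$. Hence
\[
n\le \Bigl\lfloor \tfrac{q+1+\lfloor 2\sqrt{q}\rfloor}{2}\Bigr\rfloor=\tfrac{q+1+\lfloor 2\sqrt{q}\rfloor}{2},
\]
where the equality uses the parity assumption. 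This gives $\operatorname{MEC}(k,q)\le \frac{q+1+\lfloor 2\sqrt{q}\rfloor}{2}$, for both parities of $k$ and with no restriction on $G$.

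For the lower bound, I first need an elliptic curve $E/\mathbb{F}_q$ with $N=q+1+\lfloor 2\sqrt{q}\rfloor$. I would invoke Lemma~\ref{lem:the_class_of_elliptic_curve} with trace $-t=\lfloor 2\sqrt{q}\rfloor$, i.e. $t=-\lfloor 2\sqrt{q}\rfloor$, exactly as in the proof of Theorem~\ref{thm:G'and_even_k}, which asserts such a curve exists. Strictly, this requires that $\lfloor 2\sqrt{q}\rfloor$ be coprime to $p$, or that $q$ be a square, in which case case (ii) applies. I would follow the paper's convention and cite the same existence statement used there. This is the one delicate point: for non-square $q$ with $p\mid\lfloor2\sqrt q\rfloor$, one would need an alternative curve with even $N$ close to the maximum. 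I treat the existence of the curve as given by the earlier argument.

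Since $N$ is even, Lemma~\ref{lem:existenceofdegree3place} (applicable because $q\ge 289\ge 6$) gives a place $R$ of degree $l=3$ with $\Point([R])=\mathcal{O}$. Theorem~\ref{thm:construction_of_degree_greater_1} then applies for every $k\ge 3$ and produces an MDS code of length $N/2=\frac{q+1+\lfloor 2\sqrt{q}\rfloor}{2}$ and dimension $k$, whether $k$ is even or odd. For odd $k$ one could alternatively use $G=k[\mathcal{O}]$ and $\mathcal{D}=u+H$. This construction is non-trivial, since $k<N/2$ in our range. Combining it with the upper bound gives the claimed equality.
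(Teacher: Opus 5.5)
Your proof follows the paper's own argument step for step. The upper bound comes from Lemma~\ref{lem:a_tight_upper_bound_before}(i) together with the Hasse--Weil bound. The lower bound comes from a curve with $N=q+1+\lfloor 2\sqrt{q}\rfloor$, a degree-$3$ place from Lemma~\ref{lem:existenceofdegree3place}, and Theorem~\ref{thm:construction_of_degree_greater_1}. Everything except the existence of that curve is fine.

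The point you flagged, however, is a genuine gap, and deferring to Theorem~\ref{thm:G'and_even_k} does not close it. That proof, like the paper's proof of this theorem, simply asserts via Lemma~\ref{lem:the_class_of_elliptic_curve} that a curve with trace $\pm\lfloor 2\sqrt{q}\rfloor$ exists. It never checks that this trace is admissible.

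The gap cannot be filled, because the statement fails without an extra hypothesis. Take $q=5^{19}$:
\begin{itemize}
\item Since $8734640^2<4\cdot 5^{19}<8734641^2$, we have $\lfloor 2\sqrt{q}\rfloor=8734640$. This is even, so $q+1+\lfloor 2\sqrt q\rfloor$ is even and the theorem applies.
\item But $8734640$ is divisible by $5$. Since $p=5$ and the exponent $19$ is odd, Lemma~\ref{lem:the_class_of_elliptic_curve} admits only traces prime to $5$ or equal to $0$.
\item Hence every $E/\mathbb{F}_q$ has $N\le q+1+8734639$, which is odd.
\item Your own upper-bound argument ($n\le\lfloor N/2\rfloor$ curve by curve) then gives $\operatorname{MEC}(k,q)\le\frac{q+1+\lfloor 2\sqrt q\rfloor}{2}-1$, contradicting the claimed equality.
\end{itemize}
In fact the admissible even value $N=q+1+8734638$ shows the true value is exactly one less.

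So no ``alternative curve'' can rescue the construction. The theorem needs an additional hypothesis, for instance: $q$ is a square, or $p$ does not divide $\lfloor 2\sqrt q\rfloor$. This holds automatically when $p=2$, since the parity assumption then forces $\lfloor 2\sqrt q\rfloor$ to be odd. It also holds when $q$ is prime. Under this hypothesis, case (i) or (ii) of Lemma~\ref{lem:the_class_of_elliptic_curve} supplies the curve and your proof is complete.
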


\begin{proof}
Let $E$ be an arbitrary elliptic curve defined over $\mathbb{F}_q$. From Lemma~\ref{lem:a_tight_upper_bound_before} we obtain
\[
\operatorname{MEC}(k,q,E) \le \frac{N}{2} \le \frac{q+1+\lfloor 2\sqrt{q}\rfloor}{2},
\]
and consequently
\[
\operatorname{MEC}(k,q) \le \frac{q+1+\lfloor 2\sqrt{q}\rfloor}{2}.
\]

By Lemma~\ref{lem:the_class_of_elliptic_curve}, there exists an elliptic curve $E$ with $N = q+1+\lfloor 2\sqrt{q}\rfloor$. To verify that the bound is tight, we construct MDS codes of length $\frac{q+1+\lfloor 2\sqrt{q}\rfloor}{2}$ and dimension $k$, distinguishing the parity of $k$.
Let $H$ be a subgroup of $E(\mathbb{F}_q)$ of index $2$, pick an element $u \notin H$. By Lemma~\ref{lem:existenceofdegree3place}, there exists a place $R$ of degree three. 

\begin{enumerate}
    \item \textbf{$k$ is odd.} This construction follows the method of Theorem~\ref{thm:construction_of_degree_greater_1}. 
    Set
    \[
    \mathcal{D} = u + H, \qquad G = [R] + (k-3)[\mathcal{O}].
    \]
    Then $\Point(G)=\mathcal{O} \notin \sumset_k(\mathcal{D})$, and Lemma~\ref{lem:MDS_condition} implies that the code $C(E,\mathcal{D},[R] + (k-3)[\mathcal{O}])$ is an $[\frac{q+1+\lfloor 2\sqrt{q}\rfloor}{2}, k]$ MDS code. Hence
    \[
    \operatorname{MEC}(k,q) = \frac{q+1+\lfloor 2\sqrt{q}\rfloor}{2}
    \]
    for odd $k$.

    \item \textbf{$k$ is even.} Choose a point $P \notin H$. Applying Theorem~\ref{thm:construction_of_degree_greater_1} with
    \[
    \mathcal{D} = H, \qquad G = [R] + (k-3)[P],
    \]
    yields an $[\frac{q+1+\lfloor 2\sqrt{q}\rfloor}{2}, k]$ MDS code. Thus
    \[
    \operatorname{MEC}(k,q) = \frac{q+1+\lfloor 2\sqrt{q}\rfloor}{2}
    \]
    for even $k$.
\end{enumerate}

This completes the proof.
\end{proof}

Thus, we are now in a position to answer the question posed at the outset: whether the bound $\frac{q+1+2\sqrt{q}}{2}$ in \eqref{eq:bound} is tight for even $k$ when $q$ is an odd square.

\begin{theorem}\label{thm:qoddsquare,evenk}
Let $q \ge 289$ be an odd square. Then for every even integer $k$ with $3 \le k \le (q + 1 - 2\sqrt{q})/10$, we have
\[
\operatorname{MEC}(k, q) = \frac{q+1}{2} + \sqrt{q}.
\]
\end{theorem}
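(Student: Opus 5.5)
The plan is to obtain Theorem~\ref{thm:qoddsquare,evenk} as a direct specialization of Theorem~\ref{thm:q+1+...even}, so the work is to check that its hypotheses hold and to simplify the resulting value.

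First, since $q$ is a square, $2\sqrt{q}$ is an integer, so $\lfloor 2\sqrt{q}\rfloor = 2\sqrt{q}$. Next we check the parity condition. Since $q$ is odd, $\sqrt{q}$ is odd, so $2\sqrt{q}\equiv 2 \pmod 4$ and in particular $2\sqrt{q}$ is even. Also $q+1$ is even. Hence $q+1+\lfloor 2\sqrt{q}\rfloor = q+1+2\sqrt{q} = (\sqrt{q}+1)^2$ is even, as noted before Corollary~\ref{cor:MEC-1}. The remaining hypotheses, $q\ge 289$ and $3\le k\le (q+1-2\sqrt{q})/10$, are exactly those of the present statement. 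Theorem~\ref{thm:q+1+...even} therefore applies and gives
\[
\operatorname{MEC}(k,q)=\frac{q+1+2\sqrt{q}}{2}=\frac{q+1}{2}+\sqrt{q}.
\]

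For the reader it is worth recalling where the two inequalities come from in the even-$k$ case.

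\textbf{Upper bound.} This is Lemma~\ref{lem:a_tight_upper_bound_before}(ii), which gives $\operatorname{MEC}(k,q)\le \frac{q+1}{2}+\sqrt{q}$.

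\textbf{Lower bound.} Take a maximal curve $E$, which exists by Lemma~\ref{lem:the_class_of_elliptic_curve}(ii) since $q$ is an even power of $p$. Its order $N=(\sqrt{q}+1)^2$ is even. Lemma~\ref{lem:existenceofdegree3place} provides a degree-$3$ place $R$ with $\Point([R])=\mathcal{O}$, valid since $q\ge 6$. Choose an index-$2$ subgroup $H$, set $\mathcal{D}=H$, and take $G=[R]+(k-3)[P]$ with $P\notin H$. Because $k-3$ is odd, $\Point(G)=(k-3)P\notin H\supseteq \sumset_k(H)$. By Theorem~\ref{thm:construction_of_degree_greater_1}, this yields an MDS code of length $N/2$.

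There is no real obstacle here. The only point needing care is the parity check, specifically that $\lfloor 2\sqrt{q}\rfloor=2\sqrt{q}$ is even for odd square $q$, which makes $N$ even and the index-$2$ subgroup available.
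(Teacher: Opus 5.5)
Your proposal is correct and matches the paper's argument: the paper also obtains this as an immediate corollary of Theorem~\ref{thm:q+1+...even}. Your parity check ($q+1+\lfloor 2\sqrt{q}\rfloor=(\sqrt{q}+1)^2$ is even) and your recap of the upper bound and of the $\mathcal{D}=H$, $G=[R]+(k-3)[P]$ construction are exactly what that specialization amounts to.
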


\begin{proof}
This is an immediate corollary of Theorem~\ref{thm:q+1+...even}.
\end{proof}

\subsection{Examples}

   In this subsection we present several examples of MDS codes constructed from maximal elliptic curves. We employ the method introduced in Example~\ref{ex:p=2mod3} to construct a degree‑$3$ place $R$ satisfying $\Point([R]) = \mathcal{O}$.

\begin{example}
\begin{enumerate}
    \item Let $p = 17$ and consider the irreducible polynomial $f(x) = x^2 + 16x + 3$ over $\mathbb{F}_{17}$. This polynomial defines the finite field $\mathbb{F}_{17^2}$. Let $\eta$ be a root of $f(x)$ and let $E$ be the elliptic curve over $\mathbb{F}_{17^2}$ given by
    \[
    y^2 = x^3 + 1.
    \]
    The group $E(\mathbb{F}_{17^2})$ has order $324$ and is generated by the two points
    \[
    P_0 = (\eta + 5,\, 9\eta + 7), \qquad P_1 = (3\eta + 2,\, 10\eta + 7).
    \]
    One readily checks that the equation $(\eta+1)^2 = x^3 + 1$ admits no solution in $\mathbb{F}_{17^2}$. Let $R$ be the place corresponding to the ideal
    \[
    (x^3 + 14\eta + 3,\; y - \eta - 1),
    \]
    which arises from the principal divisor of $y - \eta - 1$; in other words,
    \[
    \operatorname{div}(y - \eta - 1) = [R] - 3[\mathcal{O}].
    \]
    Set $P_2 = 2P_0 = (2\eta,\, 12\eta + 16)$ and let $H = \langle P_2, P_1 \rangle$ be the subgroup generated by $P_1$ and $P_2$. Then $|H| = 162$ and $P_0 \notin H$. Take $\mathcal{D} = H$ and $G = [R] + [P_0]$. The resulting code $C(E, \mathcal{D}, G)$ is an MDS code with parameters $[162,4,159]$, which can be verified by SageMath.

    \item Let $p = 3$ and consider the irreducible polynomial $f(x) = x^6 + 2x^4 + x^2 + 2x + 2$ over $\mathbb{F}_{3}$. This polynomial defines the finite field $\mathbb{F}_{3^6}$. Let $\eta$ be a root of $f(x)$ and let $E$ be the elliptic curve over $\mathbb{F}_{3^6}$ defined by
    \[
    y^2 = x^3 + x.
    \]
    The group $E(\mathbb{F}_{3^6})$ has order $784$ and is generated by the two points
    \[
    P_0 = (2\eta^5 + 2\eta^3 + 2\eta,\, 2\eta^5 + \eta^3 + 2\eta^2 + 2\eta + 2), \]
    and 
    \[
    P_1 = (2\eta^5 + \eta^3 + \eta^2 + 2\eta + 1,\, \eta^5 + 2\eta^4 + 1).
    \]
    One readily checks that the equation $\eta^2 = x^3 + x$ admits no solution in $\mathbb{F}_{3^6}$. Let $R$ be the place corresponding to the ideal
    \[
    (x^3 + x + 2\eta^2,\; y - \eta).
    \]
    Indeed, $R$ arises from the principal divisor of $y - \eta$, namely
    \[
    \operatorname{div}(y - \eta) = [R] - 3[\mathcal{O}].
    \]
    Set $P_2 = 2P_0$.
    % = (2\eta^4 + 2\eta^3 + \eta^2 + 2\eta + 1,\, \eta^5 + 2\eta^4 + \eta^3 + 2\eta^2 + \eta + 1)$ and let $H = \langle P_2, P_1 \rangle.
    Let $H = \langle P_2, P_1 \rangle$ be the subgroup generated by $P_1$ and $P_2$. Then $|H| = 392$ and $P_0 \notin H$. Take $\mathcal{D} = H$ and $G = [R] + [P_0]$. Then SageMath shows that the resulting code $C(E, \mathcal{D}, G)$ is an MDS code with parameters $[392,4,389]$.
\end{enumerate}
\end{example}

\begin{remark}
    By Lemma \ref{lem:a_tight_upper_bound_before}, for sufficiently large odd square $q$, the bound $\frac{q+1}{2}+\sqrt{q}$ on the maximal length of MDS elliptic codes is tight for odd $k$ in a certain range. In this paper, it is proved that for even $k$, this bound is not tight if either $\mathcal{D}$ is not a coset of an index‑$2$ subgroup of $E(\mathbb{F}_q)$ or $\operatorname{Supp}(G)$ consists entirely of $\mathbb{F}_q$-rational points. In the case where $\operatorname{Supp}(G)$ consists entirely of $\mathbb{F}_q$-rational points, the tight upper bound is $\frac{q+1}{2}+\sqrt{q} - 1$.

   Conversely, if $\mathcal{D}$ is a coset of an index-2 subgroup and $\operatorname{Supp}(G)$ contains at least one place of degree greater than one, then the original bound $\frac{q+1}{2}+\sqrt{q}$ remains tight for all odd square $q$ and even $k$ within the specified range.

\end{remark}

\section{A Tight Upper Bound When $q+1+\lfloor 2\sqrt{q}\rfloor$ is odd}\label{sec:A_tight_upper_bound_for_the_case_of_characteristic_2}
This section is devoted to the complementary case where $q+1+\lfloor 2\sqrt{q}\rfloor$ is odd. This situation notably occurs when $q = 2^t$ with $t$ even, i.e., for binary fields of square cardinality, which are of particular interest in applications.
The analysis of the odd case will require us to work with elliptic curves having an odd number $N = |E(\mathbb{F}_q)|$ of rational points, and we therefore study the codes on such curves first.
With the necessary bounds for odd $N$ in place, we now proceed to determine the maximal length of MDS elliptic codes. We begin by imposing the restriction that $\operatorname{Supp}(G)$ consists entirely of $\mathbb{F}_q$-rational points. Under this restriction we obtain the tight bound $\frac{q+\lfloor 2\sqrt{q}\rfloor}{2}-1$ for even $k$ (Theorem~\ref{thm:G'and_oddq+1+}), and we then specialize it to binary fields (Corollary~\ref{cor:q=2^a,evenkandG'}). Removing the restriction on $G$ yields the exact formula $\operatorname{MEC}(k,q)=\frac{q+\lfloor 2\sqrt{q}\rfloor}{2}$ (Theorem~\ref{thm:q+1+odd}), from which we deduce the precise maximal length for characteristic $2$ (Corollary~\ref{cor:boundforcharq=2}).

\subsection{An Upper Bound on Maximal Length of MDS Codes on $E$ with Odd $N$}

We first treat the case where $N$ is odd, before proceeding to the discussion of the case where $q+1+\lfloor 2\sqrt{q}\rfloor$ is odd.

\begin{theorem}\label{thm:E(Fq)_is_odd}
Let $q$ be a prime power and let $E$ be an elliptic curve over $\mathbb{F}_q$ with $N = |E(\mathbb{F}_q)|$ odd and $N \ge 165$. Let $C(E, \mathcal{D}, G)$ be an elliptic code on $E$ of dimension $k=\deg G$ and length $|\mathcal{D}| = n$. If
\[
n \ge \frac{2}{5}N + 1 \quad \text{and} \quad 3 \le k \le n - \frac{2}{5}N + 2,
\]
then $C(E, \mathcal{D}, G)$ is not MDS.
\end{theorem}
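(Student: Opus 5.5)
The plan is to mimic Case~\ref{item:1D} of the proof of Theorem~\ref{thm:main_theorem}, using that $E(\mathbb{F}_q)$ has odd order and therefore no subgroup of index $2$. By Lemma~\ref{lem:MDS_condition}, it suffices to show that $\sumset_k(\mathcal{D}) = E(\mathbb{F}_q)$, since then $\Point(G) \in \sumset_k(\mathcal{D})$ automatically.

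First we choose a subset $A \subseteq \mathcal{D}$ with $|A| = \lceil \frac{2}{5}N \rceil$ or slightly more, so that $|A| > \frac{2}{5}N$. This is possible because $n \ge \frac{2}{5}N + 1$.

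Next we check the second threshold of Lemma~\ref{lem:tight:first_combination_result}. Since $N$ is odd, $E(\mathbb{F}_q)[2]$ is trivial, so $12|E(\mathbb{F}_q)[2]| + 54 = 66$. The hypothesis $N \ge 165$ gives $\frac{2}{5}N \ge 66$, hence
\[
\max\Bigl\{\tfrac{2}{5}N,\; 12|E(\mathbb{F}_q)[2]|+54\Bigr\} = \tfrac{2}{5}N .
\]
Lemma~\ref{lem:tight:first_combination_result} then says that either $\sumset_3(A) = E(\mathbb{F}_q)$, or $A$ lies in a coset of an index-$2$ subgroup. The second alternative is impossible because a group of odd order has no index-$2$ subgroup (index divides $N$). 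So $\sumset_3(A) = E(\mathbb{F}_q)$.

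Then we extend to $k$ summands. We pick $k-3$ distinct points $p_1,\dots,p_{k-3}$ of $\mathcal{D}\setminus A$ and set $h = p_1+\cdots+p_{k-3}$. For any $g$, we write $g - h = a_1+a_2+a_3$ with distinct $a_i \in A$. Then $g$ is a sum of $k$ distinct points of $\mathcal{D}$, which gives $\sumset_k(\mathcal{D}) = E(\mathbb{F}_q)$ and completes the argument.

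The main delicate point is this counting step: we need $k-3 \le n - |A|$. With $|A| = \lfloor \frac{2}{5}N\rfloor + 1$ we get $n - |A| \ge n - \frac{2}{5}N - 1$, and the hypothesis $k \le n - \frac{2}{5}N + 2$ gives exactly $k - 3 \le n - \frac{2}{5}N - 1$. So the choice $|A| = \lfloor \frac{2}{5}N\rfloor + 1$ is what makes the bounds fit.

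We also need $|A| \le n$. This holds because $n \ge \frac{2}{5}N + 1 \ge \lfloor \frac{2}{5}N\rfloor + 1$.

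The same careful floor bookkeeping confirms $\frac{2}{5}N \ge 66$ from $N \ge 165$, which is what makes the max equal $\frac{2}{5}N$. The case $k = 3$, where no $p_i$ are needed, is covered directly by taking $A \subseteq \mathcal{D}$.
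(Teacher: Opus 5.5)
Your proposal is correct and matches the paper's proof essentially step for step: you choose $A\subseteq\mathcal{D}$ with $|A|=\lfloor \frac{2}{5}N\rfloor+1$, apply Lemma~\ref{lem:tight:first_combination_result} to get $\sumset_3(A)=E(\mathbb{F}_q)$, and then pad with $k-3$ distinct points of $\mathcal{D}\setminus A$. You also state two facts the paper leaves implicit: odd $N$ forces $|E(\mathbb{F}_q)[2]|=1$, which is exactly why $N\ge 165$ makes the maximum equal $\frac{2}{5}N$, and the index-$2$ coset alternative of the lemma cannot occur.
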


\begin{proof}
Since $N \ge 165$ and $n \ge \frac{2}{5}N + 1$, there exists a subset $A \subseteq \mathcal{D}$ such that
\[
|A| > \max\left\{ \frac{2}{5}N,\ 12|E(\mathbb{F}_q)[2]| + 54 \right\} = \frac{2}{5}N.
\]
In particular, we may choose $A$ so that $|A| = \lfloor\frac{2}{5}N \rfloor + 1$.
By Lemma~\ref{lem:tight:first_combination_result}, it follows that $\sumset_3(A) = E(\mathbb{F}_q)$.

Now let $3 \le k \le n - \frac{2}{5}N + 2$. Select $k-3$ distinct elements $p_1, \dots, p_{k-3}$ from $\mathcal{D} \setminus A$ and denote their sum by $h$, that is,
\[
p_1 + \cdots + p_{k-3} = h \in E(\mathbb{F}_q).
\]
For an arbitrary element $g \in E(\mathbb{F}_q)$, the equality $\sumset_3(A) = E(\mathbb{F}_q)$ guarantees the existence of distinct points $a_1, a_2, a_3 \in A$ such that
\[
a_1 + a_2 + a_3 = g - h.
\]
Consequently,
\[
p_1 + \cdots + p_{k-3} + a_1 + a_2 + a_3 = g,
\]
which shows that $g \in \sumset_k(\mathcal{D})$. Since $g$ was arbitrary, we conclude that $\sumset_k(\mathcal{D}) = E(\mathbb{F}_q)$. Hence $\Point(G) \in \sumset_k(\mathcal{D})$, and by Lemma~\ref{lem:MDS_condition}, the code $C(E, \mathcal{D}, G)$ is not MDS.
\end{proof}

The following theorem gives an upper bound for the maximal length of a non‑trivial MDS elliptic code on $E$ in the case where $N$ is odd.

\begin{theorem}\label{thm:main_theorem_2}
Let $q$ be a prime power and let $E$ be an elliptic curve over $\mathbb{F}_q$ with $N = |E(\mathbb{F}_q)|$ odd and $N \ge 165$. Assume $3 \le k \le \frac{N}{10}$. Then
\[
\operatorname{MEC}(k, q, E) \le \frac{N - 1}{2} - 1.
\]
\end{theorem}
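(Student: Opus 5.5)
The plan is to show that any elliptic code on $E$ whose length satisfies $n \ge \frac{N-1}{2}$ is not MDS, by reducing to Theorem~\ref{thm:E(Fq)_is_odd}. Since $N$ is odd, $\frac{N-1}{2}$ is an integer. The claimed bound $\operatorname{MEC}(k,q,E)\le \frac{N-1}{2}-1$ is then exactly the statement that every code of length $n\ge \frac{N-1}{2}$ fails to be MDS.

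First, check the length hypothesis of Theorem~\ref{thm:E(Fq)_is_odd}. We need $n \ge \frac{2}{5}N+1$. From $n \ge \frac{N-1}{2}$ it suffices that
\[
\frac{N-1}{2} \ge \frac{2}{5}N + 1,
\]
which is equivalent to $\frac{N}{10} \ge \frac{3}{2}$, i.e.\ $N\ge 15$. This holds since $N \ge 165$.

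Next, check the dimension hypothesis, $3\le k\le n-\frac{2}{5}N+2$. Using $n\ge \frac{N-1}{2}$,
\[
n-\frac{2}{5}N+2 \;\ge\; \frac{N}{10}-\frac12+2 \;>\; \frac{N}{10} \;\ge\; k.
\]
The lower bound $k\ge 3$ is part of the assumptions, and $\sumset_k(\mathcal{D})$ is meaningful because $k \le N/10 < n$.

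Theorem~\ref{thm:E(Fq)_is_odd} now says $C(E,\mathcal{D},G)$ is not MDS whenever $|\mathcal{D}|=n\ge \frac{N-1}{2}$. In particular no MDS code of length $\frac{N-1}{2}$ or larger exists on $E$, which gives $\operatorname{MEC}(k,q,E)\le \frac{N-1}{2}-1$.

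The only delicate point is inside Theorem~\ref{thm:E(Fq)_is_odd}. Its use of Lemma~\ref{lem:tight:first_combination_result} has an alternative outcome, namely that $A$ lies in a coset of an index-$2$ subgroup. Because $N$ is odd, $E(\mathbb{F}_q)$ has no index-$2$ subgroup, so this alternative is vacuous. Also $|E(\mathbb{F}_q)[2]|=1$, so $12|E(\mathbb{F}_q)[2]|+54=66 < \frac{2}{5}N$ once $N\ge 165$. That theorem is given, so no further obstacle remains beyond the arithmetic checks above.
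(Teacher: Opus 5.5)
Your proposal is correct and is essentially the paper's argument: both reduce to Theorem~\ref{thm:E(Fq)_is_odd} by checking $\frac{N-1}{2}\ge\frac{2}{5}N+1$ and $k\le\frac{N}{10}\le n-\frac{2}{5}N+2$. The only difference is that you apply it to every length $n\ge\frac{N-1}{2}$ at once, whereas the paper first caps the length at $\frac{N-1}{2}$ via Lemma~\ref{lem:a_tight_upper_bound_before}(i) and then excludes $n=\frac{N-1}{2}$, so your version does not need that lemma. One trivial slip: at $N=165$ you have $\frac{2}{5}N=66=12|E(\mathbb{F}_q)[2]|+54$, so your inequality should read $\le$ rather than $<$; this is harmless since $|A|=\lfloor\frac{2}{5}N\rfloor+1=67>66$.
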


\begin{proof}
By Lemmas~\ref{lem:a_tight_upper_bound_before} and~\ref{lem:MDS_condition}, for every $k$ with $3 \le k \le N/10$ we have
\[
\operatorname{MEC}(k, q, E) \le \frac{N}{2}.
\]
Since $N$ is odd, this bound can be sharpened to
\[
\operatorname{MEC}(k, q, E) \le \frac{N - 1}{2}.
\]

To obtain the stronger result, suppose, for contradiction, that there exists an MDS code $C(E, \mathcal{D}, G)$ on $E$ with $3 \le k \le N/10$ and length
\[
|\mathcal{D}| = \frac{N - 1}{2}.
\]
We check that these parameters fulfill the hypotheses of Theorem~\ref{thm:E(Fq)_is_odd}. Indeed,
\[
|\mathcal{D}| = \frac{N - 1}{2} \ge \frac{2}{5}N + 1
\]
since $N \ge 165$, and the chain of inequalities
\[
k \le \frac{N}{10} \le |\mathcal{D}| - \frac{2}{5}N + 2
\]
also holds under the same condition on $N$. Theorem~\ref{thm:E(Fq)_is_odd} now implies that $C(E, \mathcal{D}, G)$ cannot be MDS, a contradiction. Consequently,
\[
\operatorname{MEC}(k, q, E) \le \frac{N - 1}{2} - 1,
\]
which completes the proof.
\end{proof}

\subsection{A Tight Upper Bound When $\operatorname{Supp}(G)$ Consists of $\mathbb{F}_q$-Rational Points}

As in the preceding discussion, in this subsection we restrict to the case where $\operatorname{Supp}(G)$ consists of $\mathbb{F}_q$-rational points and determine the maximal length attainable under this restriction. The following result provides the answer.
\begin{theorem}\label{thm:G'and_oddq+1+}
Let $q \ge 289$ and assume that the support of the divisor $G$ consists entirely of $\mathbb{F}_q$-rational points and that $q+1+ \lfloor 2\sqrt{q} \rfloor$ is odd. Then for every even integer $k$ with $3 \le k \le (q + 1 - 2\sqrt{q})/10$,
\begin{equation}\label{eq:MEC:q+1+isodd}
    \operatorname{MEC}^*(k,q) \le \frac{q+ \lfloor 2\sqrt{q} \rfloor}{2} - 1 .
\end{equation}
Moreover, the bound~\eqref{eq:MEC:q+1+isodd} is tight.
\end{theorem}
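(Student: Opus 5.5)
The plan splits the upper bound according to the parity of $N=|E(\mathbb{F}_q)|$ and reduces both cases to results already proved. Tightness then comes from the coset construction in Theorem~\ref{thm:G'and_even_k}, and the main risk lies there. Throughout, write $M=q+1+\lfloor 2\sqrt{q}\rfloor$, which is odd by hypothesis, so the target bound reads $\frac{M-1}{2}-1$.

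\textbf{Upper bound.} Fix an arbitrary elliptic curve $E/\mathbb{F}_q$. By Lemma~\ref{lem:the_bound_of_rational_number}, $q+1-2\sqrt{q}\le N\le M$. Since $q\ge 289$, we get $N\ge 256$, hence $N\ge 165$. By Lemma~\ref{lem:bound_on_torsion}, also $N\ge 30|E(\mathbb{F}_q)[2]|+135$. The hypothesis on $k$ gives $3\le k\le N/10$.
\begin{itemize}
\item If $N$ is odd, Theorem~\ref{thm:main_theorem_2} gives $\operatorname{MEC}(k,q,E)\le \frac{N-1}{2}-1\le \frac{M-1}{2}-1$. This case needs neither the restriction on $G$ nor the parity of $k$.
\item If $N$ is even, then $N\neq M$ because $M$ is odd, so $N\le M-1$. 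Since $k$ is even and $\operatorname{Supp}(G)\subseteq E(\mathbb{F}_q)$, Corollary~\ref{cor:evenE(Fq)} gives $\operatorname{MEC}^*(k,q,E)\le \frac N2-1\le \frac{M-1}{2}-1$.
\end{itemize}
Taking the maximum over all $E$ yields \eqref{eq:MEC:q+1+isodd}.

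\textbf{Tightness.} I aim for a curve with $N=M-1=q+\lfloor 2\sqrt q\rfloor$, which is even. On it I repeat the construction from the proof of Theorem~\ref{thm:G'and_even_k}. Take an index-$2$ subgroup $H$ (Lemma~\ref{lem:the_existence_of_index_two_subgroup}) and $u\notin H$. Set $\mathcal{D}=(u+H)\setminus\{u\}$ and $G=(k+1)[\mathcal{O}]-[u]$, which is supported on rational points and has degree $k$. The MDS property follows from Lemma~\ref{lem:MDS_condition} by the same parity argument: a sum of $k+1$ elements of $u+H$ is never $\mathcal{O}$, since $k+1$ is odd. The resulting length is $\frac N2-1=\frac{M-1}{2}-1$.

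\textbf{Main obstacle.} The hard step is the existence of a curve with trace $t=1-\lfloor 2\sqrt q\rfloor$. By Lemma~\ref{lem:the_class_of_elliptic_curve}(i), it suffices that $p\nmid \lfloor 2\sqrt q\rfloor-1$.
\begin{itemize}
\item For $p=2$: $M$ odd forces $\lfloor 2\sqrt q\rfloor$ even, so $\lfloor 2\sqrt q\rfloor-1$ is odd and the condition holds. This covers the binary case of interest.
\item For odd $p$: $M$ odd forces $\lfloor 2\sqrt q\rfloor$ odd, and $p$ could divide the even number $\lfloor 2\sqrt q\rfloor-1$.
\end{itemize}
In that exceptional case I would first check the supersingular cases (ii)--(v) of Lemma~\ref{lem:the_class_of_elliptic_curve}. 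If none applies, I would instead use the curve with $N=M$, which is admissible because $p\nmid\lfloor 2\sqrt q\rfloor$ (consecutive integers). On it I would seek a direct construction of length $\frac{N-1}{2}-1$. The idea is to choose $\mathcal{D}$ inside a suitably structured subset, for instance an arithmetic-progression-like set or a symmetric set avoiding the sum $\Point(G)$, so that $\Point(G)\notin\sumset_k(\mathcal{D})$. I expect this construction to be the genuinely delicate part of the argument.
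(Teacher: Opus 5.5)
Your upper bound is correct and is essentially the paper's argument: the paper only needs the two curves with $N\in\{M-1,M\}$, while you bound every curve according to the parity of $N$, and the two are equivalent. Your tightness construction is also the paper's. You are right that the delicate point is the existence of a curve with $q+\lfloor 2\sqrt q\rfloor$ points. Your verification of this for $p=2$ is correct, whereas the paper simply cites Lemma~\ref{lem:the_class_of_elliptic_curve} without checking $\gcd(t,p)=1$.

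The gap is the exceptional odd-$p$ case. Your proposed repair cannot work, because the statement itself fails there.
\begin{itemize}
\item \textbf{Supersingular classes never help.} For odd $p$, $M$ odd forces $q=p^n$ with $n$ odd. The only supersingular traces are then $0$ and, for $p=3$, $\pm 3^{(n+1)/2}=\pm\sqrt{3q}$. Neither equals $\pm(\lfloor 2\sqrt q\rfloor-1)$ once $q\ge 289$.
\item \textbf{The fallback curve with $N=M$ is ruled out.} Apply Theorem~\ref{thm:E(Fq)_is_odd} with $n=\frac{N-3}{2}$. Then $n\ge \frac{2}{5}N+1$ and $n-\frac{2}{5}N+2=\frac{N}{10}+\frac{1}{2}>k$. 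So no code of length $\frac{N-1}{2}-1$ on that curve is MDS, for any choice of $\mathcal{D}$ and $G$. More generally, on any odd-order curve MDS lengths stay below $\frac{2}{5}N+k-2\le\frac{M}{2}-2$, so Theorem~\ref{thm:main_theorem_2} is far from sharp.
\item \textbf{Other even-order curves fall short.} Every even-order curve other than $N=M-1$ has $N\le M-3$, and so yields at most $\frac{M-3}{2}-1$.
\end{itemize}

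Concretely, take $q=5^5=3125$. Then $\lfloor 2\sqrt q\rfloor=111$ and $M=3237$. Trace $\pm 110$ is excluded by Lemma~\ref{lem:the_class_of_elliptic_curve}, so the best curve has $N=3234$. This gives $\operatorname{MEC}^*(k,q)=1616$, not $1617$, for every even $k$ in the admissible range. The same happens for $q=3^{11}$, where $\lfloor 2\sqrt q\rfloor=841$ and $840$ is divisible by $3$.

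So the tightness assertion needs an extra hypothesis, for instance $p=2$, or more generally $p$ not dividing $\lfloor 2\sqrt q\rfloor-1$. Under such a hypothesis your proof, like the paper's, is complete. Without it, no construction of the required length exists.
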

\begin{proof}
By Lemma~\ref{lem:a_tight_upper_bound_before}, we immediately obtain the upper bound
\[
\operatorname{MEC}(k, q) \le \frac{q+ \lfloor 2\sqrt{q} \rfloor}{2}.
\]

We now show that equality cannot hold when the support of $G$ consists entirely of $\mathbb{F}_q$-rational points. Consider any code $C(E, \mathcal{D}, G)$ satisfying the hypotheses of the theorem and having length $|\mathcal{D}| = \frac{q+ \lfloor 2\sqrt{q} \rfloor}{2}$. We prove that such a code cannot be MDS.

It suffices to treat the following two cases, as the remaining possibilities are handled analogously:
% \begin{itemize}
%     \item $N = q + \lfloor 2\sqrt{q} \rfloor$,
%     \item $N = q + \lfloor 2\sqrt{q} \rfloor + 1$.
% \end{itemize}

\noindent\textbf{Case 1:} $N = q + \lfloor 2\sqrt{q} \rfloor$. Here $N$ is even, and Corollary~\ref{cor:evenE(Fq)} yields
\[
\operatorname{MEC}(k, q, E) \le \frac{q + \lfloor 2\sqrt{q}\rfloor}{2} - 1 .
\]

\noindent\textbf{Case 2:} $N = q + \lfloor 2\sqrt{q} \rfloor + 1$. In this case $N$ is odd, and Theorem~\ref{thm:main_theorem_2} directly implies that a code with $|\mathcal{D}| = \frac{q+ \lfloor 2\sqrt{q} \rfloor}{2}$ cannot be MDS. This establishes the first claim.

It remains to verify that the bound is tight. By Lemma~\ref{lem:the_class_of_elliptic_curve}, there exists an elliptic curve $E$ over $\mathbb{F}_q$ with $N = q + \lfloor 2\sqrt{q} \rfloor$; observe that $N$ is even. Let $H \le E(\mathbb{F}_q)$ be a subgroup of index $2$, whose existence is guaranteed by Lemma~\ref{lem:the_existence_of_index_two_subgroup}. Choose an element $u \in E(\mathbb{F}_q) \setminus H$ and define
\[
\mathcal{D} := (u + H) \setminus \{u\}, \qquad G := (k+1)[\mathcal{O}] - [u].
\]
By Lemma~\ref{lem:MDS_condition}, the code $C(E, \mathcal{D}, G)$ is MDS precisely when
$
\Point(G) \notin \sumset_k(\mathcal{D})$.

Assume, for contradiction, that this condition fails. Then there exist distinct points $p_1, \dots, p_k \in \mathcal{D}$ such that
\[
\Point(G) = \sum_{i=1}^k p_i,
\]
which is equivalent to
\[
\Point((k+1)[\mathcal{O}]) = \sum_{i=1}^k p_i + u.
\]
Consequently, $\mathcal{O}$ is contained in 
$
 \sumset_{k+1}(u + H)$.
Since $u + H$ is a coset of the index‑$2$ subgroup $H$ and $k$ is even (so $k+1$ is odd), the sum of an odd number of elements from such a coset can never be $\mathcal{O}$. This contradiction shows that $C(E, \mathcal{D}, G)$ is indeed MDS, thereby proving tightness.
This completes the proof.
\end{proof}

For $q = 2^a$ with even $a$, we have the following result:
\begin{corollary}\label{cor:q=2^a,evenkandG'}
Let $q = 2^a$ with $a$ even and $q \ge 289$, and assume that the support of the divisor $G$ consists entirely of $\mathbb{F}_q$-rational points. Then for every even integer $k$ with $3 \le k \le (q + 1 - 2\sqrt{q})/10$, it holds

\begin{equation}
    \operatorname{MEC}^*(k,2^a) =2^{a-1} + 2^{a/2} - 1 .
\end{equation}

\end{corollary}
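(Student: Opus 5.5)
This corollary is a direct specialization of Theorem~\ref{thm:G'and_oddq+1+}. My plan is to check that its hypotheses hold for $q=2^a$ with $a$ even, and then simplify the resulting formula.

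Write $q=2^a$ with $a$ even. Then $\sqrt{q}=2^{a/2}$ is an integer, so $\lfloor 2\sqrt{q}\rfloor = 2^{a/2+1}$. Since $a\ge 2$, this quantity is even, and $q=2^a$ is also even. Hence $q+1+\lfloor 2\sqrt{q}\rfloor = 2^a+1+2^{a/2+1}$ is odd. This is exactly the parity hypothesis of Theorem~\ref{thm:G'and_oddq+1+}.

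The remaining hypotheses carry over unchanged: $q\ge 289$, $\operatorname{Supp}(G)\subseteq E(\mathbb{F}_q)$, and $k$ even with $3\le k\le (q+1-2\sqrt{q})/10$. Theorem~\ref{thm:G'and_oddq+1+} therefore gives
\[
\operatorname{MEC}^*(k,2^a)\le \frac{q+\lfloor 2\sqrt{q}\rfloor}{2}-1,
\]
and it also states that this bound is attained.

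It remains to simplify the bound:
\[
\frac{2^a+2^{a/2+1}}{2}-1 = 2^{a-1}+2^{a/2}-1 .
\]
This is the claimed value.

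The only step needing care is attainment. Theorem~\ref{thm:G'and_oddq+1+} builds its extremal code on a curve with $N=q+\lfloor 2\sqrt{q}\rfloor = 2^a+2^{a/2+1}$. This $N$ is $q+1+t$ with trace $t = 2^{a/2+1}-1$, which is odd and hence coprime to $p=2$. So case (i) of Lemma~\ref{lem:the_class_of_elliptic_curve} guarantees such a curve exists. Moreover, this $N$ is even, so an index-$2$ subgroup exists by Lemma~\ref{lem:the_existence_of_index_two_subgroup}, and the coset construction goes through as stated.
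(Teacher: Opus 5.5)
Your proposal is correct and matches the paper's approach: the paper also obtains this corollary as a direct specialization of Theorem~\ref{thm:G'and_oddq+1+}, after noting that $q+1+\lfloor 2\sqrt{q}\rfloor = 2^a+1+2^{a/2+1}$ is odd and simplifying $\frac{q+\lfloor 2\sqrt{q}\rfloor}{2}-1 = 2^{a-1}+2^{a/2}-1$. Your extra check that the extremal curve exists is worthwhile: the trace $t=2^{a/2+1}-1$ is odd, so case (i) of Lemma~\ref{lem:the_class_of_elliptic_curve} applies, whereas the general theorem invokes that lemma without verifying $\gcd(t,p)=1$.
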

\subsection{A Tight Upper Bound When $q+1+\lfloor 2\sqrt{q}\rfloor$ is Odd}
In this subsection, We now turn to the case where no restriction is imposed on $G$.
\begin{theorem}\label{thm:q+1+odd}
Let $q \ge 289$ and $3 \le k \le \frac{q+1-2\sqrt{q}}{10}$. If $q+1+ \lfloor 2\sqrt{q} \rfloor$ is odd, then
\[
\operatorname{MEC}(k,q) = \frac{q+ \lfloor 2\sqrt{q} \rfloor}{2}.
\]
\end{theorem}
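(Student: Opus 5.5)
The proof has two halves: an upper bound obtained by splitting on the parity of $N=|E(\mathbb{F}_q)|$, and a matching construction on a curve with $N=q+\lfloor 2\sqrt{q}\rfloor$, which is even.

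\textbf{Upper bound.} Let $E$ be an arbitrary elliptic curve over $\mathbb{F}_q$ and put $N=|E(\mathbb{F}_q)|$. By Lemma~\ref{lem:the_bound_of_rational_number} and integrality, $N\le q+1+\lfloor 2\sqrt{q}\rfloor$. As in the proof of Theorem~\ref{thm:G'and_even_k}, the hypotheses $q\ge 289$ and $k\le (q+1-2\sqrt{q})/10$ give $k\le N/10$, $N\ge 256\ge 165$ and $N\ge 30|E(\mathbb{F}_q)[2]|+135$. So Lemma~\ref{lem:a_tight_upper_bound_before}(i) and Theorem~\ref{thm:main_theorem_2} both apply.
\begin{itemize}
\item If $N$ is even, then $N\le q+\lfloor 2\sqrt{q}\rfloor$, because $q+1+\lfloor 2\sqrt{q}\rfloor$ is odd. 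Lemma~\ref{lem:a_tight_upper_bound_before}(i) then gives $\operatorname{MEC}(k,q,E)\le N/2\le \frac{q+\lfloor 2\sqrt{q}\rfloor}{2}$.
\item If $N$ is odd, then $N\le q+1+\lfloor 2\sqrt{q}\rfloor$. Theorem~\ref{thm:main_theorem_2} gives $\operatorname{MEC}(k,q,E)\le \frac{N-1}{2}-1\le \frac{q+\lfloor 2\sqrt{q}\rfloor}{2}-1$.
\end{itemize}
Taking the maximum over all curves $E$ yields $\operatorname{MEC}(k,q)\le \frac{q+\lfloor 2\sqrt{q}\rfloor}{2}$. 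The bound makes sense because $q+\lfloor 2\sqrt{q}\rfloor$ is even under the hypothesis.

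\textbf{Tightness.} I would use Lemma~\ref{lem:the_class_of_elliptic_curve} to choose a curve $E$ with $N=q+\lfloor 2\sqrt{q}\rfloor$, so that $N$ is even. This requires $t=\lfloor 2\sqrt{q}\rfloor-1$ to be an admissible trace. I expect this to be the main obstacle, since the lemma is stated for traces $t$ coprime to $p$ or for special supersingular values.
\begin{itemize}
\item If $p\nmid t$, case (i) of Lemma~\ref{lem:the_class_of_elliptic_curve} applies directly.
\item If $p\mid t$, note that $t$ is odd, so $p$ is odd. Writing $\lfloor 2\sqrt{q}\rfloor=t+1$, parity forces $q+1+\lfloor 2\sqrt{q}\rfloor=q+t+2$ to be even, which contradicts the hypothesis. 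Hence $p\nmid t$ always.
\end{itemize}
In characteristic $2$, $t$ is odd and therefore automatically coprime to $p$. So the curve with $N=q+\lfloor 2\sqrt{q}\rfloor$ always exists; this is the step I would check most carefully.

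With $N$ even, I would fix an index-$2$ subgroup $H$ using Lemma~\ref{lem:the_existence_of_index_two_subgroup}, a point $u\notin H$, and a degree-$3$ place $R$ with $\Point([R])=\mathcal{O}$ from Lemma~\ref{lem:existenceofdegree3place}. Applying Theorem~\ref{thm:construction_of_degree_greater_1} with $l=3$ gives an MDS code of length $N/2$ for every $k\ge 3$.
\begin{itemize}
\item For odd $k$, take $\mathcal{D}=u+H$ and $G=[R]+(k-3)[\mathcal{O}]$.
\item For even $k$, take $\mathcal{D}=H$ and $G=[R]+(k-3)[u]$.
\end{itemize}
Each code is MDS by the parity-of-coset argument together with Lemma~\ref{lem:MDS_condition}. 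Its length is $N/2=\frac{q+\lfloor 2\sqrt{q}\rfloor}{2}$, matching the upper bound and giving equality.
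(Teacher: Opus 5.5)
Your upper bound is correct, and your two constructions are the same as the paper's. The paper gets the upper bound more cheaply: since $\frac{q+1+\lfloor 2\sqrt q\rfloor}{2}$ is a half-integer, integrality of the length suffices. Your sharper use of Theorem~\ref{thm:main_theorem_2} for odd $N$ becomes relevant below.

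The gap is exactly the step you flagged. The claim ``$t$ is odd'' holds only for even $q$. The hypothesis says $q+1+\lfloor 2\sqrt q\rfloor=q+t+2$ is odd, i.e.\ $t\equiv q+1 \pmod 2$. So for odd $q$ the trace $t=\lfloor 2\sqrt q\rfloor-1$ is \emph{even}. Your parity computation really proves only ``$p$ odd $\Rightarrow$ $t$ even'', which does not rule out $p\mid t$.

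Moreover, $p\mid t$ genuinely occurs with $t$ outside Waterhouse's supersingular list. Take $q=5^5=3125$:
\begin{itemize}
\item $\lfloor 2\sqrt q\rfloor=111$, so the hypothesis holds, and $t=110$ is divisible by $5$.
\item Since $n=5$ is odd and $p\notin\{2,3\}$, Lemma~\ref{lem:the_class_of_elliptic_curve} allows only $t=0$ among traces divisible by $p$.
\item Hence no curve with $N=q+\lfloor 2\sqrt q\rfloor=3236$ exists.
\end{itemize}
In fact the statement itself fails for this $q$. The largest admissible even $N$ is $q+1+108=3234$, giving $N/2=1617$. For odd $N\le 3237$, your own use of Theorem~\ref{thm:main_theorem_2} gives at most $\frac{3237-1}{2}-1=1617$. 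Hence $\operatorname{MEC}(k,5^5)\le 1617<1618$ for all $3\le k\le 301$. The same happens for $q=3^{11}$: there $\lfloor 2\sqrt q\rfloor=841$ and $t=840$, which is neither $0$ nor $\pm 3^6$.

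The paper's proof has the identical hole: it cites Lemma~\ref{lem:the_class_of_elliptic_curve} without checking that $\lfloor 2\sqrt q\rfloor-1$ is an admissible trace. So no argument can close it as the theorem is stated.

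What your approach does establish is the theorem under the extra condition $p\nmid \lfloor 2\sqrt q\rfloor-1$. This holds automatically in two cases:
\begin{itemize}
\item when $p=2$, because then $t$ is odd;
\item when $q=p$ is prime, because then $0<t<p$.
\end{itemize}
Odd squares $q$ never satisfy the parity hypothesis. So the only problematic fields are $q=p^n$ with $p$ odd and $n\ge 3$ odd.
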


\begin{proof}
Let $E$ be an arbitrary elliptic curve defined over $\mathbb{F}_q$. By Lemma~\ref{lem:a_tight_upper_bound_before}, we have
\[
\operatorname{MEC}(k,q,E) \le \frac{N}{2} \le \frac{q+1+ \lfloor 2\sqrt{q} \rfloor}{2},
\]
and consequently
\[
\operatorname{MEC}(k,q) \le \frac{q+ \lfloor 2\sqrt{q} \rfloor}{2}.
\]

By Lemma~\ref{lem:the_class_of_elliptic_curve}, there exists an elliptic curve $E$ with $N = q + \lfloor 2\sqrt{q} \rfloor$.  Let $R$ be a place of degree three, whose existence is guaranteed by Lemma~\ref{lem:existenceofdegree3place}.  The construction of MDS codes of length $\frac{q+ \lfloor 2\sqrt{q} \rfloor}{2}$ and dimension $k$ proceeds exactly as in the proof of Theorem~\ref{thm:construction_of_degree_greater_1}: for odd $k$ we set
\[
\mathcal{D} = u + H, \qquad G = [R] + (k-3)[\mathcal{O}],
\]
(where $H$ is a subgroup of $E(\mathbb{F}_q)$ of index $2$ and $u \notin H$), while for even $k$ we set
\[
\mathcal{D} = H, \qquad G = [R] + (k-3)[u].
\]
  In either case we obtain the desired MDS code, showing that the bound is tight.  This completes the proof.
\end{proof}

As a direct corollary of Theorem~\ref{thm:q+1+odd}, we now state the bound for $q = 2^a$, where $a$ is a positive integer.

\begin{corollary}\label{cor:boundforcharq=2}
Let $q = 2^a$ for some positive integer $a$ and suppose that $q+1+\lfloor 2\sqrt{q}\rfloor$ is odd. If $q \ge 289$ and $3 \le k \le \frac{q+1-2\sqrt{q}}{10}$, then
\[
\operatorname{MEC}(k, q) = \frac{2^a + \lfloor 2\sqrt{2^a} \rfloor}{2}.
\]
In particular, when $q$ is a square (i.e., $a$ is even), this simplifies to
\[
\operatorname{MEC}(k, q) = 2^{a-1} + 2^{a/2}.
\]
\end{corollary}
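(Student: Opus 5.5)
This is a direct specialization of Theorem~\ref{thm:q+1+odd}. The plan is to check that its hypotheses hold for $q=2^a$ and then simplify the resulting formula when $a$ is even.

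\textbf{Step 1: the general formula.} The hypotheses of Corollary~\ref{cor:boundforcharq=2} are exactly those of Theorem~\ref{thm:q+1+odd} with $q=2^a$:
\begin{itemize}
\item $q \ge 289$;
\item $3 \le k \le \frac{q+1-2\sqrt{q}}{10}$;
\item $q+1+\lfloor 2\sqrt{q}\rfloor$ is odd.
\end{itemize}
That theorem then gives $\operatorname{MEC}(k,2^a)=\frac{2^a+\lfloor 2\sqrt{2^a}\rfloor}{2}$ directly. Implicit in that theorem is the existence of a curve with $N=q+\lfloor 2\sqrt{q}\rfloor$, which comes from Lemma~\ref{lem:the_class_of_elliptic_curve}. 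I would not re-derive it; I would only note that the trace $t=\lfloor 2\sqrt q\rfloor-1$ must be admissible there, and that this is the input Theorem~\ref{thm:q+1+odd} already relies on.

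\textbf{Step 2: the case $a$ even.} Here $\sqrt{q}=2^{a/2}$ is an integer, so $\lfloor 2\sqrt{q}\rfloor = 2^{a/2+1}$. Then $q+1+\lfloor 2\sqrt q\rfloor = 2^a+1+2^{a/2+1}$ is odd, so the parity hypothesis holds automatically. The formula from Step 1 becomes
\[
\frac{2^a+2^{a/2+1}}{2}=2^{a-1}+2^{a/2}.
\]
The remaining hypotheses are as before; $q\ge 289$ with $a$ even means $q\ge 1024$.

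\textbf{Main obstacle.} The only point needing care is whether Lemma~\ref{lem:the_class_of_elliptic_curve} actually yields an isogeny class with $N=q+\lfloor 2\sqrt q\rfloor$, i.e.\ trace $t=\lfloor 2\sqrt q\rfloor -1$. When $a$ is even this is $t=2^{a/2+1}-1$, which is odd and hence coprime to $p=2$, so case (i) of the lemma applies. For odd $a$, $\lfloor 2\sqrt q\rfloor$ is even exactly when $q+1+\lfloor 2\sqrt q\rfloor$ is odd, so again $t$ is odd and case (i) applies.

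With the curve in hand, the upper bound and the construction (via Lemma~\ref{lem:existenceofdegree3place} and an index-$2$ subgroup) go through verbatim, because $N$ is even.
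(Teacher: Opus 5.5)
Your proposal is correct and follows the paper's route: the corollary is the specialization of Theorem~\ref{thm:q+1+odd} to $q=2^a$, and for even $a$ one uses $\lfloor 2\sqrt{q}\rfloor = 2^{a/2+1}$ to obtain $2^{a-1}+2^{a/2}$. You also check something the paper leaves implicit and which is worth keeping: the trace $t=\lfloor 2\sqrt{q}\rfloor-1$ is odd, hence coprime to $p=2$ and admissible by case (i) of Lemma~\ref{lem:the_class_of_elliptic_curve}, so in characteristic $2$ a curve with $N=q+\lfloor 2\sqrt{q}\rfloor$ exists automatically.
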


\begin{remark}\label{remark:2boundforcharq=2}
Let $q = 2^a$. For completeness, we record the case where $q+1+\lfloor 2\sqrt{q}\rfloor$ is even. If $q \ge 289$ and $3 \le k \le \frac{q+1-2\sqrt{q}}{10}$, then by Theorem~\ref{thm:q+1+...even} we have
\[
\operatorname{MEC}(k, q) = \frac{2^a + 1 + \lfloor 2\sqrt{2^a} \rfloor}{2}.
\]
\end{remark}

\subsection{Example}
In this subsection, we present an explicit MDS code over a field of characteristic two. 
To construct an explicit example of an MDS code, we require an elliptic curve over a field of characteristic two whose group of rational points is cyclic. 
The following lemma is needed for this construction.
\begin{lemma}\label{lem:GCD}
\cite[Lemma 2]{MR4568185} Let $s \ge 1$ and $p > 1$ be integers. Then
\[
\gcd\left(p^{r}+1, p^{s}-1\right)
= \begin{cases}
1 & \text{if } \frac{s}{\gcd(r, s)} \text{ is odd and } p \text{ is even}, \\[4pt]
2 & \text{if } \frac{s}{\gcd(r, s)} \text{ is odd and } p \text{ is odd}, \\[4pt]
p^{\gcd(r, s)}+1 & \text{if } \frac{s}{\gcd(r, s)} \text{ is even}.
\end{cases}
\]
\end{lemma}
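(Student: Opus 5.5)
Write $d=\gcd(r,s)$, $r=da$, $s=db$ with $\gcd(a,b)=1$. The basic tool will be the standard identity $\gcd(p^m-1,p^n-1)=p^{\gcd(m,n)}-1$, obtained by running the Euclidean algorithm on exponents. The plan is to relate $p^r+1$ to $p^{2r}-1=(p^r-1)(p^r+1)$ and to compare $\gcd(p^{2r}-1,p^s-1)=p^{\gcd(2r,s)}-1$ with $\gcd(p^r-1,p^s-1)=p^{d}-1$. Everything then hinges on computing $\gcd(2r,s)$: it equals $d$ when $b=s/d$ is odd, and $2d$ when $b$ is even (in that case $a$ is odd, so $\gcd(2a,b)=2$).

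\textbf{Case $b$ even.} We have $d\mid r$ and $r/d=a$ odd, so $p^d+1\mid p^r+1$. Also $2d\mid s$ gives $p^{2d}-1\mid p^s-1$, hence $p^d+1\mid p^s-1$. So $p^d+1$ divides the gcd. For the reverse bound, the gcd $g$ divides $\gcd(p^{2r}-1,p^s-1)=p^{2d}-1=(p^d-1)(p^d+1)$. It remains to exclude extra factors coming from $p^d-1$. Any common divisor of $g$ and $p^d-1$ divides $\gcd(p^r+1,p^r-1)$, since $p^d-1\mid p^r-1$, and this is at most $2$. I will handle the exponent of $2$ separately. When $p$ is even, everything is odd and $g=p^d+1$ follows at once. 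When $p$ is odd, $p^r+1$ with $r$ odd multiple of $d$ satisfies $p^r+1=(p^d+1)\cdot(\text{odd cofactor } \sum (-p^d)^i$ with $a$ terms, $a$ odd$)$, so $v_2(p^r+1)=v_2(p^d+1)$. Hence $g/(p^d+1)$ is odd and divides $(p^d-1)$, yet it must also divide $\gcd(p^r+1,p^r-1)\mid 2$; so it is $1$.

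\textbf{Case $b$ odd.} Here $g\mid p^{\gcd(2r,s)}-1=p^d-1$, and since $p^d-1\mid p^r-1$, we get $g\mid\gcd(p^r+1,p^r-1)$, which divides $2$. If $p$ is even then $g$ is odd, so $g=1$. If $p$ is odd, both numbers are even, so $g=2$.

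The main obstacle is the $2$-adic bookkeeping in the case $b$ even with $p$ odd. Specifically, one must ensure the gcd is exactly $p^d+1$ and not $2(p^d+1)$. The valuation identity $v_2(p^r+1)=v_2(p^d+1)$ for odd $r/d$ is what settles this, so I will verify it carefully via the odd-length alternating sum. The statement should also implicitly assume $r\ge1$; I will note this.
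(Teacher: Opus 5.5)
Your proof is correct and complete. The paper gives no proof of this lemma at all; it is quoted from \cite[Lemma 2]{MR4568185}. What you offer is therefore a self-contained derivation in place of a citation, rather than a variant of an in-paper argument.

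Your argument has three steps:
\begin{itemize}
\item You show $p^d+1 \mid g=\gcd(p^r+1,p^s-1)$ when $s/d$ is even.
\item In general, you bound $g$ by $g \mid \gcd(p^{2r}-1,p^s-1)=p^{\gcd(2r,s)}-1$.
\item You eliminate any leftover factor coming from $p^d-1$, since such a factor divides $\gcd(p^r+1,p^r-1)\mid 2$.
\end{itemize}
This reduces everything to the identity $\gcd(p^m-1,p^n-1)=p^{\gcd(m,n)}-1$ plus one $2$-adic check. You correctly identify that check as the only delicate point. For odd $a=r/d$, the cofactor $\sum_{i=0}^{a-1}(-p^d)^i$ is a sum of an odd number of odd terms, so $v_2(p^r+1)=v_2(p^d+1)$, and $g=2(p^d+1)$ is impossible. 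The citation buys brevity; your version buys transparency and works for every integer $p>1$, as the statement requires.

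One refinement: you do not need to assume $r\ge 1$. For $r=0$ one has $d=s$, so you are in the case $s/d$ odd. Your argument then goes through unchanged, using the conventions $\gcd(0,n)=n$ and $n \mid 0$. The only care needed is to justify that $g$ is odd for even $p$ via $g\mid p^s-1$, not via $p^r+1$, which equals $2$ here. This matters because the paper applies the lemma in Corollary~\ref{cor:point:cyclic group} with $r=m-1$, which is $0$ when $m=1$.
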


 From Lemma~\ref{lem:GCD}, we deduce the existence of an elliptic curve over $\mathbb{F}_q$ whose group of rational points is cyclic.

\begin{corollary}\label{cor:point:cyclic group}
Let $q = 2^{2m}$ with $m \ge 1$, and let $E$ be an elliptic curve over $\mathbb{F}_q$ such that $N = q + 2\sqrt{q}$. If $m$ is odd, or if $m$ is even and $m \equiv 0$ or $2 \pmod{6}$, then the group of rational points $E(\mathbb{F}_q)$ is cyclic.
\end{corollary}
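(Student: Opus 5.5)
The plan is to read the group structure off Lemma~\ref{lem:the_structure_of_group}. This requires three things: identifying which case of Lemma~\ref{lem:the_class_of_elliptic_curve} the curve falls into, factoring $N$, and showing that every exponent $a_l$ is forced to be $0$.

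\textbf{Step 1: the isogeny class.} Write $\sqrt{q}=2^m$. Then
\[
N = 2^{2m}+2^{m+1} = 2^{m+1}\bigl(2^{m-1}+1\bigr),
\]
and the trace is $t=q+1-N=1-2^{m+1}$. This is odd, so $\gcd(t,2)=1$ and we are in case (i) of Lemma~\ref{lem:the_class_of_elliptic_curve}. Hence case (b) of Lemma~\ref{lem:the_structure_of_group} applies, and
\[
0\le a_l\le \min\{v_l(q-1),\lfloor h_l/2\rfloor\}.
\]
The $2$-primary part $\mathbb{Z}/2^{h_2}\mathbb{Z}$ is automatically cyclic. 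For an odd prime $l\mid N$ we have $l\mid 2^{m-1}+1$, so it suffices to show that $a_l=0$ for each such $l$.

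\textbf{Step 2: apply Lemma~\ref{lem:GCD}.} Use $p=2$, $r=m-1$, $s=2m$. Since $\gcd(m-1,m)=1$, we get $\gcd(m-1,2m)=\gcd(m-1,2)$.

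\emph{If $m$ is odd,} this gcd is $2$, so $s/\gcd(r,s)=m$ is odd. Lemma~\ref{lem:GCD} then gives $\gcd(2^{m-1}+1,q-1)=1$. So $v_l(q-1)=0$ for every odd $l\mid N$, hence every $a_l=0$ and the group is cyclic.

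\emph{If $m$ is even,} the gcd is $1$ and $s$ is even, so $\gcd(2^{m-1}+1,q-1)=2+1=3$. Thus the only possible obstruction is $l=3$; every other odd $l$ has $v_l(q-1)=0$.

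\textbf{Step 3: the prime $3$ when $m$ is even (the main obstacle).} Since $m-1$ is odd, $2^{m-1}\equiv 2\pmod 3$, so $3\mid N$ always. We therefore cannot use $v_3(q-1)=0$. Instead we use the bound $a_3\le\lfloor h_3/2\rfloor$ and aim to show $h_3=1$.

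By the lifting-the-exponent lemma with $m-1$ odd,
\[
v_3\bigl(2^{m-1}+1\bigr)=v_3(2+1)+v_3(m-1)=1+v_3(m-1).
\]
This gives $h_3=1$ exactly when $3\nmid m-1$. For even $m$, that is the condition $m\equiv 0$ or $2\pmod 6$. In that case $a_3\le\lfloor 1/2\rfloor=0$, all $a_l$ vanish, and $E(\mathbb{F}_q)\cong\mathbb{Z}/N\mathbb{Z}$ is cyclic.

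The remaining routine check is the LTE application itself. It can be verified directly by writing $2^{m-1}+1=(3-1)^{m-1}+1$ and expanding binomially modulo $9$ and higher powers of $3$.
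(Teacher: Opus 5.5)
Your proposal is correct and follows the paper's route. You show the trace is odd, so case (i) of Lemma~\ref{lem:the_class_of_elliptic_curve} applies; you use the bound $a_l \le \min\{v_l(q-1), \lfloor h_l/2\rfloor\}$ from Lemma~\ref{lem:the_structure_of_group}(b); you get $\gcd(2^{m-1}+1, q-1) \in \{1,3\}$ from Lemma~\ref{lem:GCD}; and for even $m$ you show $h_3 = 1$. The only difference is minor: you get $v_3(2^{m-1}+1) = 1 + v_3(m-1)$ by lifting the exponent, which settles $m \equiv 0$ and $m \equiv 2 \pmod 6$ at once and shows why $m \equiv 4 \pmod 6$ is excluded, whereas the paper computes $2^{m-1}+1 \bmod 9$ for $m \equiv 0 \pmod 6$ and calls the other case similar.
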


\begin{proof}
Suppose that $ N $ admits the prime factorization 
\[ N = \prod_{ l } l^{h_l} ,  \]
where each index $ l $ is a prime.
By Lemma~\ref{lem:the_structure_of_group}, every possible group structure for $E(\mathbb{F}_q)$ of this order is of the form
\[
\mathbb{Z} / p^{h_p} \mathbb{Z} \times \prod_{l \neq p} \left( \mathbb{Z} / l^{a_l} \mathbb{Z} \times \mathbb{Z} / l^{h_l - a_l} \mathbb{Z} \right).
\]

Set $t = N - q - 1$. From the given hypotheses we have
\[
\gcd(t, p) = \gcd(2^{m+1} - 1, 2) = 1.
\]

Applying Lemma~\ref{lem:the_structure_of_group}, we get
\begin{align*}
\gcd(N, q-1)
&= \gcd(2^{2m} + 2^{m+1}, 2^{2m} - 1) \\
&= \gcd(2^{m-1} + 1, 2^{2m} - 1) .
\end{align*}
It follows from Lemma~\ref{lem:GCD} that
\[\gcd(N, q-1) = \begin{cases}
1, & \text{if } \dfrac{2m}{\gcd(m-1, 2m)} \text{ is odd}, \\[6pt]
2^{\gcd(m-1, 2m)} + 1, & \text{if } \dfrac{2m}{\gcd(m-1, 2m)} \text{ is even}.
\end{cases}
\]

The parity of $\frac{2m}{\gcd(m-1, 2m)}$ is determined as follows:
\begin{enumerate}
    \item If $m$ is odd, then $2 \mid \gcd(m-1, 2m)$, and consequently $\frac{2m}{\gcd(m-1, 2m)}$ is odd.
    \item If $m$ is even, then $\gcd(m-1, 2m) = \gcd(m-1, m) = 1$, so that $\frac{2m}{\gcd(m-1, 2m)}$ is even.
\end{enumerate}
Therefore,
\begin{align*}
\gcd(N, q-1)
% &=
% \begin{cases}
% 1, & \text{if } m \text{ is odd}, \\
% 2^{\gcd(m-1, 2m)} + 1, & \text{if } m \text{ is even}
% \end{cases} \\
&=
\begin{cases}
1, & \text{if } m \text{ is odd}, \\
3, & \text{if } m \text{ is even}.
\end{cases}
\end{align*}

Now let $l$ be a prime divisor of $N$ with $l \neq 2$, and denote by $v_l(n)$ the $l$-adic valuation of an integer $n$.

\begin{itemize}
    \item[(i)] If $m$ is odd, then $\gcd(N, q-1) = 1$, whence $v_l(q-1) = 0$. By Lemma~\ref{lem:the_structure_of_group} (b) we obtain
    \[
    0 \le a_l \le \min\{0, \lfloor h_l/2 \rfloor\},
    \]
    which forces $a_l = 0$. The group structure thus reduces to
    \[
    \mathbb{Z} / p^{h_p} \mathbb{Z} \times \prod_{l \neq p} \bigl( \mathbb{Z} / l^{h_l} \mathbb{Z} \bigr),
    \]
    and is therefore cyclic.

    \item[(ii)] If $m \equiv 0 \pmod{6}$ or $m \equiv 2 \pmod{6}$, then $\gcd(N, q-1) = 3$. For any prime $l \neq 3$ we again have $v_l(q-1) = 0$, leading to $a_l = 0$ by the same argument.

    It remains to analyze the prime $l = 3$. Since $2$ and $3$ are coprime, we only need to examine the exponent of $3$ in $2^{m-1}+1$. From the previous calculations this exponent is at least $1$. If $m \equiv 0 \pmod{6}$, write $m = 6k$ with $k \ge 1$. Then
    \[
    2^{m-1}+1 \equiv 2^{6k-1}+1 \equiv 2^5+1 = 33 \equiv 6 \pmod{9},
    \]
    showing that the exponent of $3$ is exactly $1$; that is, $h_3 = 1$. By Lemma~\ref{lem:the_structure_of_group} (b) we have
    \[
    0 \le a_3 \le \min\{v_3(q-1), \lfloor 1/2 \rfloor\} = 0,
    \]
    hence $a_3 = 0$. The case $m \equiv 2 \pmod{6}$ is handled similarly. Consequently, the group is again cyclic.
\end{itemize}

This completes the proof.
\end{proof}

From the elliptic curve discussed above, one can construct MDS codes that attain the upper bound. The following example illustrates this construction.

\begin{example}
Choose an irreducible polynomial 
\[ f(x) = x^{10} + x^6 + x^5 + x^3 + x^2 + x + 1\]
over $\mathbb{F}_2$.
We have $ \mathbb{F}_{2^{10}} \cong \mathbb{F}_q (\eta) $, where $\eta$ is a root of $f(x)$. Define the elliptic curve $E$ over $\mathbb{F}_{2^{10}}$ by
\[
y^2 + xy = x^3 + (\eta^8 + \eta^6 + \eta^2).
\]
The group $E(\mathbb{F}_{2^{10}})$ has order $1088$. By Corollary \ref{cor:point:cyclic group}, the group structure is cyclic. A typical generator $P_0$ of $E(\mathbb{F}_{2^{10}})$ is given by
\[
(\eta^7 + \eta^6 + \eta^4 + \eta^2 ,\, \eta^4 + \eta^3).
\]
Let $H = \langle 2 P_0 \rangle$ and $\mathcal{D} = P_0 + H$. Taking $G = 3[\mathcal{O}]$, the resulting code $C(E, \mathcal{D}, G)$ is an $[544,3,542]$ MDS code, verified by SageMath.
\end{example}

\section{Conclusion}\label{sec:conclusion}
In this work, we have investigated the maximal length of non‑trivial MDS elliptic codes, with particular attention to the remaining open cases concerning even dimension $k$, non‑square $q$, and characteristic‑$2$ fields. Our main contributions are twofold.

First, we showed that under the conventional assumption that the divisor $G$ is supported entirely on $\mathbb{F}_q$-rational points, the conjectured bound $\frac{q+1}{2}+\sqrt{q}$ is not attainable for even $k$ when $q$ is an odd square. Specifically, for $q \ge 289$ and $3 \le k \le (q+1-2\sqrt{q})/10$, we proved that no MDS elliptic code of length $\frac{q+1}{2}+\sqrt{q}$ and even dimension $k$ can exist under this restriction. Instead, the exact maximal length in this setting is $\frac{q+1}{2}+\sqrt{q}-1$. To circumvent this limitation, we introduced a new construction that allows the support of $G$ to include places of degree strictly greater than $1$. By establishing the existence of a suitable place of degree $3$ on the underlying elliptic curve, we provided explicit MDS codes of length $\frac{q+1}{2}+\sqrt{q}$ for even $k$, thereby confirming that the bound is tight in the unrestricted setting.

Second, we extended the analysis to the case where $q+1+\lfloor 2\sqrt{q}\rfloor$ is odd, a scenario that naturally encompasses the case where $q$ is an even square. We derived the tight upper bound $\operatorname{MEC}(k,q) = \frac{q+\lfloor 2\sqrt{q}\rfloor}{2}$ for all admissible $k$ and, moreover, determined the precise maximal length when the support of $G$ is restricted to $\mathbb{F}_q$-rational points. As a notable consequence, we obtain an exact formula for the maximal length of nontrivial MDS elliptic codes over fields of characteristic $2$: for $q = 2^a \ge 289$ and $3 \le k \le (q+1-2\sqrt{q})/10$,
\[
\operatorname{MEC}(k,q) = 
\begin{cases}
\dfrac{2^a + \lfloor 2\sqrt{2^a}\rfloor}{2}, & \text{if } 2^a+1+\lfloor 2\sqrt{2^a}\rfloor \text{ is odd},\\[8pt]
\dfrac{2^a + 1 + \lfloor 2\sqrt{2^a}\rfloor}{2}, & \text{if } 2^a+1+\lfloor 2\sqrt{2^a}\rfloor \text{ is even}.
\end{cases}
\]
In the odd case, when $q$ is a square (i.e., $a$ even), this simplifies to $2^{a-1} + \sqrt{2^a}$.

The constructions presented in this paper rely on the existence of maximal elliptic curves and specific degree-$3$ places, which we have shown to be available under mild conditions on the field size and characteristic. Our results provide a complete resolution of the tightness question for the bound $\frac{q+1}{2}+\sqrt{q}$ in the even-dimensional case and offer a unified framework for determining $\operatorname{MEC}(k,q)$ across all relevant parity combinations of $q$ and $k$.

A natural direction for future work is to investigate whether similar improvements can be achieved for dimensions $k$ outside the range $3 \le k \le (q+1-2\sqrt{q})/10$. It would also be of interest to explore whether the techniques developed here--in particular, the use of higher-degree places in the support of $G$--can be adapted to other families of algebraic geometry codes, or to extend the analysis to curves of higher genus.

\section*{Acknowledgement}

This work is supported by the National Natural Science Foundation of China (No. 12441107),   Guangdong Basic and Applied Basic Research Foundation (No. \seqsplit{2025A1515011764}), and the National Key Research and Development Program of
China (No.\seqsplit{2025YFA1017100}).

\bibliographystyle{IEEEtran}
\bibliography{reference}

\end{document}